\newcommand{\subparagraph}{}
\renewcommand{\thesubsubsection}{\arabic{subsubsection}}
\titleformat{\subsubsection}[runin]{\itshape}{\thesubsubsection)}{1em}{}
\titlespacing*{\subsubsection}{\parindent}{0pt}{*1}
\titlespacing*{\section}{0pt}{0.1\baselineskip}{0.2\baselineskip}
\titlespacing*{\subsection}{0pt}{0.1\baselineskip}{0.2\baselineskip}
\newcommand*{\rom}[1]{\expandafter\@slowromancap\romannumeral #1@}
\DeclareMathOperator{\mino}{minimize}
\newtheorem{remark}{Remark}
\newtheorem{lemma}{Lemma}
\newtheorem{theorem}{Theorem}
\def\thm@space@setup{\thm@preskip=0pt
\thm@postskip=1pt}
\title{Globally Optimal Movable Antenna-Enabled Multiuser Communication: Discrete Antenna Positioning, Power Consumption, and Imperfect CSI }
\author{ Yifei Wu ~\IEEEmembership{Graduate Student Member,~IEEE,}, Dongfang Xu~\IEEEmembership{Member,~IEEE,}, Derrick Wing Kwan Ng~\IEEEmembership{Fellow,~IEEE,}, Wolfgang Gerstacker~\IEEEmembership{Senior Member,~IEEE,}, and Robert Schober~\IEEEmembership{Fellow,~IEEE,}



\thanks{

Yifei Wu, Robert Schober, and Wolfgang Gerstacker are with the Institute for Digital Communications, Friedrich-Alexander-University Erlangen-N\"urnberg (FAU), Germany (email:\{yifei.wu, robert.schober, wolfgang.gerstacker\}@fau.de).

Dongfang Xu is with the Department of Electronic and Computer Engineering, The Hong Kong University of Science and Technology, Hong Kong (e-mail: eedxu@ust.hk).

D. W. K. Ng is with the School of Electrical Engineering and Telecommunications, the University of New South Wales, Australia (email: w.k.ng@unsw.edu.au).
}}
\begin{document}
\maketitle
\begin{abstract}
    Movable antennas (MAs) represent a promising paradigm to enhance the spatial degrees of freedom of conventional multi-antenna systems by dynamically adapting the positions of antenna elements within a designated transmit area. In particular, by employing electro-mechanical MA drivers such as stepper motors, the positions of the MA elements can be discretely adjusted to shape a favorable spatial correlation for improving system performance.
    Although preliminary research has explored beamforming designs for MA-enabled systems, the intricacies of the power consumption and the precise positioning of MA elements are not well understood, yet. Moreover, the assumption of perfect channel state information (CSI) adopted in the current literature is generally impractical due to the significant pilot overhead and the extensive time required for acquiring close-to-perfect CSI. To address these challenges, in this paper, we model the motion of MA elements through discrete steps and quantify the associated power consumption as a function of these movements. 
    Furthermore, by leveraging the properties of the MA channel model, we introduce a novel CSI error model tailored for MA-enabled systems that facilitates robust resource allocation design. In particular, we jointly optimize the beamforming and the MA positions at the base station (BS) for minimization of the total BS power consumption, encompassing both radiated power and MA motion power, while guaranteeing a minimum required signal-to-interference-plus-noise ratio for each user. To this end, novel algorithms exploiting the branch and bound (BnB) method are developed to obtain the globally optimal solution for perfect and imperfect CSI, respectively. Moreover, to support practical real-time implementation, we propose low-complexity suboptimal algorithms with guaranteed convergence by leveraging successive convex approximation (SCA). Our numerical results validate the global optimality of the proposed BnB-based algorithms for both CSI scenarios. Furthermore, we unveil that both proposed SCA-based algorithms approach the optimal performance of the BnB-based algorithms within only a few iterations, thus highlighting their practical advantages. Additionally, we show that compared to the state-of-the-art approach, the proposed low-complexity SCA-based schemes achieve considerable performance gains, especially in high-load systems with a small number of antenna elements. \vspace{-2mm}
\end{abstract}
{\color{black}
\begin{IEEEkeywords}
Movable antennas, imperfect channel state information, branch and bound algorithm, beamforming design.
\end{IEEEkeywords}}
\section{Introduction}
Multiple-input multiple-output (MIMO) transmission has been widely recognized as a key technique to satisfy the skyrocketing data traffic demands of sixth-generation (6G) networks. Through the deployment of multiple antennas, MIMO can effectively utilize the spatial resources of wireless channels to deliver significant enhancements in performance, including higher data transmission rates\cite{mietzner2009multiple}, improved physical layer security\cite{tsai2014power}, and the realization of innovative paradigms such as integrated sensing and communication \cite{xu2022robust}. 
However, despite these benefits, conventional MIMO systems suffer from elevated hardware costs and computational complexity due to the requirement of equipping multiple parallel power-hungry radio frequency (RF) chains\cite{mietzner2009multiple}.
To address this challenge, antenna selection (AS) has been advocated as a realistic strategy for implementing practical MIMO systems. The crux of AS is to strategically choose a small subset of antennas with favorable channel characteristics from a larger pool of candidate antennas. By delicately choosing favorable antennas, AS maintains the diversity gain of MIMO systems while reducing the number of required RF chains. However, traditional MIMO systems, even those employing AS, encounter limitations due to the fixed positions of their antennas. This fixed arrangement restricts their ability to fully exploit spatial variations in the channel across the transmit area, potentially undermining system performance.

To fully capitalize on the spatial variations of wireless channels across a designated spatial transmit area, holographic MIMO (HMIMO) has emerged as a recent groundbreaking development \cite{huang2020holographic}. Specifically, HMIMO surfaces are composed of numerous miniature passive elements that are spaced at sub-wavelength intervals. These elements can be electronically controlled to manipulate the electromagnetic properties of transmitted or reflected waves, offering a novel approach for channel optimization and signal enhancement \cite{huang2020holographic}. Therefore, the utilization of zero-spacing continuous antenna elements in HMIMO allows for the exploitation of the full spatial degrees of freedom (DoFs) within a spatially continuous transmit area. Indeed, this unique configuration enables a more intricate control and manipulation of the electromagnetic field, enhancing the overall performance of the communication system. However, despite the high potential of HMIMO to revolutionize communications systems, significant challenges arise in its implementation due to the large required number of antenna elements. This challenge is further manifested in both channel estimation and data processing, limiting the practical viability of HMIMO\cite{huang2020holographic}. 

Motivated by the large number of spatial DoFs potentially offered by HMIMO surfaces, the novel MIMO concept of movable antennas (MAs) has emerged as a promising technology, bridging the gap between HMIMO and traditional MIMO \cite{zhu2022modeling}. In MA-enabled systems, each antenna element is connected to an RF chain through a flexible cable, allowing its physical position to be dynamically adapted within a specified spatial region by an electro-mechanical driver \cite{ma2022mimo} or using liquid metals \cite{wong2020fluid}. Thus, this unique capability endows MA-enabled systems with the flexibility to adjust antenna positions in real-time to establish favorable spatial antenna correlations thereby improving the performance of MIMO systems \cite{zhu2023movable,zhu2023movable_chanllenges}. Moreover, since MA systems necessitate only a small number of antenna elements and RF chains to exploit the available DoFs, the computational complexity required for signal processing is significantly reduced compared to HMIMO systems \cite{zhu2022modeling,ma2022mimo}. 

{\color{black}Similar to MA, hybrid beamforming (HBF) is also capable of enhancing the signal directionality and the spatial DoFs in wireless communication systems. 
 In particular, HBF utilizes a combination of analog beamforming in the RF domain and digital beamforming in the baseband, where the analog beamformer aims to establish a favorable channel correlation to boost MIMO system performance. Although both the MA and HBF concepts aim to exploit specific hardware for beam shaping using a limited number of RF chains, they employ different implementation strategies. MAs utilize electromechanical drivers to adjust the position of antenna elements physically, while HBF exclusively uses electronic phase shifters for beam steering. Furthermore, HBF generally necessitates a large number of antenna elements, high-resolution phase shifters, and energy-hungry signal processing components. In contrast, MAs circumvent the need for additional RF phase shifters and an excessive number of antenna elements, which could result in reduced complexity and hardware cost. }

Based on their hardware architecture, state-of-the-art MA systems can be categorized into two types: mechanical MA systems \cite{ma2022mimo,zhu2023movable,wu2023movable} and fluid antenna systems (FAS) \cite{wong2020fluid,wong2021fluid}. In particular, mechanical MA systems employ electro-mechanical devices such as stepper motors to move the antennas. In contrast, FAS feature antennas that can switch positions almost instantaneously in a small linear space, facilitated by the use of liquid metals. Various initial works demonstrated considerable potential for both types of MA systems by jointly optimizing MA positions and base station (BS) beamforming. For instance, the concept of fluid antenna multiple access (FAMA) was introduced in \cite{wong2021fluid} using a single fluid antenna (FA) for each mobile user. In FAMA, the positions of the FAs at different users are strategically selected to achieve favorable channel conditions with reduced interference, thereby resulting in massive capacity gains. However, due to the inherent physical properties of liquid materials, FAS can only support a single FA in motion along a one-dimensional line. Therefore, FAS cannot fully exploit variations in the wireless channel across the two-dimensional (2D) or three-dimensional (3D) spatial domains, thereby restricting their effectiveness in complex environmental conditions.
To tackle this limitation, several works have considered electro-mechanical systems as MA drivers to enable flexible movement in 2D or 3D transmit regions \cite{wu2023movable, ma2022mimo}. For instance, in \cite{ma2022mimo}, the authors investigated scenarios in which both the BS and multiple users were equipped with MAs. To jointly optimize the MA positions and beamforming, an alternating optimization (AO)-based suboptimal algorithm was introduced. Similarly, for the multiuser mechanical MA-enabled uplink communication system in \cite{zhu2023movable}, the authors proposed suboptimal joint designs based on zero-forcing (ZF) and minimum mean square error (MMSE) combining, respectively. In particular, the BS employs ZF or MMSE combining, while the MA positions are adjusted with a gradient descent (GD) method. However, both \cite{ma2022mimo} and \cite{zhu2023movable} relied on the optimistic assumption that the electro-mechanical MA driver can position the MAs freely within a given region. In contrast, prototype designs of MA-enabled systems, reported in, e.g., \cite{zhuravlev2015experimental} and \cite{basbug2017design}, reveal that the motion control of the employed electro-mechanical devices is limited to discrete adjustments with finite precision. Thus, the transmit area is spatially quantized \cite{basbug2017design}, providing a finite spatial resolution rather than the infinite resolution previously assumed in \cite{zhu2022modeling,ma2022mimo,zhu2023movable}.
{\color{black} Therefore, several works have considered practical discrete MA movement \cite{mei2024movable,wei2024joint,shao20256d}. However, these works do not account for the power consumption introduced by electro-mechanical MA drivers. In fact, to facilitate rapid position adjustment of MA elements, practical MA-enabled systems necessitate the application of high-speed electro-mechanical devices as MA drivers, leading to significant power consumption. Consequently, this movement-dependent power consumption, which depends on the MA position displacement, must be considered for resource allocation design.} 
In addition, existing MA-enabled system designs are based on suboptimal frameworks, e.g.,  the AO-based algorithm in \cite{ma2022mimo} and the GD-based method in \cite{zhu2023movable}. These suboptimal approaches generally compromise global optimality, and their performance heavily depends on the chosen initial point, which may lead to unsatisfactory system performance. This highlights the need for optimization methods that can ensure global optimality and convergence. 

Recently, in the conference version \cite{wu2023movable} of this paper, a novel algorithm based on the generalized Benders decomposition framework was developed for attaining the globally jointly optimum BS beamformer and MA positions that minimize the radiated power of multiuser MA-enabled systems. However, the promised performance in \cite{wu2023movable} relies on the acquisition of perfect CSI for all MA candidate positions. In fact, since the transmit area of MA-enabled systems is quantized to numerous discrete positions, it is challenging to accurately estimate the channels between all MA candidate positions and all users by conventional channel estimation methods \cite{xiao2024channel}. Several novel channel estimation schemes based on compressed sensing (CS) are proposed for MA-enabled systems in \cite{xiao2024channel,ma2023compressed}. However, the state-of-the-art estimation methods in \cite{xiao2024channel,ma2023compressed} require CSI feedback from users to BS, which introduces inevitable quantization error to instantaneous CSI. 
{\color{black} To avoid instantaneous CSI estimation, the authors in \cite{chen2023joint,ye2023fluid,hu2024two,hu2024movable} have investigated statistical beamforming for MA-enabled systems assuming ideal continuous MA movement. In particular, robust joint beamforming and MA movement based on the statistical CSI of the MA channels was studied in \cite{chen2023joint,ye2023fluid}. Moreover, the authors in \cite{hu2024two} advocated a novel two-time-scale framework that optimizes the antenna positions over a long time frame while designing the transmit and receive beamformers based on perfect instantaneous CSI for a short time frame. Additionally, the authors in \cite{hu2024movable} proposed a secure MA beamforming and positioning strategy based on the statistical CSI of eavesdroppers and the perfect instantaneous CSI of the users.}
However, for practical MA systems with imperfect instantaneous CSI and discrete MA movement, neither globally optimal nor efficient suboptimal robust resource allocation frameworks are available in the literature. Motivated by the above discussion, in this paper, we investigate for the first time the globally optimal joint design of BS beamforming and MA positioning for multiuser MA-enabled systems with discrete MA movement for both perfect and imperfect CSI. The main contributions of this paper can be summarized as follows:\\[-15pt]
\begin{itemize}[leftmargin=10pt]
\item In this work, we consider an MA-enabled multiuser MISO downlink system, where the movement of the antennas is enabled by electro-mechanical devices. In particular, for the first time, we model MA movement as a discrete motion and propose a corresponding power consumption model. Next, we study the resource allocation problem for the considered MA-enabled system. The objective of the proposed resource allocation design is minimizing the total power consumption at the BS, which accounts for the aggregated power comprising the radiated power and the power required for MA motion, while guaranteeing a minimum signal-to-interference-plus-noise ratio (SINR) for each user. 
\item We first investigate the resource allocation design for the ideal case of perfect CSI. The joint BS beamforming and MA position design is formulated as a mixed integer nonlinear programming (MINLP) problem. To tackle this challenging problem, we propose a series of mathematical transformations aimed at facilitating the design of an iterative algorithm based on the branch and bound (BnB) method, which is guaranteed to converge to the globally optimal solution. This yields a performance upper bound for resource allocation design in MA systems with perfect CSI and serves as a critical benchmark for evaluating the performance of any corresponding suboptimal algorithm, such as \cite{zhu2023movable, ma2022mimo}.  Moreover, to strike a balance between performance and complexity, we develop a novel and efficient suboptimal scheme by leveraging successive convex approximation (SCA).
\item We introduce a novel CSI error model based on the field response channel model in \cite{zhu2022modeling}. Then, building on the framework developed for perfect CSI, we extend our design to the more practical case of imperfect CSI. Specifically, we transform the resulting robust SINR constraints into a set of linear matrix inequality (LMI) constraints. Furthermore, we introduce a set of auxiliary variables to formulate an MINLP problem for robust resource allocation design. Then, we extend the BnB-based and SCA-based methods, originally developed for the case of perfect CSI, to obtain the global optimum and a local optimum for imperfect CSI, respectively. 
\item By analyzing the convergence behavior of the proposed BnB-based algorithms, we confirm their capability to find global optimal solutions for the considered problems for both perfect and imperfect CSI. Furthermore, both proposed SCA-based algorithms achieve local optimality, closely approaching the globally optimal upper bound established by the BnB-based algorithms. Notably, the SCA-based algorithms exhibit a faster convergence, although at the cost of a slightly higher total power consumption compared to the optimal BnB-based algorithms, offering a practical balance between system performance and complexity.\\[-15pt]
\end{itemize}
The remainder of this paper is organized as follows: In Section \rom{2}, we introduce the system model for the considered MA-enabled multiuser multiple-input single-output (MISO) system with a spatially discrete transmit area and formulate the corresponding resource allocation problem. In Sections \rom{3} and \rom{4}, the globally optimal and suboptimal solutions for the MA positions and the BS beamformer are developed for perfect and imperfect CSI, respectively. Section \rom{5} evaluates the performance of the proposed MA-enabled system designs via numerical simulations, and Section \rom{6} concludes this paper.

\textit{Notation:} 
Vectors and matrices are denoted by boldface lower case and boldface capital letters, respectively. $\mathbb{R}^{N\times M}$ and $\mathbb{C}^{N\times M}$ represent the spaces of $N\times M$ real-valued and complex-valued matrices, respectively. $\mathrm{Re}\{\cdot\}$ and $\mathrm{Im}\{\cdot\}$ denote the real and imaginary parts of a complex number, respectively. $|\cdot|$, $||\cdot||_2$ and $||\cdot||_F$ stand for the absolute value of a complex scalar, the $l_2$-norm of a vector and the Frobenius norm of a matrix, respectively. $(\cdot)^T$, $(\cdot)^*$, and $(\cdot)^H$ denote the transpose, conjugate, and conjugate transpose of their arguments, respectively. $\mathbf{I}_{N}$ denotes the identity matrix of dimension $N$. $\mathrm{Tr}(\cdot)$ and $\lambda_{\mathrm{max}}(\cdot)$ are the trace and the largest eigenvalue of the input argument, respectively. $\mathbf{0}_{L}$ and $\mathbf{1}_L$ represent the all-zeros and all-ones vector of length $L$, respectively. $\mathbf{A}\hspace{-0.5mm}\succeq\hspace{-0.5mm}\mathbf{0}$ indicates that $\mathbf{A}$ is a positive semidefinite matrix. $\mathrm{diag}(\mathbf{a})$ denotes a diagonal matrix whose main diagonal elements are given by the entries of vector $\mathbf{a}$. $\mathbb{E}[\cdot]$ refers to statistical expectation. ${N}\choose{K}$ is the binomial coefficient of monomial $x^K$ in the expansion of $(1 + x)^N$. 
\section{System Model}
\subsection{System Model and Channel Model}
\begin{figure}\vspace{-2mm}
    \centering
    \includegraphics[width=1.8in]{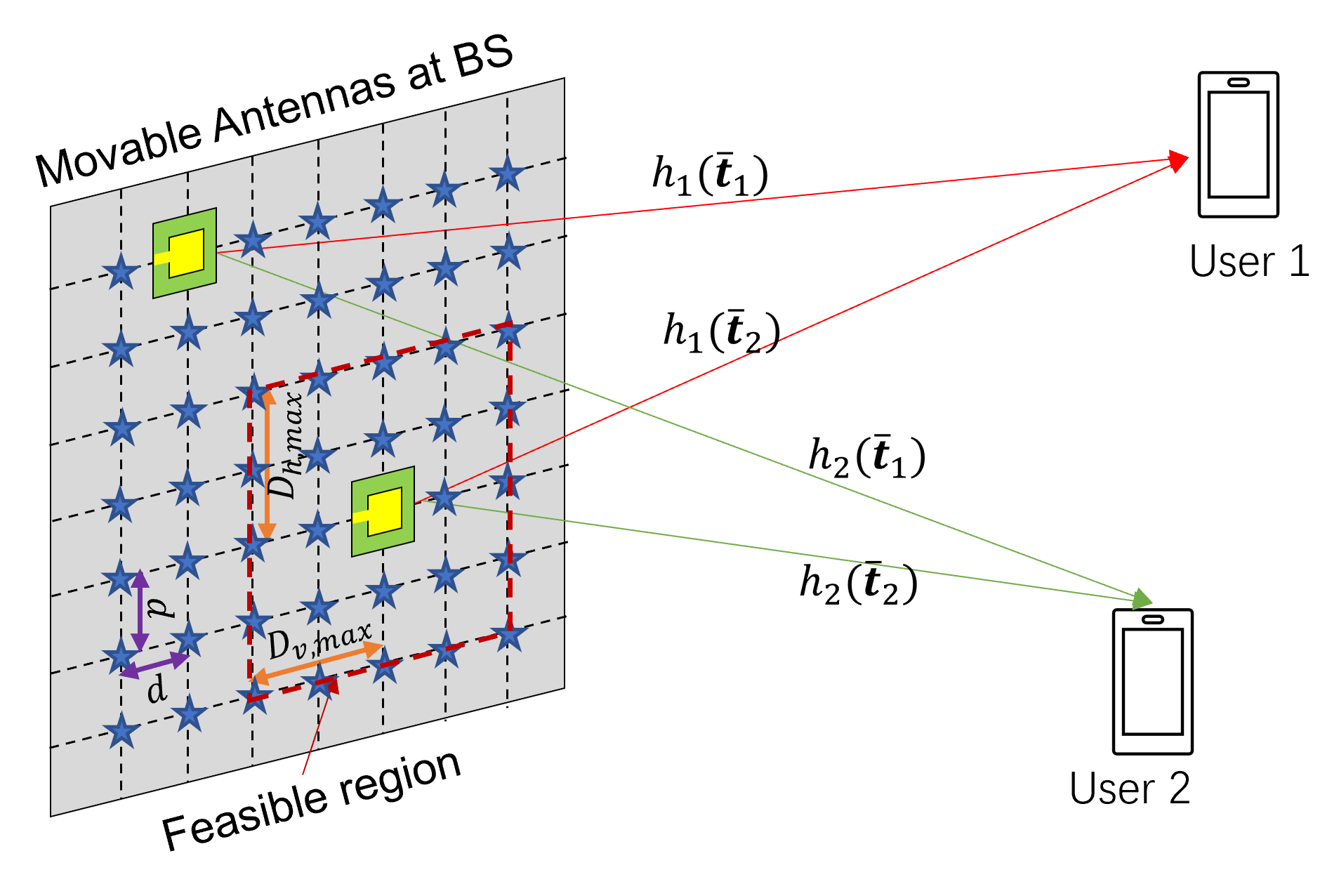}
    \caption{Transmission from $M=2$ MA elements with $N=49$ possible discrete positions to $K=2$ users (markers ``$\bigstar$" indicate discrete antenna positions).}\vspace{2mm}
    \label{fig:MA_system_model}
\end{figure}
We consider a multiuser wireless communication system where a BS, equipped with $M$ MA elements, serves $K$ single-antenna users in the downlink. 
The positions of the MA elements can be adjusted simultaneously within a given 2D transmit area. 
Since practical electro-mechanical devices can only provide horizontal and vertical movement by a fixed constant increment $d$ in each step \cite{zhuravlev2015experimental}, \cite{basbug2017design}, the transmit area of the MA-enabled communication system is quantized \cite{basbug2017design}\footnote{The value of step size $d$ depends on the precision of the employed actuators of the electro-mechanical devices and may vary in different systems \cite{zhuravlev2015experimental,basbug2017design}. \vspace{-1mm}}. We collect the $N$ possible discrete positions of the MAs in set $\mathcal{P}=\{\mathbf{p}_1,\cdots, \mathbf{p}_N\}$, where the distance between neighboring positions is equal to $d$ in horizontal or vertical directions, as shown in Fig. \ref{fig:MA_system_model}. Here, $\mathbf{p}_n=[x_n,y_n]^T$ represents the $n$-th candidate position with horizontal coordinate $x_n$ and vertical coordinate $y_n$. 
Then, the feasible set of the position of MA element $m$, $\bar{\mathbf{t}}_m$, is given by $\mathcal{P}$, i.e., $\bar{\mathbf{t}}_m=[\bar{x}_m,\bar{y}_m]^T\in\mathcal{P}$. 

\begin{figure}\vspace{-0.3cm}
    \centering
    \includegraphics[width=2.2in]{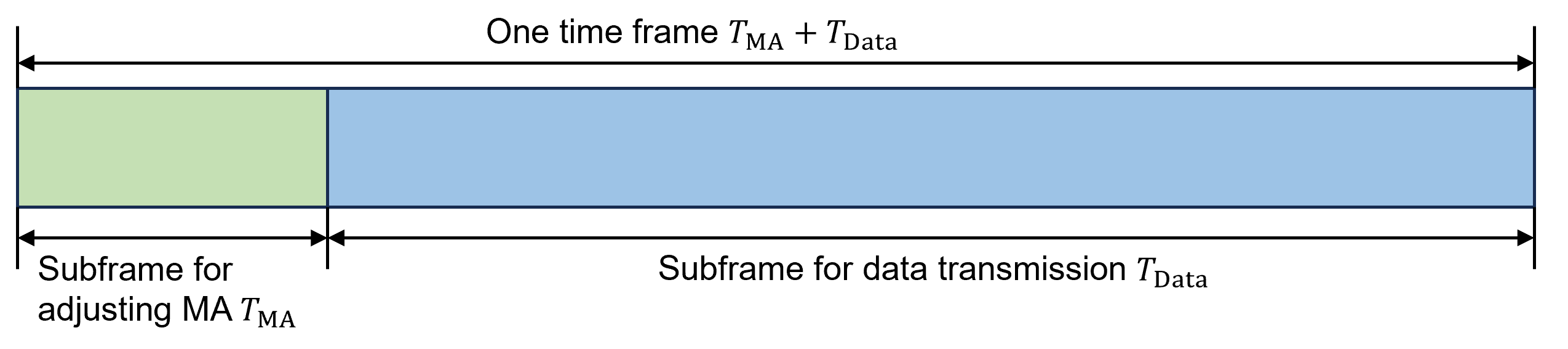}
    \caption{Time frame structure for the considered quasi-static fading scenario.}\vspace{2mm}
    \label{fig:frame}
\end{figure}
In this work, we consider a slow-fading channel and focus on one quasi-static fading block \cite{zhu2022modeling,zhu2023movable}. The slow-varying channel allows moving the MA elements to locations with favorable channel properties at the beginning of each coherence time frame. Hence, each coherence time frame is divided into two subframes, where in the first subframe, the MA positions are adjusted while the second subframe is used for data transmission. Here, $T_{\mathrm{MA}}$ and $T_{\mathrm{Data}}$ denote the time durations of the first and second subframes, respectively. Moreover, the physical channel conditions can be reconfigured by adapting the positions of the MA elements.
Then, we define $\widebar{\mathbf{T}}=[\bar{\mathbf{t}}_1,\cdots, \bar{\mathbf{t}}_M]$ as the collection of the positions of all MA elements. Then, the MIMO channel between the BS and the $K$ users $\mathbf{H}(\widebar{\mathbf{T}})=[\mathbf{h}_1(\widebar{\mathbf{T}}),\cdots, \mathbf{h}_K(\widebar{\mathbf{T}})]^H$ is a function of $\widebar{\mathbf{T}}$, where $\mathbf{h}_k^H(\widebar{\mathbf{T}})=[h_k(\bar{\mathbf{t}}_1),\cdots,h_k(\bar{\mathbf{t}}_M)]$ denotes the channel vector between the BS and user $k$. Here, $h_k(\bar{\mathbf{t}}_m)\in\mathbb{C}$ is the channel coefficient 
between MA element $m$ at position $\bar{\mathbf{t}}_m$ and user $k$. Thus, the received signal $y_k$ at user $k$ is given by
\begin{equation}
y_k=\mathbf{h}_k^H(\widebar{\mathbf{T}})\mathbf{w}_ks_k+\sum_{k'\neq k}\mathbf{h}_k^H(\widebar{\mathbf{T}})\mathbf{w}_{k'}s_{k'}+n_k,
\end{equation}
where $s_k\in\mathbb{C}$ denotes the symbol transmitted to user $k$, $\mathbb{E}[|s_k|^2]=1$, $\mathbb{E}[s_k^*s_{k'}]=0$, $k\neq k'$, $\forall k,k'\in\{1,\cdots,K\}$. Here, we define vector $\mathbf{s}=[s_1,\cdots,s_K]^T\in\mathbb{C}^{K\times 1}$ to collect the symbols transmitted to all users. Vector $\mathbf{w}_k\in \mathbb{C}^{M\times 1}$ is the beamforming vector of user $k$, and $n_k\in\mathbb{C}$ stands for the additive white Gaussian noise at user $k$ with zero mean and variance $\sigma_k^2$. To simplify the notation, we define sets $\mathcal{K}=\{1,\cdots, K\}$ and $\mathcal{M}=\{1,\cdots, M\}$ to collect the indices of the users and the MA elements, respectively.

In this work, we adopt the field response channel model proposed in \cite{zhu2022modeling} for the considered system. In particular, the channel vector $\mathbf{h}_k(\widebar{\mathbf{T}})$ between the BS and user $k$ is contingent upon both the propagation environment and the positions of the $M$ MA elements. Since the size of the transmit area for the MA elements is much smaller than the communication distance between the BS and the users, the far-field condition generally holds for the BS-user channels\cite{zhu2022modeling,zhu2023movable,ma2022mimo}. Thus, the field response from the transmit area to the users can be modeled as a plane wave. In particular, adjusting the positions of the MA elements does not impact the angle of departure (AoD), the angle of arrival (AoA), and the amplitude of the multi-path components (MPCs) between the BS and each user, while the phase of the MPCs depends on MA position

Without loss of generality, the channel between the BS and user $k$ is assumed to comprise $L_k$ MPCs. Let $\theta_{k,l_k}$ and $\phi_{k,l_k}$ denote the elevation and azimuth AoDs of the $l_k$-th channel path of user $k$, respectively. The transmit field-response vector (FRV) for the channel between user $k$ and an MA element at position $\mathbf{p}_n$ is given as $\mathbf{g}_k(\mathbf{p}_n)=[e^{j\rho_{k,1}(\mathbf{p}_n)},\cdots,e^{j\rho_{k,L_k}(\mathbf{p}_n)}]^T$\cite{zhu2022modeling},
where $\rho_{k,l_k}(\mathbf{p}_n)=\frac{2\pi}{\lambda}\left((x_n-x_1)\cos\theta_{k,l_k}\sin\phi_{k,l_k}+(y_n-y_1)\sin\theta_{k,l_k}\right)$ denotes the phase difference of the $l_k$-th channel path between $\mathbf{p}_n$ and the reference position, i.e., $\mathbf{p}_1$. Here, $\lambda$ denotes the carrier wavelength. Since the users are equipped with a single antenna, the receive FRV for user $k$ is given by $\mathbf{1}_{L_k}^T$.
Thus, the channel coefficient $h_{k}(\mathbf{p}_n)\in\mathbb{C}$ between an MA element at position $\mathbf{p}_n$ and user $k$ is modeled as follows
\begin{equation}
    h_{k}(\mathbf{p}_n)=\mathbf{1}_{L_k}^T\bm{\Psi}_k\mathbf{g}_k(\mathbf{p}_n)=\bm{\psi}_k^H\mathbf{g}_k(\mathbf{p}_n),
\end{equation}
where matrix $\bm{\Psi}_k=\mathrm{diag}\{[\psi_{1,k},\cdots,\psi_{L_k,k}]^T\},\psi_{l_k,k}\in\mathbb{C},\forall l_k\in\{1,\cdots, L_k\}$, contains the path coefficients of all $L_k$ channel paths from the transmit area to user $k$ and $\bm{\psi}_k^H=\mathbf{1}_{L_k}^T\bm{\Psi}_k$ denotes the corresponding path coefficient vector (PCV). Moreover, we define an effective channel vector, $\hat{\mathbf{h}}_k\in\mathbb{C}^{N}$, that collects the channel coefficients between an MA element and user $k$ for all possible $\mathbf{p}_n\in\mathcal{P}$, i.e.,
\begin{equation}\notag
\begin{aligned}
    \hat{\mathbf{h}}_k&=[h_{k}(\mathbf{p}_1), \cdots, h_{k}(\mathbf{p}_N)]^H=\mathbf{G}_k^H\bm{\psi}_k,
\end{aligned}
\end{equation}
where $\mathbf{G}_k=[\mathbf{g}_k(\mathbf{p}_1),\cdots, \mathbf{g}_k(\mathbf{p}_N)]$ denotes the field response matrix (FRM) of user $k$. 
\subsection{MA Model}
In this paper, we take into account the physical antenna size of the MA elements for MA location design. In particular, two MA elements cannot be placed arbitrarily close to each other to avoid collisions during movement. Thus, the center-to-center distance between any pair of MA elements must exceed a specified minimum distance $D_{\mathrm{min}}$ \cite{ma2022mimo}, i.e.,
\begin{equation}\label{Min_dis_con}
    \|\bar{\mathbf{t}}_m-\bar{\mathbf{t}}_{m'}\|_2\geq D_{\mathrm{min}}, m\neq m', \forall m, m'\in\mathcal{M},
\end{equation}
where $\bar{\mathbf{t}}_m\hspace{-0.5mm}=\hspace{-0.5mm}[\bar{x}_m,\bar{y}_m]^T$ is the position of MA element $m$.

Furthermore, due to the limitation of electro-mechanical drivers, the moving speed of MA elements is constrained. In this paper, we assume that the adopted MA drivers provide line motion in both horizontal and vertical directions, as shown in \cite{basbug2017design,zhu2023movable_chanllenges}. In particular, two motion drivers are deployed for one MA element to enable simultaneous horizontal and vertical motion. For the sake of simplicity, we assume that the horizontal and vertical speeds of the MAs $v_{\mathrm{h,MA}}$ and $v_{\mathrm{v,MA}}$ are constant. Thus, for one transmission block, the maximum horizontal and vertical movement distances of each MA element during the first subframe are given by $D_{\mathrm{h,max}}=v_{\mathrm{h,MA}}T_{\mathrm{MA}}$ and $D_{\mathrm{v,max}}=v_{\mathrm{v,MA}}T_{\mathrm{MA}}$, respectively. Thus, the position of MA element $m$ must satisfy the following constraint:
\begin{equation}\label{Max_dis_cons}
    |\bar{x}_m-x_m^0|\leq D_{\mathrm{h,max}},\quad |\bar{y}_m-y_m^0|\leq D_{\mathrm{v,max}},
\end{equation}
where $[x_m^0, y_m^0]$ denote the position of MA element $m$ at the beginning of the first subframe.  

In this work, we assume that the power consumptions of the horizontal and vertical MA drivers are constant during MA motion, respectively.
{Thereby, the energy consumption of the $m$-th MA driver is proportional to the time interval needed for the horizontal and vertical movement of MA element $m$, i.e.,  $t_{\mathrm{m,h,MA}}=\frac{|\bar{x}_m-x_m^0|}{v_{\mathrm{h,MA}}}$ and $t_{\mathrm{m,v,MA}}=\frac{|\bar{y}_m-y_m^0|}{v_{\mathrm{v,MA}}}$. The overall energy consumption required for MA motion is given by $ E_{m,\mathrm{MA}}=P_{\mathrm{h,MA}}\frac{|\bar{x}_m-x_m^0|}{v_{\mathrm{h,MA}}}+P_{\mathrm{v,MA}}\frac{|\bar{y}_m-y_m^0|}{v_{\mathrm{v,MA}}},$
where $P_{\mathrm{h,MA}}$ and $P_{\mathrm{v,MA}}$ denote the power consumption of the horizontal and vertical MA drivers, respectively. 
}
\subsection{CSI Uncertainty}
{\color{black}
For optimization of the performance of the considered MA-enabled system, acquiring the CSI of the effective channel vectors $\hat{\mathbf{h}}_{k}$, $\forall k$, at the BS is essential. To efficiently acquire the effective channel vectors $\hat{\mathbf{h}}_k$ for all users, the state-of-the-art approaches in \cite{xiao2024channel,ma2023compressed} exploit compressed sensing (CS) techniques to estimate the key parameters of the MA channels and reconstruct the effective channel vectors based on these estimated parameters. In particular, for the channel estimation schemes proposed in \cite{xiao2024channel,ma2023compressed}, the users send several pilots, and the MA elements at the BS are moved over a number of predefined MA positions to receive the pilots. Then, the BS collects the received signals and estimates the azimuth and elevation AoDs of the channel paths, which are subsequently used to reconstruct FRM $\mathbf{G}_k$, $\forall k$. After the AoD estimation, the BS sends several pilots to the users for a number of predefined MA positions to estimate the PCV. Given the attained FRMs $\mathbf{G}_k$, the PCV estimation problem is reduced to a typical least-squares problem, which can be readily solved by the users. Finally, the estimated PCVs at the users are fed back to the BS to reconstruct the effective channel vectors. More details regarding channel estimation in MA-enabled systems are provided in \cite{ma2023compressed,xiao2024channel}.

In this paper, we consider a slow-fading channel and focus on the transmission of a single block over the quasi-static fading channel \cite{zhu2022modeling,zhu2023movable}, for which the instantaneous CSI has been estimated. 
Moreover, thanks to the quasi-static assumption, the FRMs of all users can be regarded as constant and can be accurately estimated at the BS \cite{xiao2024channel}. On the user side, the PCV is recovered via least-squares-based methods and subsequently fed back to the BS \cite{xiao2024channel}\footnote{\color{black}In this work, the FRM and PCV are assumed to be perfectly estimated at the BS and users, respectively. If the estimated FRM and PCV contain estimation errors, these estimation errors can also be characterized by the adopted norm-bounded model according to \cite[Lemma 1]{wiesel2007optimization}.}. However, due to the limited CSI feedback rate, the PCV feedback is subject to quantization errors \cite{zheng2009robust,shenouda2007convex}. 
In this work, we adopt a norm-bounded imperfect CSI error model to characterize these quantization errors.}
Specifically, the effective channel vector $\hat{\mathbf{h}}_{k}$ between the MA region and user $k$ is modeled as follows
\begin{equation}\label{CSI_model}
\begin{aligned}
\hspace{-2.5mm}\hat{\mathbf{h}}_{k}\hspace{-0.5mm}=\hspace{-0.5mm}\mathbf{G}_k^H({\bar{\bm{\psi}}}_{k}\hspace{-0.5mm}+\hspace{-0.5mm}\Delta {\bm{\psi}}_{k}),\,\Omega_{\Delta {\bm{\psi}}_{k}}\hspace{-1mm}=\hspace{-1mm}\{\Delta {\bm{\psi}}_{k}\hspace{-0.5mm}\in\hspace{-0.5mm}\mathbb{C}^{L_k}\hspace{-0.5mm}:\hspace{-0.5mm}\|\Delta {\bm{\psi}}_{k}\|_2^2\hspace{-0.5mm}\leq\hspace{-0.5mm} \epsilon_k^2\},
\end{aligned}
\end{equation}
where ${\bar{\bm{\psi}}}_{k}$ and $\Delta {\bm{\psi}}_{k}$ denote the estimate and the estimation error of the PCV of user $k$, respectively. The norm of the CSI error, $\Delta {\bm{\psi}}_{k}$, is bounded by constant $\epsilon_k$ and set $\Omega_{\Delta {\bm{\psi}}_{k}}$ contains all possible CSI errors satisfying the bounded norm condition.
\section{Algorithm Design for Perfect CSI}
In this section, assuming perfect CSI is available at the BS, we formulate the resource allocation problem for the considered MA-enabled systems and transform it into a more tractable MINLP problem. Then, we develop two algorithms based on the BnB and SCA methods to attain the global optimum and a local optimum of the reformulated MINLP problem, respectively.
\subsection{Problem Formulation}
We first rewrite $\mathbf{h}_{k}^H(\bar{\mathbf{T}})$ as $\mathbf{h}_k^H(\widebar{\mathbf{T}})=\hat{\mathbf{h}}_{k}^H\mathbf{B},$
where $\mathbf{B}=[\mathbf{b}_1,\cdots, \mathbf{b}_M]$ and $\mathbf{b}_m=\big[b_m[1],\cdots,b_m[N]\big]^T$ denotes the binary selection vector of MA element $m$. Here, $b_m[n]\in\left\{0,\hspace*{1mm}1\right\}$ with $\sum_{n=1}^{N}b_m[n]=1$ is a binary variable defining the position of MA element $m$. Note that $b_m[n]=1$ if and only if the $n$-th MA position in $\mathcal{P}$ is selected, i.e., $\bar{\mathbf{t}}_m=\mathbf{p}_n$. Thus, the received signal of user $k$ is given by
\begin{equation}
    y_k=\hat{\mathbf{h}}_{k}^H\mathbf{B}\sum_{k\in\mathcal{K}}\mathbf{w}_k{s}_k+n_k=\hat{\mathbf{h}}_{k}^H\mathbf{B}\mathbf{W}\mathbf{s}+n_k,\vspace{-1mm}
\end{equation}
where $\mathbf{W}=[\mathbf{w}_1,\cdots,\mathbf{w}_K]$ denotes the collection of the beamforming vectors of all users.
Therefore, the received SINR of user $k$ is given by\vspace*{-1mm}
\begin{equation}\vspace*{-1mm}
\mathrm{SINR}_k=\frac{|\hat{\mathbf{h}}_{k}^H\mathbf{B}\mathbf{w}_k|^2}{\sum_{k'\in\mathcal{K}\setminus{k}}|\hat{\mathbf{h}}_{k}^H\mathbf{B}\mathbf{w}_{k'}|^2+\sigma_k^2}.
\end{equation}
Next, we define distance matrix $\mathbf{D}\in\mathbb{C}^{N\times N}$, where element $D_{n,n'}$ in the $n$-th row and $n'$-th column of $\mathbf{D}$ denotes the distance between the $n$-th and the $n'$-th candidate positions in $\mathcal{P}$. Thus, the minimum distance constraint between any pair of MA elements in \eqref{Min_dis_con} can be reformulated as 
\begin{equation}
    \mathbf{b}_m^T\mathbf{D}\mathbf{b}_{m'}\geq D_{\mathrm{min}},\ m\neq m',\ \forall m, m'\in\mathcal{M}.
\end{equation}
Moreover, let vector $\mathbf{d}_{\mathrm{h,m}}=[|x_{1}-x_m^0|,\cdots,|x_{N}-x_m^0|]^T$ and $\mathbf{d}_{\mathrm{v,m}}=[|y_{1}-y_m^0|,\cdots,|y_{N}-y_m^0|]^T$ collect the horizontal and vertical distances between all candidate positions in $\mathcal{P}$ and the initial position of MA element $m$. Then, the maximal moving distance constraints in \eqref{Max_dis_cons} can be equivalently rewritten as 
\begin{equation}
    \mathbf{b}_m^T\mathbf{d}_{\mathrm{h,m}}\leq D_{\mathrm{h,max}},\,\mathbf{b}_m^T\mathbf{d}_{\mathrm{v,m}}\leq D_{\mathrm{v,max}}, \forall m\in\mathcal{M}.
\end{equation}
In the considered MA-enabled MISO system, the BS energy consumption comprises the energy consumed by the MA drivers, and the energy radiated for data transmission. Thus, the total energy consumption at the BS, $E_{\mathrm{total}}$, is given by 
\begin{equation}\vspace{-0mm}
E_{\mathrm{total}}=\sum_{m\in\mathcal{M}}E_{m,\mathrm{MA}}+\sum_{k\in\mathcal{K}}\|\mathbf{w}_k\|_2^2T_{\mathrm{Data}}.
\end{equation}
Next, we define the energy consumption vector of MA element $m$ $\mathbf{e}_m\in\mathbb{C}^N$ as $\mathbf{e}_m=[e_{m,1},\cdots,e_{m,N}]^T,$
where $e_{m,n}=P_{\mathrm{h,MA}}\frac{|{x}_n-x_m^0|}{v_{\mathrm{h,MA}}}+P_{\mathrm{v,MA}}\frac{|{y}_n-y_m^0|}{v_{\mathrm{v,MA}}}$. Then, the total energy consumption $E_{\mathrm{total}}$ and the average power consumption $\widebar{P}$ at the BS can be rewritten as $E_{\mathrm{total}}=\sum_{m\in\mathcal{M}}\mathbf{b}_m^T\mathbf{e}_m+\sum_{k\in\mathcal{K}}\|\mathbf{w}_k\|^2_2T_{\mathrm{Data}},$ and $\widebar{P}(\mathbf{W},\mathbf{B})=\frac{E_{\mathrm{total}}}{T_{\mathrm{MA}}+T_{\mathrm{Data}}}$, respectively. We aim to minimize the BS's power consumption within one frame while guaranteeing a minimum SINR for each user. The resource allocation problem can be formulated as follows
\begin{eqnarray}
\label{Ori_Problem}
    &&\hspace*{-4mm}\underset{\mathbf{W},\mathbf{B}}{\mino}\hspace*{2mm}\widebar{P}(\mathbf{W},\mathbf{B})\notag\\[-2pt]
    &&\hspace*{-4mm}\mbox{s.t.}\hspace*{4mm} \mbox{C1:}\hspace*{1mm} \mathrm{SINR}_k\geq \gamma_{k},\hspace*{1mm}\forall k\in \mathcal{K},\notag\\[-1pt]
    &&\hspace*{4mm}\mbox{C2:}\hspace*{1mm} \mathbf{b}_m^T\mathbf{D}\mathbf{b}_{m'}\geq D_{\mathrm{min}},\ m\neq m',\ \forall m, m'\in\mathcal{M},\notag\\[-1pt]
    &&\hspace*{4mm}\mbox{C3:}\hspace*{1mm}\mathbf{b}_m^T\mathbf{d}_{\mathrm{h,m}}\leq D_{\mathrm{h,max}},\,\mathbf{b}_m^T\mathbf{d}_{\mathrm{v,m}}\leq D_{\mathrm{v,max}}, \forall m\in\mathcal{M}.\notag\\[-1pt]
    &&\hspace*{4mm}\mbox{C4:}\hspace*{1mm} b_m[n]\in \{0,1\},\hspace*{1mm} \forall n\in \mathcal{N}, \forall m\in \mathcal{M},\notag\\[-1pt]
    &&\hspace*{4mm}\mbox{C5:}\hspace*{1mm} \sum_{n\in\mathcal{N}}b_m[n]=1,\hspace*{1mm}\forall m\in \mathcal{M},
\end{eqnarray}
where $\gamma_k\geq 0$ in constraint C1 denotes the minimum required SINR of user $k$. Problem \eqref{Ori_Problem} is non-convex due to the coupling between $\mathbf{W}$ and $\mathbf{B}$ in constraint C1. In fact, the optimization problem in \eqref{Ori_Problem} is a combinatorial optimization problem which is generally NP-hard\cite{xu2022optimal}. In the following, we develop a BnB-based iterative algorithm to obtain the global optimum of \eqref{Ori_Problem}. \\[-8pt]
{\color{black}
\begin{remark}
    The formulated resource allocation problem for MA-enabled systems in \eqref{Ori_Problem} has a similar form as the antenna selection problems in \cite{sanayei2004antenna,mehanna2013joint}. In particular, by setting the MA movement step size $d$ equal to the antenna spacing and omitting the antenna motion power term in the objective function, the minimum distance constraint C2, and the maximum moving distance constraint C3, the problem in \eqref{Ori_Problem} is equivalent to the antenna selection problems in \cite{sanayei2004antenna, mehanna2013joint}. Therefore, the resource allocation algorithms introduced in Section \rom{3} are also capable of solving antenna selection problems.
\end{remark}}
\subsection{Problem Transformation}
 By introducing an auxiliary matrix $\mathbf{X}=\mathbf{B}\mathbf{W},\ \mathbf{X}\in\mathbb{C}^{N\times K}$, the received signal of user $k$ can be rewritten as follows
\begin{equation}
    y_k=\hat{\mathbf{h}}_k^H\mathbf{X}\mathbf{s}+n_k.
\end{equation}
Thus, we can recast the SINR of user $k$ as
\begin{equation}\label{reform_C1}
    \mathrm{SINR}_k=\frac{|\hat{\mathbf{h}}_k^H\mathbf{x}_k|^2}{\sum_{k'\in\mathcal{K}\setminus\{k\}}|\hat{\mathbf{h}}_k^H\mathbf{x}_{k'}|^2+\sigma_{k}^2},
\end{equation}
where $\mathbf{x}_k$ denotes the $k$-th column of $\mathbf{X}$. Then, by exploiting \eqref{reform_C1} in constraint C1, problem \eqref{Ori_Problem} can be reformulated as\\[-4mm]
\begin{eqnarray}
\label{Ori_Problem1}
    &&\hspace*{-4mm}\underset{\mathbf{B},\,\mathbf{W},\,\mathbf{X}}{\mino}\hspace*{2mm}\widebar{P}(\mathbf{W},\mathbf{B})\notag\\[-2pt]
    &&\hspace*{2mm}\mbox{s.t.}\hspace*{7mm} \mbox{C1},\mbox{C2}, \mbox{C3},\mbox{C4}, \mbox{C5},\mbox{C6:}\hspace*{2mm}\mathbf{X}=\mathbf{B}\mathbf{W}.
\end{eqnarray}
Constraint C6 is non-convex due to the coupling between $\mathbf{B}$ and $\mathbf{W}$. To tackle this non-convexity, we present the following Lemma to reformulate C6 into two convex constraints.
\begin{lemma}
Equality constraint C6 is equivalent to the following linear matrix inequality (LMI) constraints, 
\begin{eqnarray}
\hspace*{-2mm}\mathrm{C6a}\mathrm{:}\hspace*{1mm}\label{sdp}
   \begin{bmatrix}
        \mathbf{U} & \mathbf{X} & \mathbf{B}\\
        \mathbf{X}^H & \mathbf{V} & \mathbf{W}^H\\
        \mathbf{B}^H & \mathbf{W} & \mathbf{I}_M
    \end{bmatrix}&\succeq \mathbf{0},\hspace*{1mm}
\mathrm{C6b}\mathrm{:}\hspace*{1mm}\label{DC}\mathrm{Tr}\big(\mathbf{U}\big)-M\leq 0,\vspace*{-2mm}
\end{eqnarray}
where $\mathbf{U}\in\mathbb{C}^{N\times N}$ and $\mathbf{V}\in\mathbb{C}^{K\times K}$ are two auxiliary optimization variables with $\mathbf{U}\succeq \mathbf{0}$ and $\mathbf{V}\succeq \mathbf{0}$.\\[-4.7mm]
\end{lemma}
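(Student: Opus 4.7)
The plan is to prove the equivalence by two implications, leveraging a Schur complement reformulation of C6a together with the binary structure of $\mathbf{B}$ imposed by constraints C4 and C5.

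First, I would take C6a and apply the Schur complement with respect to the bottom-right identity block $\mathbf{I}_M$. This transforms C6a into the equivalent condition
\begin{equation*}
\begin{bmatrix} \mathbf{U} - \mathbf{B}\mathbf{B}^H & \mathbf{X} - \mathbf{B}\mathbf{W} \\ \mathbf{X}^H - \mathbf{W}^H\mathbf{B}^H & \mathbf{V} - \mathbf{W}^H\mathbf{W} \end{bmatrix} \succeq \mathbf{0}.
\end{equation*}
In particular, the upper-left block yields $\mathbf{U} \succeq \mathbf{B}\mathbf{B}^H$. Using C4 and C5, every column $\mathbf{b}_m$ of $\mathbf{B}$ is a standard basis vector, so $\mathrm{Tr}(\mathbf{B}\mathbf{B}^H) = \sum_{m=1}^{M} \|\mathbf{b}_m\|_2^2 = M$, and consequently $\mathrm{Tr}(\mathbf{U}) \geq M$. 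Combining this with C6b, which enforces $\mathrm{Tr}(\mathbf{U}) \leq M$, pins down $\mathrm{Tr}(\mathbf{U} - \mathbf{B}\mathbf{B}^H) = 0$. Since $\mathbf{U} - \mathbf{B}\mathbf{B}^H \succeq \mathbf{0}$ has zero trace, I conclude $\mathbf{U} = \mathbf{B}\mathbf{B}^H$, so the $(1,1)$ block of the Schur-complemented matrix vanishes.

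Next, I would invoke the standard fact that whenever a PSD block matrix has a zero diagonal block, the corresponding off-diagonal block must also vanish. Applied to the block inequality above, this forces $\mathbf{X} - \mathbf{B}\mathbf{W} = \mathbf{0}$, which is precisely C6. For the converse direction, given that C6 holds I would simply choose $\mathbf{U} = \mathbf{B}\mathbf{B}^H$ and $\mathbf{V} = \mathbf{W}^H\mathbf{W}$. The matrix in C6a then factors as the outer product of the tall matrix obtained by vertically stacking $\mathbf{B}$, $\mathbf{W}^H$, and $\mathbf{I}_M$ with its own conjugate transpose, so it is manifestly PSD; and C6b holds with equality since $\mathrm{Tr}(\mathbf{U}) = M$.

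The main obstacle I anticipate is the tight trace identification step that converts the PSD slack $\mathbf{U} \succeq \mathbf{B}\mathbf{B}^H$ into the exact equality $\mathbf{U} = \mathbf{B}\mathbf{B}^H$. This is the crux of why the LMI reformulation is tight, and it rests critically on the binary structure of $\mathbf{B}$: constraints C4 and C5 allow us to compute $\mathrm{Tr}(\mathbf{B}\mathbf{B}^H)$ exactly rather than merely bound it. Without this exact trace identity, C6a and C6b would correspond to a strict relaxation of C6 rather than an equivalent reformulation, and the subsequent BnB algorithm would lose its global-optimality guarantee.
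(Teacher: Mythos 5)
Your proof is correct and follows essentially the same route as the paper's: the paper invokes a cited result to assert that C6 is equivalent to C6a together with $\mathrm{Tr}(\mathbf{U}-\mathbf{B}\mathbf{B}^H)\leq 0$, and then uses the binary structure of $\mathbf{B}$ to convert that trace condition into C6b --- precisely the identity $\mathrm{Tr}(\mathbf{B}\mathbf{B}^H)=M$ that you isolate as the crux. Your Schur-complement argument, the tight-trace step forcing $\mathbf{U}=\mathbf{B}\mathbf{B}^H$, and the zero-diagonal-block observation simply supply the full derivation of the equivalence that the paper outsources to its citation.
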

\vspace*{-2mm}
\begin{proof} 
Based on \cite{6698281}, C6 is equivalent to C6a and constraint:
\begin{eqnarray}\label{DC2}
\overline{\mbox{C6b}}\mbox{:}&
    \mathrm{Tr}\left(\mathbf{U}-\mathbf{B}\mathbf{B}^H\right)&\leq0.\vspace*{-2mm}
\end{eqnarray}
The left-hand side of \eqref{DC2} can be rewritten as
\begin{equation}\vspace*{-0mm}
         \hspace*{-2mm}\mathrm{Tr}\hspace*{-0.8mm}\left(\mathbf{U}-\mathbf{B}\mathbf{B}^H\right)\hspace*{-0.8mm}\overset{(a)}{=}\hspace*{-0.8mm}\mathrm{Tr}\hspace*{-0.5mm}\left(\mathbf{U}\right)-\hspace*{-1mm}\sum_{m}\hspace*{-1mm}\mathrm{Tr}\hspace*{-0.5mm}\left(\mathbf{b}_m\mathbf{b}_m^H\right)\hspace*{-0.8mm}\overset{(b)}{=}\hspace*{-0.5mm}
         \mathrm{Tr}\hspace*{-0.5mm}\left(\mathbf{U}\right)-M,
\end{equation}
where equalities (a) and (b) hold due to the additivity of the matrix trace and the definition of binary decision vector $\mathbf{b}_m,\forall m$, respectively, which completes our proof.\\[-4mm]\vspace{-1mm}
\end{proof}
On the other hand, minimum distance constraint C2 is also non-convex. Therefore, we exploit the following lemma to convexify constraint C2 based on the binary nature of $\mathbf{B}$.
\begin{lemma}
   Let $\bar{\mathbf{b}}=[\mathbf{b}_1^T,\cdots,\mathbf{b}_M^T]^T$ denote a binary vector collecting all binary selection vectors $\mathbf{b}_m$, $\forall m$. Inequality constraint C2 is equivalent to the following inequality constraint
    \begin{equation}\notag
    \begin{aligned}
         \overline{\mbox{C2}}:&\bar{\mathbf{b}}^T(\frac{1}{2}(-\mathbf{D}_{m,m'}-\mathbf{D}_{m,m'}^T)+\eta\mathbf{I}_{MN})\bar{\mathbf{b}}-\eta M+D_{\mathrm{min}}\\
         &\leq 0,\,m\neq m',\ \forall m, m'\in\mathcal{M},
    \end{aligned}
    \end{equation}
    where $\eta\in\mathbb{R}_+$ is an arbitrary real positive number, $\mathbf{D}_{m,m'}=\hat{\mathbf{I}}_m^T\mathbf{D}\hat{\mathbf{I}}_{m'}$, and $\hat{\mathbf{I}}_m=[\mathbf{0}_{N\times(m-1)N},\mathbf{I}_N,\mathbf{0}_{N\times(M-m)N}]$.\vspace{-1mm}

\end{lemma}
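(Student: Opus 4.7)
My plan is to directly verify that, on the feasible set carved out by the binary constraint C4 and the sum-to-one constraint C5, the quadratic form on the left-hand side of $\overline{\mbox{C2}}$ collapses to $D_{\mathrm{min}} - \mathbf{b}_m^T \mathbf{D} \mathbf{b}_{m'}$, so that $\overline{\mbox{C2}} \leq 0$ is pointwise equivalent to C2. The role of the $\eta \mathbf{I}_{MN}$ term is purely convexification: for sufficiently large $\eta$, the matrix $\frac{1}{2}(-\mathbf{D}_{m,m'} - \mathbf{D}_{m,m'}^T) + \eta \mathbf{I}_{MN}$ is positive semidefinite, so after relaxing the binary variables, $\overline{\mbox{C2}}$ becomes a convex quadratic constraint. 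The verification therefore boils down to checking that the added penalty is exactly cancelled by the constant $-\eta M$ at every binary feasible point.

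First, I would record the selector identity $\hat{\mathbf{I}}_m \bar{\mathbf{b}} = \mathbf{b}_m$, which is immediate from the block structure of $\hat{\mathbf{I}}_m$. Combined with $\mathbf{D}_{m,m'} = \hat{\mathbf{I}}_m^T \mathbf{D} \hat{\mathbf{I}}_{m'}$, this yields $\bar{\mathbf{b}}^T \mathbf{D}_{m,m'} \bar{\mathbf{b}} = \mathbf{b}_m^T \mathbf{D} \mathbf{b}_{m'}$ and $\bar{\mathbf{b}}^T \mathbf{D}_{m,m'}^T \bar{\mathbf{b}} = \mathbf{b}_{m'}^T \mathbf{D}^T \mathbf{b}_m$. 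Since the latter is a scalar, it equals its own transpose $\mathbf{b}_m^T \mathbf{D} \mathbf{b}_{m'}$, so the symmetrized term simplifies to $\bar{\mathbf{b}}^T \frac{1}{2}(-\mathbf{D}_{m,m'} - \mathbf{D}_{m,m'}^T) \bar{\mathbf{b}} = -\mathbf{b}_m^T \mathbf{D} \mathbf{b}_{m'}$, without needing any symmetry assumption on $\mathbf{D}$.

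Second, I would invoke C4 and C5 to absorb the penalty term:
$$
\bar{\mathbf{b}}^T \eta \mathbf{I}_{MN} \bar{\mathbf{b}} = \eta \sum_{m=1}^{M}\sum_{n=1}^{N} b_m[n]^2 = \eta \sum_{m=1}^{M}\sum_{n=1}^{N} b_m[n] = \eta M,
$$
where the second equality uses $b_m[n]^2 = b_m[n]$ from C4 and the third uses $\sum_{n} b_m[n] = 1$ from C5. Plugging these simplifications into the LHS of $\overline{\mbox{C2}}$ yields $-\mathbf{b}_m^T \mathbf{D} \mathbf{b}_{m'} + \eta M - \eta M + D_{\mathrm{min}} = D_{\mathrm{min}} - \mathbf{b}_m^T \mathbf{D} \mathbf{b}_{m'}$, which is nonpositive if and only if $\mathbf{b}_m^T \mathbf{D} \mathbf{b}_{m'} \geq D_{\mathrm{min}}$, i.e., if and only if C2 holds.

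Rather than a genuine obstacle, the key conceptual point is recognizing the standard \emph{penalty convexification} trick at work: $\eta \mathbf{I}_{MN}$ lifts the spectrum of the quadratic form to render it PSD for sufficiently large $\eta$, while the paired constant $-\eta M$ leaves the feasible set invariant thanks to C4 and C5. Dropping either C4 or C5 would break the identity $\bar{\mathbf{b}}^T \bar{\mathbf{b}} = M$, so the stated equivalence is an equivalence \emph{relative to} C4 and C5 rather than a standalone algebraic identity, and this is what must be emphasized in the write-up.
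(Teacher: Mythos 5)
Your proof is correct and follows essentially the same route as the paper: the selector identity $\mathbf{b}_m=\hat{\mathbf{I}}_m\bar{\mathbf{b}}$, the symmetrization of the bilinear term via the scalar-transpose argument, and the identity $\bar{\mathbf{b}}^T\bar{\mathbf{b}}=M$ (which the paper obtains as $\mathrm{Tr}(\mathbf{B}^T\mathbf{B})=M$ from the binary and sum-to-one constraints) to cancel the $\eta$ terms. Your added remarks on the convexification role of $\eta$ and the dependence on C4/C5 are consistent with the paper's subsequent observation that $\overline{\mbox{C2}}$ is convex for $\eta\geq\lambda_{\mathrm{max}}\big(\tfrac{1}{2}(\mathbf{D}_{m,m'}+\mathbf{D}_{m,m'}^T)\big)$.
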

\begin{proof}
 First, we can reformulate constraint C2 as follows,
    \begin{equation}\label{new_C2}
        \hspace*{-2mm}\mbox{C2}\Leftrightarrow\mathbf{b}_m^T(-\mathbf{D})\mathbf{b}_{m'}+D_{\mathrm{min}}\leq 0,\ m\neq m',\ \forall m, m'\in\mathcal{M}.
    \end{equation}
    Next, by defining binary vector $\bar{\mathbf{b}}=[\mathbf{b}_1^T,\cdots,\mathbf{b}_M^T]^T$, we obtain
    \begin{equation}
        \mathbf{b}_m=\hat{\mathbf{I}}_m\bar{\mathbf{b}},
    \end{equation}
    where $\hat{\mathbf{I}}_m=[\mathbf{0}_{N\times(m-1)N},\mathbf{I}_N,\mathbf{0}_{N\times(M-m)N}]$. Then, we can
    rewrite C2 in \eqref{new_C2} equivalently as follows
    \begin{equation}\label{reform_C2}
    \begin{aligned}
        \mbox{C2}&\Leftrightarrow\bar{\mathbf{b}}^T\hat{\mathbf{I}}_m^T(-\mathbf{D})\hat{\mathbf{I}}_{m'}\bar{\mathbf{b}}+D_{\mathrm{min}}\leq 0\\
        &\Leftrightarrow\bar{\mathbf{b}}^T(\frac{1}{2}(-\mathbf{D}_{m,m'}-\mathbf{D}_{m,m'}^T))\bar{\mathbf{b}}+D_{\mathrm{min}}\leq 0,
    \end{aligned}
    \end{equation}
    where $\mathbf{D}_{m,m'}=\hat{\mathbf{I}}_m^T\mathbf{D}\hat{\mathbf{I}}_{m'}$. Moreover, based on the definition of matrix $\mathbf{B}$, the following equality can be obtained.
    \begin{equation}
        \bar{\mathbf{b}}^T\bar{\mathbf{b}}=\mathrm{Tr}(\mathbf{B}^T\mathbf{B})=M.
    \end{equation}
    By adding $\eta\bar{\mathbf{b}}^T\bar{\mathbf{b}}$ to both sides of \eqref{reform_C2}, C2 can be rewritten as 
    \begin{equation}\notag
        \begin{aligned}
            \bar{\mathbf{b}}^T(\frac{1}{2}(-\mathbf{D}_{m,m'}-\mathbf{D}_{m,m'}^T)\bar{\mathbf{b}}+D_{\mathrm{min}}+\eta\bar{\mathbf{b}}^T\bar{\mathbf{b}}&\leq \eta\bar{\mathbf{b}}^T\bar{\mathbf{b}}\\[-2pt]
            \bar{\mathbf{b}}^T(\frac{1}{2}(-\mathbf{D}_{m,m'}-\mathbf{D}_{m,m'}^T+\eta\mathbf{I}_{MN})\bar{\mathbf{b}}+D_{\mathrm{min}}&\leq \eta M\\[-2pt]
            \bar{\mathbf{b}}^T(\frac{1}{2}(-\mathbf{D}_{m,m'}-\mathbf{D}_{m,m'}^T+\eta\mathbf{I}_{MN})\bar{\mathbf{b}}-\eta M+D_{\mathrm{min}}&\leq 0,
        \end{aligned}
    \end{equation}
    which completes the proof\\[-4mm]\vspace{-1mm}
\end{proof}
Note that $ \overline{\mbox{C2}}$ is convex w.r.t. $\mathbf{B}$ for all real values $\eta\geq\lambda_{\mathrm{max}}\left(\frac{1}{2}(\mathbf{D}_{m,m'}+\mathbf{D}_{m,m'}^T)\right)$. In addition, we observe that SINR constraint C1 is non-convex as well. However, if an arbitrary $\mathbf{x}_k$ satisfies the constraints in \eqref{Ori_Problem}, 
$\mathbf{x}_k$ multiplied by an arbitrary phase shift $e^{j\phi}$ also satisfies the constraints without affecting the value of the objective function. Thus, we can leverage the following lemma to transform the non-convex SINR constraint in C1 into two equivalent convex constraints, cf. also \cite{luo2006introduction}.
\begin{lemma}
Without loss of optimality, we assume that $\hat{\mathbf{h}}_k^H\mathbf{x}_k\in\mathbb{R}$. Then, constraint C1 can be equivalently rewritten as
\vspace*{0mm}
\begin{eqnarray}
&& \hspace*{-12mm}{{\mathrm{C1a}}}\mbox{:}\hspace*{1mm}\left({\sum_{k'\neq k}|\hat{\mathbf{h}}_k^H\mathbf{x}_{k'}|^2+\sigma_k^2}\right)^{1/2}\hspace{-2mm}-\frac{\operatorname{Re}\{\hat{\mathbf{h}}_{k}^H\mathbf{x}_k\}}{\sqrt{\gamma_k}}\leq 0, \hspace*{0.5mm}\forall k\in\mathcal{K}\\[-3pt]
&&\hspace*{-12mm}\mathrm{C1b:}\hspace*{1mm}\operatorname{Im}\{\hat{\mathbf{h}}_{k}^H\mathbf{x}_k\}=0,\hspace*{1mm}\forall k\in\mathcal{K}.\vspace*{-2mm}
\end{eqnarray}
C1a and C1b are both convex constraints.\vspace*{-2mm}
\end{lemma}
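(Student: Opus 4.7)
The plan is to exploit the well-known phase invariance of the SINR together with a standard second-order cone (SOC) reformulation. First, I would observe that both the objective $\widebar{P}(\mathbf{W},\mathbf{B})$ and the constraints C2--C6 in problem \eqref{Ori_Problem1} are invariant under the rotation $\mathbf{x}_k \mapsto e^{j\phi_k}\mathbf{x}_k$, since $\mathbf{x}_k$ enters the SINR only through its modulus-squared inner products. Consequently, given any feasible $\mathbf{x}_k$, I can rotate it by $\phi_k = -\arg(\hat{\mathbf{h}}_k^H\mathbf{x}_k)$ so that $\hat{\mathbf{h}}_k^H\mathbf{x}_k$ becomes a nonnegative real number without altering feasibility or objective value. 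This justifies the WLOG assumption $\hat{\mathbf{h}}_k^H\mathbf{x}_k\in\mathbb{R}$ and, in particular, $\operatorname{Im}\{\hat{\mathbf{h}}_k^H\mathbf{x}_k\}=0$, which is exactly constraint C1b.

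Next, starting from $\mathrm{SINR}_k\geq\gamma_k$ with the form in \eqref{reform_C1}, I would cross-multiply by the positive denominator to get
\begin{equation}\notag
|\hat{\mathbf{h}}_k^H\mathbf{x}_k|^2\geq\gamma_k\Bigl(\sum_{k'\neq k}|\hat{\mathbf{h}}_k^H\mathbf{x}_{k'}|^2+\sigma_k^2\Bigr).
\end{equation}
Under the phase-alignment assumption, $|\hat{\mathbf{h}}_k^H\mathbf{x}_k|=\operatorname{Re}\{\hat{\mathbf{h}}_k^H\mathbf{x}_k\}\geq 0$, so taking square roots on both sides and rearranging yields precisely C1a. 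The reverse implication is immediate, since squaring C1a (which is only meaningful when $\operatorname{Re}\{\hat{\mathbf{h}}_k^H\mathbf{x}_k\}\geq 0$, as otherwise C1a is infeasible) and combining with C1b recovers the original SINR constraint. Thus C1 $\Leftrightarrow$ C1a $\wedge$ C1b.

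Finally, for convexity I would note that C1b is a real-affine equality in $\mathbf{x}_k$, hence convex. For C1a, the vector $[\hat{\mathbf{h}}_k^H\mathbf{x}_1,\ldots,\hat{\mathbf{h}}_k^H\mathbf{x}_{k-1},\hat{\mathbf{h}}_k^H\mathbf{x}_{k+1},\ldots,\hat{\mathbf{h}}_k^H\mathbf{x}_K,\sigma_k]^T$ is affine in $\mathbf{X}$, and its Euclidean norm is a convex function of $\mathbf{X}$; the right-hand side $\operatorname{Re}\{\hat{\mathbf{h}}_k^H\mathbf{x}_k\}/\sqrt{\gamma_k}$ is real-affine. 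Hence C1a is a standard SOC constraint, and thus convex.

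The only non-routine step is the phase-rotation argument: I would make sure to state explicitly that the rotation preserves $|\hat{\mathbf{h}}_k^H\mathbf{x}_{k'}|$ for every $k'$ (so the interference terms are unchanged) and preserves $\|\mathbf{w}_k\|_2^2$ through the relation $\mathbf{X}=\mathbf{B}\mathbf{W}$ in C6, so that no feasible-set point and no objective value is lost when passing to the reformulation. All remaining manipulations are algebraic and the SOC/affine classification is immediate, so I do not anticipate any further obstacle.
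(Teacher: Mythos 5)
Your proof is correct and follows the standard phase-rotation plus second-order-cone reformulation that the paper itself invokes by simply citing \cite[Section III]{luo2006introduction} rather than writing it out. You have spelled out precisely the argument contained in that reference, including the two details that matter (the rotation must be applied jointly to $\mathbf{w}_k$ and $\mathbf{x}_k$ to preserve C6 and the objective, and C1a is infeasible unless $\operatorname{Re}\{\hat{\mathbf{h}}_k^H\mathbf{x}_k\}\geq 0$, which closes the reverse implication), so there is nothing to add.
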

\vspace*{-0mm}
\begin{proof}
Please refer to \cite[Section \rom{3}]{luo2006introduction}.\\[-4mm]\vspace{-2mm}
\end{proof}
Thus, the resource allocation optimization problem in \eqref{Ori_Problem} can be equivalently reformulated as follows
\begin{eqnarray}
\label{Reform_Problem}
    &&\hspace*{-4mm}\underset{\mathbf{X},\mathbf{W},\mathbf{B},\mathbf{U},\mathbf{V}}{\mino}\hspace*{2mm}\widebar{P}(\mathbf{W},\mathbf{B})\\[-3pt]
    &&\hspace*{2mm}\mbox{s.t.}\hspace*{12mm} \mbox{C1a},\mbox{C1b}, \overline{\mbox{C2}},{\mbox{C3}},\mbox{C4},\mbox{C5},\mbox{C6a},\mbox{C6b}\notag.
\end{eqnarray}
The optimization problem in \eqref{Reform_Problem} is an MINLP problem, where C4 is the only non-convex constraint. Thus, if we employ linear relaxation to constraint C4, the optimization problem becomes a convex optimization problem, and can be optimally solved by numerical solvers such as CVX \cite{grant2008cvx}. On the other hand, for a given fixed binary matrix $\mathbf{B}$, the MINLP problem in \eqref{Reform_Problem} degenerates into a convex optimization problem with respect to variables $\mathbf{X},\mathbf{W},\mathbf{U}$ and $\mathbf{V}$. Thus, rigorous lower and upper bounds of the problem over any given variable subdomain can be generated. In particular, a feasible upper bound is obtained by solving \eqref{Reform_Problem} for a given feasible $\mathbf{B}$, while a lower bound is obtained from relaxations of binary constraint C4. Therefore, the BnB method can be exploited to obtain the optimal solution of the MINLP problem in \eqref{Reform_Problem} \cite{horst2013global,boyd2007branch}.
\subsection{Proposed BnB-based Algorithm}
We develop a BnB-based algorithm to optimally solve the MINLP problem in \eqref{Reform_Problem}. The crux of the BnB method is to divide the feasible region into partitions arranged in a tree structure, with the feasible set of the main problem mapped to the root of the tree. Each node of the BnB tree represents a subset of the feasible set. By subdividing the feasible region into a series of subsets, the problem in \eqref{Reform_Problem} is decomposed into manageable smaller subproblems corresponding to the subsets. In particular, for each subproblem, a lower- and an upper-bound solution is constructed. As the iterative BnB algorithm progresses, based on a predefined node selection strategy, 
one node in the current BnB tree is selected and branched into two new node.
With the expansion of the BnB tree, the feasible set is progressively partitioned into smaller subsets with more accurate lower and upper bounds. Following the above procedure, the gap between the upper bound and the lower bound gradually diminishes with each iteration and the BnB-based algorithm converges towards the global optimum of the considered problem. 
\subsubsection{Initialization:}
\begin{figure}\vspace{-0mm}
    \centering
    \includegraphics[width=1.8in]{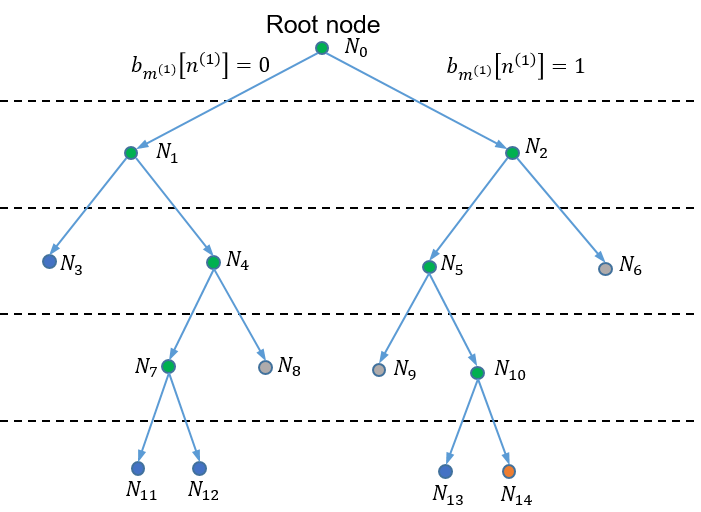}
    \caption{ An illustration of the BnB tree for $M = 2$ and $N = 2$ in the $7$-th iteration. The green, black, grey, and orange dots correspond to internal, external, discarded, and optimal nodes, respectively.}\vspace{2mm}
    \label{fig:BnB}
    \end{figure}
We first initialize the BnB tree $\mathcal{T}^{(0)}=\{\mathcal{N}_0\}$, where $\mathcal{N}_0=\{\mathcal{B}^{(0)},F_L^{(0)},F_U^{(0)}\}$ denotes the root node of $\mathcal{T}^{(0)}$. The search space $\mathcal{B}^{(0)}$ of the binary matrix $\mathbf{B}$ corresponding to the root node of the BnB tree is the Cartesian product of $M\times N$ binary sets, i.e., $\mathcal{B}^{(0)}=\prod_{(m,n)\in\mathcal{L}}\mathcal{B}_{m,n}^{(0)}$, where $\mathcal{L}=\{(1,1),\cdots,(M,N)\}$ denotes the collection of all $(m,n)$ pairs, and $\mathcal{B}_{m,n}^{(0)}=\{0,1\},\forall m,n$, denotes the feasible set of binary variable $b_{m}[n]$. $F_L^{(0)}$ and $F_U^{(0)}$ denote the lower and upper bounds for \eqref{Reform_Problem} in search space $\mathcal{B}^{(0)}$, respectively.

To obtain an initial lower bound for \eqref{Reform_Problem}, we relax $\mathcal{B}^{(0)}$ into a continuous space $\widetilde{\mathcal{B}}^{(0)}=\prod_{(m,n)\in\mathcal{L}}\widetilde{\mathcal{B}}_{m,n}^{(0)}$, where $\widetilde{\mathcal{B}}_{m,n}^{(0)}=\{b_m[n]\,|\,0\leq b_m[n]\leq 1\}$ denotes the relaxed  set of $\mathcal{B}_{m,n}^{(0)}$.
Then, the lower bound $F_L^{(0)}$ for \eqref{Reform_Problem} is obtained by solving 
\begin{eqnarray}
\label{LB_Problem}
    &&\hspace*{-4mm}\underset{\mathbf{X},\mathbf{W},\mathbf{B},\mathbf{U},\mathbf{V}}{\mino}\hspace*{2mm}\widebar{P}(\mathbf{W},\mathbf{B})\\[-4pt]
    &&\hspace*{0mm}\mbox{s.t.}\hspace*{8mm} \mbox{C1a},\mbox{C1b}, \overline{\mbox{C2}},{\mbox{C3}},\mbox{C5},\mbox{C6a},\mbox{C6b},\overline{\mbox{C4}}:\mathbf{B}\in\widetilde{\mathcal{B}}^{(0)},\notag
\end{eqnarray}
which is a convex optimization problem and can be optimally solved by existing numerical convex programming solvers, e.g., CVX \cite{grant2008cvx}. The corresponding optimal solutions of $\mathbf{B}$ and $\mathbf{W}$ are denoted as $\mathbf{B}_L^{(0)}=[\mathbf{b}_{L,1}^{(0)},\cdots,\mathbf{b}_{L,M}^{(0)}]$ and $\mathbf{W}_L^{(0)}=[\mathbf{w}_{L,1}^{(0)},\cdots, \mathbf{w}_{L,K}^{(0)}]$, respectively.
We note that the optimal $\mathbf{B}_L^{(0)}$ may be non-binary since the binary constraint C4 has been relaxed. Thus, the solution of the relaxed problem in \eqref{LB_Problem} establishes a lower bound for \eqref{Reform_Problem}, i.e., $F_L^{(0)}=\widebar{P}(\mathbf{W}_L^{(0)},\mathbf{B}_L^{(0)}).$
On the other hand, we can obtain an initial upper bound for \eqref{Reform_Problem} based on the solution of \eqref{LB_Problem}. In particular, based on $\mathbf{B}_L^{(0)}$, we construct a binary solution $\mathbf{B}_U^{(0)}$ that has minimum Euclidean distance to $\mathbf{B}_L^{(0)}$ by solving the following optimization problem
\begin{eqnarray}\label{B_feasible}
 &&\hspace*{-4mm}\underset{\mathbf{B}}{\mino}\hspace*{2mm}\|\mathbf{B}-\mathbf{B}_L^{(0)}\|_F\\[-3pt]
 &&\hspace*{2mm}\mbox{s.t.}\hspace*{8mm} \overline{\mbox{C2}},{\mbox{C3}},\mbox{C4}, \mbox{C5}\notag,
\end{eqnarray}
which is non-convex due to binary constraint C4. To circumvent this difficulty, we recast C4 equivalently in form of the following inequality constraint and $\overline{\mbox{C4}}$
\begin{equation}\label{SCA_begin}
\begin{aligned}
       {\mbox{C4a}}:&\hspace*{1mm} \sum_{m\in\mathcal{M}}\sum_{n\in\mathcal{N}} (b_m[n]-b_m^2[n])\leq 0, 
\end{aligned}
\end{equation}
where constraint C4a is a difference of convex (DC) functions \cite{le2012exact}. To facilitate low-complexity algorithm design, we resort to the penalty-based method \cite{ng2015secure} to tackle DC constraint C4a and rewrite the problem in \eqref{B_feasible} as follows 
\begin{eqnarray}\label{B_feasible_Penalty}
 &&\hspace*{-2mm}\underset{\mathbf{B}}{\mino}\hspace*{2mm}\|\mathbf{B}-\mathbf{B}_L^{(0)}\|_F+\frac{1}{\mu}\sum_{m\in\mathcal{M}}\sum_{n\in\mathcal{N}} (b_m[n]-b_m^2[n])\notag\\[-3pt]
 &&\hspace*{0mm}\mbox{s.t.}\hspace*{8mm} \overline{\mbox{C2}},{\mbox{C3}},\widebar{\mbox{C4}},\mbox{C4a}, \mbox{C5},
\end{eqnarray}
where $\mu>0$ is a penalty factor for penalizing the violation of the binary constraint. When $\mu$ is sufficiently small, i.e., $\mu\rightarrow 0$, problem \eqref{B_feasible} and problem \eqref{B_feasible_Penalty} are equivalent \cite{le2012exact}. Now, the objective function in \eqref{B_feasible} is in the canonical form of a DC problem in \eqref{B_feasible_Penalty}. Hence, a suboptimal solution of \eqref{B_feasible_Penalty} can be obtained by the SCA method \cite{le2012exact}. In particular, in the $(j+1)$-th iteration of the SCA algorithm, we construct a global underestimator for the term $\sum_{m\in\mathcal{M}}\sum_{n\in\mathcal{N}}b_m^2[n]$ by leveraging first-order Taylor approximation as follows:
\begin{equation}\label{SCA_end}
    \sum_{m\in\mathcal{M}}\sum_{n\in\mathcal{N}}b_m^2[n]\geq \sum_{m\in\mathcal{M}}\sum_{n\in\mathcal{N}} 2b_m^{(j)}[n]b_m[n]-(b_m^{(j)}[n])^2, 
\end{equation}
where $b_m^{(j)}[n]$ is the solution for $b_m[n]$ in the $j$-th iteration. As a result, the optimization problem in the $j$-th iteration of the SCA algorithm is given by
\begin{eqnarray}
\label{SCA_problem_B}
    &&\hspace*{-4mm}\underset{\mathbf{B}}{\mino}\hspace*{2mm}\|\mathbf{B}-\mathbf{B}_L^{(0)}\|_F+\frac{1}{\mu}\sum_{m}\sum_{n}\notag\\ [-2pt] 
    &&\hspace*{10mm}(b_m[n]-2b_m^{(j)}[n]b_m[n]+(b_m^{(j)}[n])^2)\notag\\[-1pt]
    &&\hspace*{0mm}\mbox{s.t.}\hspace*{8mm}\overline{\mbox{C2}},{\mbox{C3}},\overline{\mbox{C4}}, \mbox{C5}.
\end{eqnarray}
The problem in \eqref{SCA_problem_B} is convex and can be optimally solved by CVX. The proposed SCA algorithm to obtain $\mathbf{B}_U^{(0)}$ is summarized in \textbf{Algorithm 1}. The proposed algorithm can obtain a suboptimal binary solution of \eqref{B_feasible}. An upper bound of \eqref{Reform_Problem} can be generated based on any feasible $\mathbf{B}$ satisfying constraints $\overline{\mbox{C2}},{\mbox{C3}},{\mbox{C4}}, \mbox{C5}$ \cite{boyd2007branch}. Thus, based on the binary solution $\mathbf{B}_U^{(0)}$, an upper bound for \eqref{Reform_Problem} is obtained by solving the optimization problem in \eqref{LB_Problem} for fixed $\mathbf{B}=\mathbf{B}_U^{(0)}$, e.g., by using again CVX\cite{grant2008cvx}. Here, $\mathbf{W}_U^{(0)}=[\mathbf{w}_{U,1}^{(0)},\cdots,\mathbf{w}_{U,K}^{(0)}]$ denotes the optimal $\mathbf{W}$ obtained by solving \eqref{LB_Problem} with fixed $\mathbf{B}=\mathbf{B}_U^{(0)}$. Then, an initial upper bound of \eqref{Reform_Problem} is given by $F_U^{(0)}=\widebar{P}(\mathbf{W}_U^{(0)},\mathbf{B}_U^{(0)})$.

\begin{algorithm}[t]
\caption{SCA Algorithm}\vspace{-0mm}
\begin{algorithmic}[1]
\small
\STATE Set iteration index $j=0$ and initialize $\mathbf{B}$ as $\mathbf{B}^{(0)}$. Set convergence tolerance $0<\Delta_{\mathrm{SCA}}\ll 1$ and penalty factor $0<\mu\ll 1$.
\REPEAT
\STATE Set $j=j+1$. Solve \eqref{SCA_problem_B} for given $\mathbf{B}^{(j-1)}$ and update $\mathbf{B}^{(j)}$ as the optimal solution of \eqref{SCA_problem_B}.
\UNTIL $\frac{\|\mathbf{B}^{(j-1)}-\mathbf{B}^{(j)}\|_F}{\|\mathbf{B}^{(j-1)}\|_F}\leq \Delta_{\mathrm{SCA}}$
\end{algorithmic}
\end{algorithm}
\subsubsection{Branch and Bound:}
At the beginning of the $t$-th iteration, the BnB tree obtained in the last iteration is given by $\mathcal{T}^{(t-1)}$. Here, we let sets $\mathcal{I}^{(t-1)}$ and $\mathcal{E}^{(t-1)}$ denote the collections of internal and external nodes of the BnB tree $\mathcal{T}^{(t-1)}$, respectively, as shown in Fig. \ref{fig:BnB}. Here, internal and external nodes represent the nodes with and without child nodes, respectively. 
Then, we select the node $\mathcal{N}_{i_t}=\{\mathcal{B}^{(i_t)},F_L^{(i_t)},F_U^{(i_t)}\}$, $\mathcal{N}_{i_t}\in\mathcal{E}^{(t-1)}$, from the external nodes with the smallest lower bound $F_L^{(i_t)}$, where $i_t$ denotes the index of the selected node. Here, $\mathcal{B}^{(i_t)}=\prod_{(m,n)\in\mathcal{L}}\mathcal{B}_{m,n}^{(i_t)}$ denotes the search space of $\mathbf{B}$ corresponding to the selected node $\mathcal{N}_{i_t}$. In particular, $\mathcal{B}^{(i_t)}$ contains binary sets corresponding to the undetermined binary optimization variables to be optimized in future iterations, i.e.,  $\mathcal{B}_{m,n}^{(i_t)}=\{0,1\}$, and determined binary variables, i.e., $\mathcal{B}_{m,n}^{(i_t)}=\{0\}$ or $\{1\}$. We let set $\mathcal{U}^{(i_t)}$ and $\mathcal{D}^{(i_t)}$ collect the indices of undetermined and determined binary variables, respectively.

Then, we partition the selected node $\mathcal{N}_{i_t}$ into two child nodes based on the Euclidean distance between $\mathbf{B}_L^{(i_t)}$ and $\mathbf{B}_U^{(i_t)}$, which are the solutions of $\mathbf{B}$ corresponding to lower and upper bounds $F_L^{(i_t)}$ and $F_U^{(i_t)}$, respectively.
In particular, we find the undetermined binary variable $b_{m^{(t)}}[n^{(t)}]$ with index $(m^{(t)},n^{(t)})\in\mathcal{U}^{(i_t)}$, where $(m^{(t)},n^{(t)})$ is given by
\begin{equation}\label{Branching}
(m^{(t)},n^{(t)})=\underset{(m,n)\in\mathcal{U}^{(i_t)}}{\mathrm{argmax}}|b_{L,m}^{(i_t)}[n]-b_{U,m}^{(i_t)}[n]|,
\end{equation}
where $b_{L,m}^{(i_t)}[n]$ and $b_{U,m}^{(i_t)}[n]$ denote the $n$-th element in the $m$-th column of $\mathbf{B}_L^{(i_t)}$ and $\mathbf{B}_U^{(i_t)}$, respectively.
Then, the feasible set of $b_{m^{(t)}}[n^{(t)}]$, i.e., $\mathcal{B}_{m^{(t)},n^{(t)}}^{(i_t)}$ is divided into two subsets $(\mathcal{B}_{m^{(t)},n^{(t)}})_0=\{0\}$ and $(\mathcal{B}_{m^{(t)},n^{(t)}})_1=\{1\}$. In the $t$-th iteration, resource allocation problems $\mathcal{P}_i, i\in\{0,1\}$, corresponding to the two child nodes are given by
\begin{eqnarray}
\label{LB_Problem_branch}
    \mathcal{P}_i:&&\hspace*{-4mm}\underset{\mathbf{X},\mathbf{W},\mathbf{B},\mathbf{U},\mathbf{V}}{\mino}\hspace*{2mm}\widebar{P}(\mathbf{W},\mathbf{B})\notag\\[-2pt]
    &&\hspace*{-2mm}\mbox{s.t.}\hspace*{2mm} \mbox{C1a},\mbox{C1b}, \overline{\mbox{C2}},{\mbox{C3}},\mbox{C5},\mbox{C6a},\mbox{C6b},\\[-2pt]
    &&\hspace*{-12mm}\widetilde{\mbox{C4a}}\hspace{-0.5mm}:\hspace{-0.5mm}b_m[n]\hspace{-0.5mm}\in\hspace{-0.5mm}{\mathcal{B}}_{m,n}^{(i_t)},\, \forall (m,n)\hspace{-0.5mm}\neq\hspace{-0.5mm}(m^{(t)},n^{(t)}),\,\widetilde{\mbox{C4b}}\hspace{-0.5mm}:\hspace{-0.5mm}b_{m^{(t)}}[n^{(t)}]\hspace{-1mm}=\hspace{-1mm}i\notag.
\end{eqnarray}
Note that constraint $\widetilde{\mbox{C4a}}$ is non-convex since it contains undetermined binary variables $b_{m}[n]$ to be optimized in future iterations. Here, we relax the feasible set of the undetermined binary variables $b_{m}[n]$ into the corresponding convex hull, i.e., $\widetilde{\mathcal{B}}_{m,n}^{(i_t)}=\{b_{m}[n]\,|\,0\leq b_{m}[n]\leq 1\}$, $\forall(m,n)\in\mathcal{U}^{(i_t)}\setminus(m^{(t)},n^{(t)}) $. Then, the lower bound of \eqref{LB_Problem_branch} is obtained by solving the following relaxed problems $\widetilde{\mathcal{P}}_i$, $i\in\{0,1\}$
\begin{eqnarray}
\label{LB_Problem_branch_LB}
    \widetilde{\mathcal{P}}_i:&&\hspace*{-4mm}\underset{\mathbf{X},\mathbf{W},\mathbf{B},\mathbf{U},\mathbf{V}}{\mino}\hspace*{2mm}\widebar{P}(\mathbf{W},\mathbf{B})\notag\\[-1mm]
    &&\hspace*{-2mm}\mbox{s.t.}\hspace*{2mm} \mbox{C1a},\mbox{C1b}, \overline{\mbox{C2}},{\mbox{C3}},\widetilde{\mbox{C4b}},\mbox{C5},\mbox{C6a},\mbox{C6b}\notag,\\[-0mm]
    &&\hspace*{4mm}\widetilde{\mbox{C4c}}:b_m[n]\in{\mathcal{B}}_{m,n}^{(i_t)},\,
    \forall (m,n)\in\mathcal{D}^{(i_t)},\\[-0mm]
    &&\hspace*{4mm}\widetilde{\mbox{C4d}}: 0 \leq b_{m}[n]\leq 1, \forall (m,n)\in\mathcal{U}^{(i_t)}\setminus(m^{(t)},n^{(t)}).\notag
\end{eqnarray}
Note that $\widetilde{\mathcal{P}}_i$ is a convex optimization problem and can be optimally solved by CVX. Here, the optimal solutions of relaxed problems $\widetilde{\mathcal{P}}_{0}$ and $\widetilde{\mathcal{P}}_{1}$ are denoted as $(\mathbf{B}^{(t)}_{L,0},\mathbf{W}^{(t)}_{L,0})$ and $(\mathbf{B}^{(t)}_{L,1},\mathbf{W}^{(t)}_{L,1})$, respectively. The lower bounds of problems $\widetilde{\mathcal{P}}_{0}$ and $\widetilde{\mathcal{P}}_{1}$ are given by the objective function values $F_{L,0}^{(t)}=\widebar{P}(\mathbf{W}^{(t)}_{L,0},\mathbf{B}^{(t)}_{L,0})$ and $F_{L,1}^{(t)}=\widebar{P}(\mathbf{W}^{(t)}_{L,1},\mathbf{B}^{(t)}_{L,1})$, respectively. 

Moreover, based on optimal solutions $\mathbf{B}^{(t)}_{L,0}$ and $\mathbf{B}^{(t)}_{L,1}$ of the relaxed problems, we can obtain the corresponding binary solution $\mathbf{B}^{(t)}_{U,0}$ and $\mathbf{B}^{(t)}_{U,1}$ based on \textbf{Algorithm 1}. Then, we solve the problem in \eqref{LB_Problem_branch} by fixing the binary variables according to $\mathbf{B}=\mathbf{B}^{(t)}_{U,0}$ and $\mathbf{B}=\mathbf{B}^{(t)}_{U,1}$ and obtain the corresponding upper bounds $F_{U,0}^{(t)}$ and $F^{(t)}_{U,1}$, respectively. After deriving these lower and upper bounds, we expand the BnB tree $\mathcal{T}^{(t-1)}$ by introducing two new child nodes of the selected node $\mathcal{N}_{i_t}$ corresponding to $b_{m^{(t)}}[n^{(t)}]=0$ and $b_{m^{(t)}}[n^{(t)}]=1$, respectively. 
Let $\mathcal{T}^{(t)}$ denote the BnB tree after introducing the new child nodes. Then, the global lower and upper bounds in the $t$-th iteration, i.e., $\mathrm{LB}^{(t)}$ and $\mathrm{UB}^{(t)}$, are given by the smallest lower and upper bound among all external nodes in $\mathcal{T}^{(t)}$.
Next, the search tree is pruned by discarding the nodes whose lower bounds are worse than the current upper bound $\mathrm{UB}^{(t)}$, as shown in Fig. \ref{fig:BnB}.
\setlength{\textfloatsep}{0pt}
\begin{algorithm}[t]
\caption{Proposed BnB-based Algorithm for Perfect CSI}
\begin{algorithmic}[1]
\small
\STATE Solve problem in \eqref{LB_Problem} to obtain optimal solutions $\mathbf{B}_L^{(0)}$ and $\mathbf{W}_L^{(0)}$. Initialize lower bound $\mathrm{LB}=F_L^{(0)}$.
\STATE Compute the binary solution $\mathbf{B}_U^{(0)}$ according to \textbf{Algorithm 1} and obtain the upper bound $\mathrm{UB}=F_U^{(0)}$ by solving problem in \eqref{LB_Problem} for fixed $\mathbf{B}=\mathbf{B}_U^{(0)}$.
\STATE Initialize BnB tree $\mathcal{T}^{(0)}=\{\mathcal{N}_0\}$. Set convergence tolerance $0\leq\Delta_{\mathrm{BnB}}$ and iteration index $t=0$.
\REPEAT
\STATE $t=t+1$,
\STATE Select the external node $\mathcal{N}_{i_t}$ with the smallest lower bound.
\STATE Obtain the branching index $(m^{(t)},n^{(t)})$. Divide feasible set $\mathcal{B}^{(i_t)}$ into two subsets corresponding to $b_{m^{(t)}}[n^{(t)}]=0$ and $b_{m^{(t)}}[n^{(t)}]=1$, respectively.    
\STATE Solve the relaxed version of the two subproblems $\mathcal{P}_{0}$ and $\mathcal{P}_{1}$ in \eqref{LB_Problem_branch}. Store the corresponding lower bounds $F_{L,0}^{(t)}$ and $F_{L,1}^{(t)}$.
\STATE Compute the binary solution $\mathbf{B}_{U,0}^{(t)}$ and $\mathbf{B}_{U,1}^{(t)}$ based on $\mathbf{B}_{L,0}^{(t)}$ and $\mathbf{B}_{L,1}^{(t)}$, respectively, according to \textbf{Algorithm 1}.
\STATE Solve the problem in \eqref{LB_Problem_branch} by fixing $\mathbf{B}=\mathbf{B}_{U,0}^{(t)}$ and $\mathbf{B}=\mathbf{B}_{U,1}^{(t)}$. Store the upper bounds $F_{U,0}^{(t)}$ and $F_{U,1}^{(t)}$.
\STATE Expand the tree by adding the two child nodes corresponding to $b_{m^{(t)}}[n^{(t)}]=0$ and $b_{m^{(t)}}[n^{(t)}]=1$ for selected node $\mathcal{N}_{i_t}$. 
\STATE Update $\mathrm{LB}^{(t)}$ and $\mathrm{UB}^{(t)}$ as the smallest lower bound and upper bound among the external nodes in the tree.
\UNTIL $\mathrm{UB}^{(t)}-\mathrm{LB}^{(t)}\leq \Delta_{\mathrm{BnB}}$
\end{algorithmic}
\end{algorithm}
\subsubsection{Overall Algorithm:}
 The complete BnB procedure is outlined in \textbf{Algorithm 2}. Some further remarks are as follows:\\
       \textit{Optimality and convergence:} The set partitioning, node branching, and bound update steps are iteratively performed to reduce the difference between the lower and upper bounds. According to \cite{horst2013global}, the proposed BnB-based algorithm is guaranteed to converge within a finite number of iterations to an $\epsilon$-optimal solution for a given convergence tolerance. \\
    \textit{Complexity:}\hspace{-1mm} {\color{black} In each iteration of \textbf{Algorithm 2}, we add two new nodes to the BnB tree. For each new node $i$, we have to obtain the corresponding upper and lower bounds. In particular, to determine a lower bound, we have to solve optimization problem $\widetilde{\mathcal{P}}_i$ involving the semidefinite programming (SDP) constraint C6a with dimension $(N+M+K)$. Thus, the computational complexity of solving the optimization problem for the lower bound in each iteration is given by $\mathcal{O}((N+M+K)^3+(N+M+K)^2)$\cite[Theorem 3.12]{bomze2010interior},
    where $\mathcal{O}(\cdot)$ is the big-O notation.
    On the other hand, to obtain the upper bound, we have to first solve the optimization problem in \eqref{B_feasible} to attain a feasible $\mathbf{B}$, and then solve (32) with the obtained $\mathbf{B}$. To solve the problem in \eqref{B_feasible}, we apply the SCA-based \textbf{Algorithm 1}, which iteratively solves the convex quadratic constraint quadratic programming (QCQP) problem in (30) involving $\frac{M\times (M-1)}{2}$ QCQP constraints of dimension $(M\times N)$. 
Based on \cite{boyd2004convex}, the QCQP problem in (30) is first reformulated into an equivalent SDP problem comprising $\frac{M\times (M-1)}{2}$ SDP constraints of dimension $(M\times N+1)$. Thus, the computational complexity of finding a feasible solution with \textbf{Algorithm 1} for each node is given by $\mathcal{O}(I_{\mathrm{itr}}((\frac{M\times (M-1)}{2})(M\times N+1)^3+(\frac{M\times (M-1)}{2})^2(M\times N+1)^2+(M\times N+1)^3))$
, where $I_{\mathrm{itr}}$ denotes the number of SCA iterations\cite[Theorem 3.12]{bomze2010interior}. Moreover, since for a given $\mathbf{B}$ the optimization problem in (32) has a similar form as $\widetilde{\mathcal{P}}_i$ for the lower bound, the computational complexity of solving optimization problem ${\mathcal{P}}_i$ for the upper bound in each iteration is also given by $\mathcal{O}((N+M+K)^3+(N+M+K)^2)$\cite[Theorem 3.12]{bomze2010interior}. Thus, the overall computational complexity for each new node is given by $\mathcal{O}(I_{\mathrm{itr}}((\frac{M\times (M-1)}{2})(M\times N+1)^3+(\frac{M\times (M-1)}{2})^2(M\times N+1)^2+(M\times N+1)^3)+2(N+M+K)^3+2(N+M+K)^2)$.}
    {\color{black} Furthermore, we note that our simulation results in Section \rom{5} demonstrate that the proposed BnB-based algorithm requires significantly fewer iterations to converge compared to an exhaustive search (ES) \cite{horst2013global}, even though the worst-case computational complexity scales exponentially with the number of MA elements \cite{horst2013global} since the optimization problem in \eqref{Reform_Problem} is a typical NP-hard problem.}

\subsection{Proposed SCA-based Algorithm}
To strike a balance between complexity and optimality, we develop a suboptimal SCA-based algorithm with reduced complexity. Following a similar procedure as in Section \rom{3}-C for deriving the initial upper bound, we recast binary constraint C4 as the two convex constraints, $\overline{\mbox{C4}}$ and C4a. Then, we resort to the penalty method \cite{ng2015secure} to tackle DC constraint C4a and rewrite the problem in \eqref{Reform_Problem} as follows 
\begin{eqnarray}
\label{Penalty_problem_bnb}
    &&\hspace*{-6mm}\underset{\mathbf{X},\mathbf{W},\mathbf{B},\mathbf{U},\mathbf{V}}{\mino}\hspace*{1mm}\widebar{P}(\mathbf{W},\mathbf{B})+\frac{1}{\mu}\sum_{m,n}(b_m[n]-b_m^2[n])\\[-1pt]
    &&\hspace*{0mm}\mbox{s.t.}\hspace*{12mm} \mbox{C1a},\mbox{C1b}, \overline{\mbox{C2}},{\mbox{C3}},\overline{\mbox{C4}},\mbox{C5},\mbox{C6a},\mbox{C6b}\notag,
\end{eqnarray}
where $\mu>0$ is a sufficiently small penalty factor such that problem \eqref{Reform_Problem} and problem \eqref{Penalty_problem_bnb} are equivalent \cite{le2012exact}. 
By applying the SCA technique, we formulate the optimization problem in the $(j+1)$-th iteration of the SCA algorithm as
\begin{eqnarray}
\label{SCA_problem_BnB}
    &&\hspace*{-8mm}\underset{\substack{\mathbf{X},\mathbf{W},\mathbf{B},\\\mathbf{U},\mathbf{V}}}{\mino}\hspace*{1mm}\widebar{P}(\mathbf{W},\hspace*{-0mm}\mathbf{B})\hspace*{-0.5mm}+\hspace*{-0.5mm}\frac{1}{\mu}\hspace*{-0.5mm}\sum_{m,n}\hspace*{-0.5mm}b_m[n]\hspace*{-0.5mm}-\hspace*{-0.5mm}2b_m^{(j)}[n]b_m[n]\hspace*{-0.5mm}+\hspace*{-0.5mm}(b_m^{(j)}[n])^2\hspace*{-2mm}\notag\\[-0pt]
    &&\hspace*{0mm}\mbox{s.t.}\hspace*{8mm}\mbox{C1a},\mbox{C1b}, \overline{\mbox{C2}},{\mbox{C3}},\overline{\mbox{C4}},\mbox{C5},\mbox{C6a},\mbox{C6b},
\end{eqnarray}
where $b_m^{(j)}[n]$ is the solution for $b_m[n]$ in the $j$-th iteration. 
Note that problem \eqref{SCA_problem_BnB} is convex and can be optimally solved. 
The proposed iterative SCA algorithm is summarized in \textbf{Algorithm 3}\footnote{\color{black}We note that the proposed BnB-based and SCA-based algorithms are not limited to the field response channel model. The proposed algorithms can be easily extended to other multipath channel models, e.g., Rician channels in \cite{hu2024movable,hu2024two} by slightly modifying the definition of the effective channel vectors.}. Some further remarks are given as follows:\\
   \textit{Initial point:} A matrix, $\mathbf{B}$, satisfying constraints $\overline{\mbox{C2}},{\mbox{C3}},\overline{\mbox{C4}},$ and C5 is randomly generated.\\
 \textit{Optimality and convergence:} 
    The solution of problem \eqref{SCA_problem_BnB} provides an upper limit for the problem in \eqref{Reform_Problem}. Through solving \eqref{SCA_problem_BnB} iteratively, we progressively tighten this upper limit. Note that the proposed suboptimal algorithm converges to a locally optimal solution of \eqref{Reform_Problem} \cite{razaviyayn2013unified}.\\
    \textit{Complexity:} In each iteration, we only need to solve one convex programming problem. Moreover, \textbf{Algorithm 3} is guaranteed to converge in a polynomial number of iterations\cite{razaviyayn2013unified}. Thus, \textbf{Algorithm 3} exhibits polynomial computational complexity. Given that the problem in \eqref{SCA_problem_BnB} involves $K$ second-order cone (SOC) constraints involving optimization variable $\mathbf{X}\in\mathbb{C}^{N\times K}$ with a cone dimension of $\mathcal{O}(K)$, the computational complexity of each iteration of the proposed SCA-based algorithm is given by $\mathcal{O}(\log\frac{1}{\rho}(K^{\frac{7}{2}}N^3+K^{\frac{7}{2}}N^2+K^{\frac{7}{2}}))$\cite[Theorem 3.11]{bomze2010interior}.\\[-12pt]
\begin{algorithm}[t]
\caption{Proposed SCA-based Algorithm for Perfect CSI}
\begin{algorithmic}[1]
\small
\STATE Set iteration index $j=0$ and generate a feasible $\mathbf{B}^{(0)}$. Set $0<\Delta_{\mathrm{SCA}}\ll 1$ and penalty factor $0<\mu\ll 1$
\REPEAT
\STATE Set $j=j+1$. Solve \eqref{SCA_problem_BnB} for a given $\mathbf{B}^{(i-1)}$ and update $\mathbf{B}^{(i)}$ as the optimal solution $\mathbf{B}$ of \eqref{SCA_problem_BnB}
\UNTIL $\frac{\|\mathbf{B}^{(j-1)}-\mathbf{B}^{(j)}\|_F}{\|\mathbf{B}^{(j-1)}\|_F}\leq \Delta_{\mathrm{SCA}}$
\end{algorithmic}
\end{algorithm}
\section{Algorithm Design for Imperfect CSI}\vspace*{-0mm}
In this section, we reformulate the problem in \eqref{Ori_Problem1} accounting for imperfect CSI based on the norm-bounded error model in \eqref{CSI_model}. 
Accordingly, we suitably modify the BnB-based and SCA-based methods proposed in Section \rom{3}, respectively, to tackle the new robust resource allocation problem.\vspace{-1mm}
\subsection{Problem Reformulation}


Considering the problem in \eqref{Ori_Problem}, the robust BS beamforming and MA position optimization problem is formulated as follows
\begin{eqnarray}
\label{Ori_Problem_robust}
    &&\hspace*{-6mm}\underset{\mathbf{W},\mathbf{B}}{\mino}\hspace*{2mm}\widebar{P}(\mathbf{W}, \mathbf{B})\notag\\[-3pt]
    &&\hspace*{-6mm}\mbox{s.t.}\hspace*{1mm} \widehat{\mbox{C1}}:\hspace*{1mm} \min_{\Delta{\bm{\Psi}}_k\in\Omega_{\Delta{\bm{\Psi}}_k}}\frac{|\hat{\mathbf{h}}_{k}^H\mathbf{B}\mathbf{w}_k|^2}{\sum_{k'\neq k}|\hat{\mathbf{h}}_{k}^H\mathbf{B}\mathbf{w}_{k'}|^2+\sigma_k^2}\geq \gamma_{k},\hspace*{1mm}\forall k\in \mathcal{K},\notag\\[-2pt]
    &&\hspace*{-2mm}\overline{\mbox{C2}},\mbox{C3},\mbox{C4},\mbox{C5},
\end{eqnarray}
where constraint $\widehat{\mbox{C1}}$ includes infinitely many non-convex inequality constraints due to the continuity of the CSI uncertainty sets, and cannot be reformulated to a convex constraint by Lemma 3. Instead, the numerator of the SINR constraint $\widehat{\mbox{C1}}$ can be rewritten as $|\hat{\mathbf{h}}_{k}^H\mathbf{B}\mathbf{w}_k|^2=\hat{\mathbf{h}}_{k}^H{\mathbf{B}}\mathbf{W}_k{\mathbf{B}}^H\hat{\mathbf{h}}_{k},$
where $\mathbf{W}_k=\mathbf{w}_k\mathbf{w}_k^H$. Also, the denominator can be rewritten in a similar manner, and we can recast constraint $\widehat{\mbox{C1}}$ as follows
\begin{equation}
\begin{aligned}
    \widehat{\mbox{C1}}\Leftrightarrow \min_{\hat{\mathbf{h}}_k\in\Omega_{\hat{\mathbf{h}}_k}}\sigma_k^2\gamma_k+\hat{\mathbf{h}}_{k}^H{\mathbf{B}}\left(\gamma_k\sum_{k'\neq k}{\mathbf{W}}_{k'}-{\mathbf{W}}_{k}\right){\mathbf{B}}^H\hat{\mathbf{h}}_{k}\leq 0.\notag
\end{aligned}
\end{equation}
By substituting $\hat{\mathbf{h}}_k=\mathbf{G}_k^H(\bar{\bm{\psi}}_k+\Delta\bm{\psi}_k)$,  we can rewrite $\widehat{\mbox{C1}}$ as
\begin{equation}
\begin{aligned}
&{\bar{\bm{\psi}}}_{k}^H\mathbf{G}_k{\mathbf{B}}\widetilde{\mathbf{W}}_k{\mathbf{B}}^H\mathbf{G}^H_k{\bar{\bm{\psi}}}_k\hspace{-1mm}+\hspace{-1mm}2\mathrm{Re}\{{\bar{\bm{\psi}}}_{k}^H\mathbf{G}_k{\mathbf{B}}\widetilde{\mathbf{W}}_k{\mathbf{B}}^H\mathbf{G}_k^H\Delta{\bm{\psi}}_{k}\}\\[-2pt]
&+\Delta{{\bm{\psi}}}_k^H\mathbf{G}_k{\mathbf{B}}\widetilde{\mathbf{W}}_k{\mathbf{B}}^H\mathbf{G}_k^H\Delta{{\bm{\psi}}}_k+\sigma_k^2\gamma_k\leq 0,\,\forall\|\Delta {\bm{\psi}}_{k}\|_2^2\leq \epsilon_k^2,\notag 
\end{aligned}
\end{equation}
where $\widetilde{\mathbf{W}}_k=\gamma_k\sum_{k'\in\mathcal{K}\setminus\{k\}}{\mathbf{W}}_{k'}-{\mathbf{W}}_k$. Here, constraint $\widehat{\mbox{C1}}$ consists of infinitely many inequality constraints due to the continuity of the CSI uncertainty sets. To facilitate algorithm design, we apply following lemma to convexify constraint $\widehat{\mbox{C1}}$.
\begin{lemma}\label{S_procedure}
    (Generalized S-Lemma \cite{boyd2004convex}): Let $f_i(\mathbf{x}),$ $\forall i\in\{0,\dots,I\}$, be a real-valued function of vector $\mathbf{x}\in\mathbb{C}^{N\times 1}$ and be defined as $f_i(\mathbf{x})=\mathbf{x}^H\mathbf{A}_i\mathbf{x}+2\operatorname{Re}\{\mathbf{a}_i^H\mathbf{x}\}+a_i,$
    where $\mathbf{A}_i\in\mathbb{S}^N$, $\mathbf{a}_i\in\mathbb{C}^{N\times 1}$, and $a_i\in\mathbb{R}$, $\forall i\in\{0,\cdots,I\}$. Then, the implication $f_i(\mathbf{x})\leq 0,\, \forall i\in\{1,\cdots,I\}\Rightarrow f_0(\mathbf{x})\leq 0$ holds if and only if there exist some real numbers $\lambda_i\geq 0$, $\forall i\in\{1,\cdots,I\}$, such that
    \begin{equation}
        \sum_{i=1}^I\lambda_i
    \begin{bmatrix}
        \mathbf{A}_i & \mathbf{a}_i \\
        \mathbf{a}_i^H & a_i
    \end{bmatrix}-
     \begin{bmatrix}
        \mathbf{A}_0 & \mathbf{a}_0 \\
        \mathbf{a}_0^H & a_0
    \end{bmatrix}\succeq \mathbf{0}.
    \end{equation}\vspace{-2mm}
\end{lemma}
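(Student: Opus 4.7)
The plan is to prove the two implications separately: the sufficiency direction (LMI feasibility implies the quadratic implication) is a direct algebraic manipulation, whereas the necessity direction (the implication implies LMI feasibility with some $\lambda_i \geq 0$) will require a semidefinite relaxation together with a strong-duality argument tailored to complex Hermitian quadratic forms.

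For the sufficiency direction, I would lift each $f_i$ to a homogeneous Hermitian quadratic form in the augmented vector $\mathbf{z} = [\mathbf{x}^T,\, 1]^T \in \mathbb{C}^{N+1}$. Setting $\mathbf{M}_i = \begin{bmatrix} \mathbf{A}_i & \mathbf{a}_i \\ \mathbf{a}_i^H & a_i \end{bmatrix}$, a direct computation gives $f_i(\mathbf{x}) = \mathbf{z}^H \mathbf{M}_i \mathbf{z}$. Pre- and post-multiplying the assumed LMI by $\mathbf{z}^H$ and $\mathbf{z}$ yields $f_0(\mathbf{x}) \leq \sum_{i=1}^{I} \lambda_i f_i(\mathbf{x})$. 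Since each $\lambda_i \geq 0$ and each $f_i(\mathbf{x}) \leq 0$ by hypothesis, the right-hand side is nonpositive, which is precisely the desired conclusion $f_0(\mathbf{x}) \leq 0$.

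For the necessity direction, I would consider the auxiliary optimization problem $\sup_{\mathbf{x} \in \mathbb{C}^N} f_0(\mathbf{x})$ subject to $f_i(\mathbf{x}) \leq 0$, $i \in \{1,\dots,I\}$. The hypothesis that $f_0 \leq 0$ on the feasible set is equivalent to this supremum being at most zero. Using the lifting $\mathbf{X} = \mathbf{z}\mathbf{z}^H$ with the same $\mathbf{z}$ as above and dropping the rank-one constraint produces an SDP relaxation whose Lagrangian dual is precisely the LMI in the statement of the lemma. The desired $\lambda_i$ then emerge as optimal dual multipliers, provided I can establish strong duality and the losslessness of the SDP relaxation.

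The main obstacle is securing the zero duality gap, which does not hold for generic quadratically constrained QCQPs but is classical in the complex Hermitian setting thanks to a hidden convexity phenomenon, specifically the convexity of the joint numerical range of Hermitian forms in the spirit of the Toeplitz--Hausdorff theorem. I would invoke this convexity for the set $\{(f_1(\mathbf{x}),\dots,f_I(\mathbf{x}),f_0(\mathbf{x})) : \mathbf{x} \in \mathbb{C}^N\} + \mathbb{R}_+^{I+1}$, apply a separating-hyperplane argument against the open negative orthant, and combine this with a Slater-type constraint qualification (which is satisfied in the robust-design application, since the CSI uncertainty set $\{\Delta\bm{\psi}_k : \|\Delta\bm{\psi}_k\|_2^2 \leq \epsilon_k^2\}$ has nonempty interior) to extract the required nonnegative multipliers. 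This essentially specializes the general treatment of the $\mathcal{S}$-procedure in \cite[App.~B]{boyd2004convex} to the complex Hermitian case needed here.
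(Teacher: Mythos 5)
The paper offers no proof of this lemma at all --- it is quoted verbatim from \cite{boyd2004convex} --- so the only question is whether your argument is sound on its own. Your sufficiency direction is correct and is the standard homogenization: with $\mathbf{z}=[\mathbf{x}^T,\,1]^T$ one indeed has $f_i(\mathbf{x})=\mathbf{z}^H\mathbf{M}_i\mathbf{z}$, and sandwiching the assumed LMI between $\mathbf{z}^H$ and $\mathbf{z}$ gives $f_0(\mathbf{x})\le\sum_{i=1}^I\lambda_i f_i(\mathbf{x})\le 0$. No issues there.

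The necessity direction contains a genuine gap, and it is not one that can be patched for the statement as written. Losslessness of the S-procedure rests on convexity of the joint range of the $I+1$ quadratic maps $(f_1,\dots,f_I,f_0)$, and the hidden-convexity results you invoke do not extend to arbitrary $I$: the Toeplitz--Hausdorff theorem (and its inhomogeneous analogue due to Fradkov--Yakubovich and Polyak) gives convexity for a \emph{pair} of complex quadratic forms, i.e.\ $I=1$; the Au-Yeung--Poon extension to three homogeneous Hermitian forms covers at most $I=2$; and for $I\ge 3$ the joint numerical range is in general non-convex and the S-procedure is genuinely lossy. Consequently the ``only if'' part of the lemma is false for general $I$, and your separating-hyperplane step cannot be completed because the set you propose to separate from the negative orthant need not be convex. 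A secondary point: even for $I=1$ the equivalence requires a Slater point ($f_1(\hat{\mathbf{x}})<0$ for some $\hat{\mathbf{x}}$), which the lemma statement omits; you correctly observe that it holds in the paper's application, but it must be added as a hypothesis for the claim to be true. Fortunately the paper only ever applies the lemma with a single constraint (the norm-bounded PCV error set), where only $f_0$ and $f_1$ appear; if you restrict your necessity argument to $I=1$, add the Slater hypothesis, and cite the Fradkov--Yakubovich convexity of the range of a pair of complex quadratics in place of the general joint-numerical-range claim, the proof goes through and fully justifies the equivalent LMI reformulation of the robust SINR constraint. As a proof of the lemma for arbitrary $I$, it does not.
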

{\color{black} Here, we define $f_0(\Delta {\bm{\psi}}_{k})=\Delta{{\bm{\psi}}}_k^H\mathbf{A}_0\Delta{{\bm{\psi}}}_k+2\operatorname{Re}\{\mathbf{a}_0^H\Delta{{\bm{\psi}}}_k\}+a_0\leq 0$ and $f_1(\Delta {\bm{\psi}}_{k})=\Delta {\bm{\psi}}_{k}^H\mathbf{A}_1\Delta {\bm{\psi}}_{k}+a_1$, where $\mathbf{A}_0=\mathbf{G}_k{\mathbf{B}}\widetilde{\mathbf{W}}_k{\mathbf{B}}^H\mathbf{G}^H_k$, $\mathbf{a}_0^H={\bar{\bm{\psi}}}_{k}^H\mathbf{G}_k{\mathbf{B}}\widetilde{\mathbf{W}}_k{\mathbf{B}}^H\mathbf{G}_k^H$, \\$a_0={\bar{\bm{\psi}}}_{k}^H\mathbf{G}_k{\mathbf{B}}\widetilde{\mathbf{W}}_k{\mathbf{B}}^H\mathbf{G}^H_k{\bar{\bm{\psi}}}_k+\sigma_k^2\gamma_k$, $\mathbf{A}_1=\mathbf{I}_{L_k}$, and $\mathbf{a}_1=-\epsilon_k^2$.  Thus, by leveraging Lemma 1, SINR constraint $\widehat{\mbox{C1}}$ can be rewritten equivalently as follows
\begin{equation}\label{LMI_SINR}
\begin{aligned}
    &\hspace{-4mm}q_k\begin{bmatrix}
       \mathbf{I}_{L_k} & \mathbf{0}_{L_k} \\
        \mathbf{0}_{L_k}^H &-\epsilon_k^2
    \end{bmatrix}-\\
&\hspace{-4mm}\begin{bmatrix}
       \mathbf{G}_k{\mathbf{B}}\widetilde{\mathbf{W}}_k{\mathbf{B}}^H\mathbf{G}^H_k & \mathbf{G}_k{\mathbf{B}}\widetilde{\mathbf{W}}_k{\mathbf{B}}^H\mathbf{G}_k^H{\bar{\bm{\psi}}}_{k} \\
        {\bar{\bm{\psi}}}_{k}^H\mathbf{G}_k{\mathbf{B}}\widetilde{\mathbf{W}}_k{\mathbf{B}}^H\mathbf{G}_k^H &\hspace{-2mm}{\bar{\bm{\psi}}}_{k}^H\mathbf{G}_k{\mathbf{B}}\widetilde{\mathbf{W}}_k{\mathbf{B}}^H\mathbf{G}^H_k{\bar{\bm{\psi}}}_k+\sigma_k^2\gamma_k
    \end{bmatrix}\succeq \mathbf{0},\\
    \Leftrightarrow&\mathbf{P}_k-
\widetilde{\mathbf{G}}_k^H{\mathbf{B}}\widetilde{\mathbf{W}}_k{\mathbf{B}}^H\widetilde{\mathbf{G}}_k\succeq \mathbf{0}, \forall k,
\end{aligned}    
\end{equation}
where 
\begin{equation}
    \begin{aligned}
        \mathbf{P}_k=
    \begin{bmatrix}
        q_k\mathbf{I}_{L_k} & \mathbf{0} \\
        \mathbf{0} & -q_{k}\epsilon_{k}^2-\sigma_k^2\gamma_k
    \end{bmatrix},\,\widetilde{\mathbf{G}}_k=[\mathbf{G}_{k}\quad \mathbf{G}_{k}{\bar{\bm{\psi}}}_{k}^H],
    \end{aligned}
\end{equation}
and $q_k\geq 0$ is an auxiliary optimization variable.}
We note that SINR constraint $\widehat{\mbox{C1}}$ is non-convex w.r.t. $\mathbf{B}$ and ${\mathbf{W}}_k$ due to the coupling between $\mathbf{B}$ and ${\mathbf{W}}_k$. Next, we define a new optimization variable $\hat{\mathbf{X}}_k={\mathbf{B}}{\mathbf{W}}_k{\mathbf{B}}^H$ to recast $\eqref{LMI_SINR}$ as follows
\begin{equation}
    \widehat{\overline{\mbox{C1}}}:\hspace*{1mm} 
    \mathbf{P}_k-
    \widetilde{\mathbf{G}}_k^H\widetilde{\mathbf{X}}_k\widetilde{\mathbf{G}}_k\succeq \mathbf{0}, \forall k,
\end{equation}
where $\widetilde{\mathbf{X}}_k=\gamma_k\sum_{k'\in\mathcal{K}\setminus\{k\}}\hat{\mathbf{X}}_{k'}-\hat{\mathbf{X}}_k$. Note that constraint $\widehat{\overline{\mbox{C1}}}$ is convex w.r.t. variables $\hat{\mathbf{X}}_{k}, \forall k$. For the sake of notational simplicity, we define $\hat{\mathbf{X}}=[\hat{\mathbf{X}}_1,\cdots,\hat{\mathbf{X}}_K]$ to collect all $\hat{\mathbf{X}}_k$, $\forall k$. Then,
\eqref{Ori_Problem_robust} can be recast as the following optimization problem:
\begin{eqnarray}
\label{Ori_Problem_robust_recast}
    &&\hspace*{-10mm}\underset{{\mathbf{W}},\mathbf{B},\hat{\mathbf{X}}}{\mino}\hspace*{2mm}\widebar{P}(\mathbf{W},\mathbf{B})\notag\\[-4pt]
    &&\hspace*{-0mm}\mbox{s.t.}\hspace*{2mm} \widehat{\overline{\mbox{C1}}},\overline{\mbox{C2}},\mbox{C3-C5},{\mbox{C7}}:\hspace*{0mm} \hat{\mathbf{X}}_k={\mathbf{B}}{\mathbf{W}}_k{\mathbf{B}}^H,\hspace*{1mm}\forall k\in \mathcal{K}.
\end{eqnarray}
Note that C7 is a non-convex constraint due to the coupled variables $\mathbf{W}_k$ and ${\mathbf{B}}$. In particular, C7 is neither bilinear nor biconvex. However, thanks to the binary nature of ${\mathbf{B}}$, we are able to convexify C7 by exploiting the following lemma.
\begin{lemma}
    Equality constraint C7 is equivalent to the following LMI constraints, 
    \begin{eqnarray}\label{sdp_robust}
&&\hspace*{-6mm}{\mathrm{C7a}}\mbox{:}\hspace*{-1mm}
   \begin{bmatrix}
        \hat{\mathbf{S}}_k & \hat{\mathbf{X}}_k & {\mathbf{B}}\\
        \hat{\mathbf{X}}_k^H & \hat{\mathbf{T}}_k & \mathbf{Y}_k\\
        {\mathbf{B}}^H & \mathbf{Y}_k^H & \mathbf{I}_M
    \end{bmatrix}\succeq \mathbf{0},\hspace*{1mm}{{{\mathrm{C7c}}}}\mbox{:}\hspace*{-1mm}  \begin{bmatrix}
        \mathbf{U}_k & {\mathbf{Y}}_k & {\mathbf{B}}\\
        {\mathbf{Y}}_k^H & \mathbf{V}_k & \mathbf{W}_k\\
        {\mathbf{B}}^H & \mathbf{W}_k^H & \mathbf{I}_M
    \end{bmatrix}\succeq \mathbf{0},\notag\\[-2pt]
&&\hspace*{-6mm}{{{\mathrm{C7b}}}}\mbox{:}\hspace{1mm}\mathrm{Tr}\big(\hat{\mathbf{S}}_k\big)-M\leq 0,
    \hspace*{3mm}{{{\mathrm{C7d}}}}\mbox{:}\hspace{1mm}\mathrm{Tr}\big({\mathbf{U}}_k\big)-M\leq 0,
\end{eqnarray}
where $\hat{\mathbf{S}}_k\in\mathbb{C}^{N\times N}$, $\hat{\mathbf{T}}_k\in\mathbb{C}^{N\times N}$, $\mathbf{Y}_k\in\mathbb{C}^{N\times M}$, $\mathbf{U}_k\in\mathbb{C}^{N\times N}$, and $\mathbf{V}_k\in\mathbb{C}^{M\times M}$ are the auxiliary variables.\vspace{-2mm}
\end{lemma}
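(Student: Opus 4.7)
The plan is to reduce the triple product $\hat{\mathbf{X}}_k = \mathbf{B}\mathbf{W}_k\mathbf{B}^H$ to a sequence of two bilinear equalities and then apply the technique of Lemma 1 to each in turn. Specifically, introducing the intermediate matrix $\mathbf{Y}_k \in \mathbb{C}^{N \times M}$, constraint C7 is equivalent to the pair
\begin{equation*}
\mathbf{Y}_k \;=\; \mathbf{B}\mathbf{W}_k \qquad \text{and} \qquad \hat{\mathbf{X}}_k \;=\; \mathbf{B}\mathbf{Y}_k^H,
\end{equation*}
whose composition recovers $\hat{\mathbf{X}}_k = \mathbf{B}\mathbf{W}_k^H \mathbf{B}^H = \mathbf{B}\mathbf{W}_k\mathbf{B}^H$, using the Hermitian symmetry of $\mathbf{W}_k = \mathbf{w}_k\mathbf{w}_k^H$.

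First, I would apply Lemma 1 verbatim to the equality $\mathbf{Y}_k = \mathbf{B}\mathbf{W}_k$, with $\mathbf{Y}_k$ playing the role of $\mathbf{X}$, which produces exactly C7c and C7d with auxiliary variables $\mathbf{U}_k,\mathbf{V}_k \succeq \mathbf{0}$. Second, I would apply the same technique to $\hat{\mathbf{X}}_k = \mathbf{B}\mathbf{Y}_k^H$ to recover C7a and C7b with auxiliary variables $\hat{\mathbf{S}}_k,\hat{\mathbf{T}}_k \succeq \mathbf{0}$. To justify the second application, I would take the Schur complement of the $3\times3$ block LMI in C7a with respect to its bottom-right block $\mathbf{I}_M$ to obtain
\begin{equation*}
\begin{bmatrix} \hat{\mathbf{S}}_k - \mathbf{B}\mathbf{B}^H & \hat{\mathbf{X}}_k - \mathbf{B}\mathbf{Y}_k^H \\[2pt] \hat{\mathbf{X}}_k^H - \mathbf{Y}_k\mathbf{B}^H & \hat{\mathbf{T}}_k - \mathbf{Y}_k\mathbf{Y}_k^H \end{bmatrix} \;\succeq\; \mathbf{0}.
\end{equation*}
The binary structure imposed by C4 and C5 gives the key identity $\mathrm{Tr}(\mathbf{B}\mathbf{B}^H) = \sum_{m} \|\mathbf{b}_m\|_2^2 = M$, so C7b yields $\mathrm{Tr}(\hat{\mathbf{S}}_k - \mathbf{B}\mathbf{B}^H) \leq 0$. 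Since the $(1,1)$-block of the above matrix is positive semidefinite with nonpositive trace, it must vanish identically, i.e., $\hat{\mathbf{S}}_k = \mathbf{B}\mathbf{B}^H$. A standard property of PSD block matrices then forces the off-diagonal block to vanish as well, giving $\hat{\mathbf{X}}_k = \mathbf{B}\mathbf{Y}_k^H$.

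The converse direction is constructive: given C7, I would set $\mathbf{Y}_k = \mathbf{B}\mathbf{W}_k$, $\hat{\mathbf{S}}_k = \mathbf{U}_k = \mathbf{B}\mathbf{B}^H$, and choose $\hat{\mathbf{T}}_k \succeq \mathbf{Y}_k\mathbf{Y}_k^H$ and $\mathbf{V}_k \succeq \mathbf{W}_k\mathbf{W}_k^H$ (taken as equalities suffices) so that both LMIs C7a and C7c hold, while C7b and C7d reduce to the verified identity $\mathrm{Tr}(\mathbf{B}\mathbf{B}^H) = M$.

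I expect the main obstacle to be the ``trace-tightness'' step, namely arguing rigorously that C7b combined with the binary structure of $\mathbf{B}$ forces the $(1,1)$-slack block to be identically zero, and then deducing the vanishing of the off-diagonal slack. Without the binary identity $\mathrm{Tr}(\mathbf{B}\mathbf{B}^H) = M$ inherited from C4 and C5, the LMI would only enforce an operator inequality and not the required equality in C7. Once this step is in place, the remainder of the argument is bookkeeping that parallels the proof of Lemma 1.
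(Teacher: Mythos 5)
Your proposal is correct and follows essentially the same route as the paper: it splits the triple product $\hat{\mathbf{X}}_k=\mathbf{B}\mathbf{W}_k\mathbf{B}^H$ into the two bilinear equalities $\mathbf{Y}_k=\mathbf{B}\mathbf{W}_k$ (which equals the paper's $\mathbf{B}\mathbf{W}_k^H$ since $\mathbf{W}_k$ is Hermitian) and $\hat{\mathbf{X}}_k=\mathbf{B}\mathbf{Y}_k^H$, and then applies the Lemma~1 construction to each, using $\mathrm{Tr}(\mathbf{B}\mathbf{B}^H)=M$ to turn the trace condition into C7b/C7d. The Schur-complement and trace-tightness details you spell out are exactly the mechanism the paper delegates to Lemma~1 and its reference.
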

\begin{proof}
        We first define an auxiliary variable $\mathbf{Y}_k$ as $\mathbf{Y}_k={\mathbf{B}}{\mathbf{W}}_k^H$. Then, ${{\mbox{C7}}}$ can be recast into the following constraints
    \begin{equation}
        {{\mbox{C7}}}\Leftrightarrow\hat{\mathbf{X}}_k={\mathbf{B}}{\mathbf{Y}}_k^H,\quad\mathbf{Y}_k= {\mathbf{B}}{\mathbf{W}}_k^H.
    \end{equation}
    Based on Lemma 1, $\hat{\mathbf{X}}_k={\mathbf{B}}{\mathbf{Y}}_k^H$ can be rewritten as ${{\mbox{C7a}}}$ and
    \begin{equation}\label{C7b_original}
        \mathrm{Tr}\big(\hat{\mathbf{S}}_k-{\mathbf{B}}{\mathbf{B}}^H\big)\leq 0.
    \end{equation}
    Because of the binary nature of ${\mathbf{B}}$, we can rewrite \eqref{C7b_original} as the following linear inequality constraint
\begin{equation}\label{C7b}
\begin{aligned}
    \mathrm{Tr}\big(\widehat{\mathbf{S}}_k-{\mathbf{B}}{\mathbf{B}}^H\big)
    &\Leftrightarrow{{\mbox{C7b}}}:\mathrm{Tr}\big(\widehat{\mathbf{S}}_k\big)-M\leq 0.
\end{aligned}
\end{equation}
Similar to the above approach, we can recast constraint $\mathbf{Y}_k= {\mathbf{W}}_k{\mathbf{B}}^H$ as ${{\mbox{C7c}}}$ and ${{\mbox{C7d}}}$,  which completes the proof.\\[-4mm]\vspace*{-2mm}
\end{proof}
For convenience, we define $\hat{\mathbf{S}}=[\hat{\mathbf{S}}_1,\cdots,\hat{\mathbf{S}}_K]$, $\hat{\mathbf{T}}=[\hat{\mathbf{T}}_1,\cdots,\hat{\mathbf{T}}_K]$, $\mathbf{U}=[\mathbf{U}_1,\cdots,\mathbf{U}_K]$, $\mathbf{V}=[\mathbf{V}_1,\cdots,\mathbf{V}_K]$, $\mathbf{Y}=[\mathbf{Y}_1,\cdots,\mathbf{Y}_K]$, $\mathbf{q}=[q_1,\cdots,q_K]^T$, and $\hat{\mathbf{W}}=[\mathbf{W}_1,\cdots,\mathbf{W}_K]$. Moreover, $\bm{\Gamma}=\{\hat{\mathbf{W}},\mathbf{q},\hat{\mathbf{X}},\mathbf{Y},\hat{\mathbf{S}},\hat{\mathbf{T}},\mathbf{V},\mathbf{U}\}$ collects all variables except $\mathbf{B}$. Then, the proposed robust resource allocation problem in \eqref{Ori_Problem_robust_recast} can be recast as follows
\begin{eqnarray}
\label{Reform_Problem_robust}
    &&\hspace*{-8mm}\underset{\substack{\bm{\Gamma},\mathbf{B}}}{\mino}\hspace*{2mm}\widetilde{P}(\hat{\mathbf{W}},\mathbf{B})=\frac{\sum_{m\in\mathcal{M}}\mathbf{b}_m^T\mathbf{e}_m+\sum_{k\in\mathcal{K}}\mathrm{Tr}(\mathbf{W}_k)T_{\mathrm{Data}}}{T_{\mathrm{MA}}+T_{\mathrm{Data}}}\notag\\[-2pt]
    &&\hspace*{-6mm}\mbox{s.t.}\hspace*{7mm} \widehat{\overline{\mbox{C1}}},\overline{\mbox{C2}},\mbox{C3},\mbox{C4},\mbox{C5},\mbox{C7a},\mbox{C7b},\mbox{C7c},\mbox{C7d},\\[-1pt]
    &&\hspace*{5mm}\mbox{C8:}\hspace*{1mm} \mathrm{rank}(\mathbf{W}_k)=1,\hspace*{1mm}\forall k\in \mathcal{K},\,\mbox{C9:}\hspace*{1mm} \mathbf{W}_k\succeq \mathbf{0},\hspace*{1mm}\forall k\in \mathcal{K}.\notag
\end{eqnarray}
The problem in \eqref{Reform_Problem_robust} is non-convex due to binary constraint C4 and rank-one constraint C8. In the following subsection, we introduce a series of transformations to convexify non-convex constraints C4 and C8 and extend the proposed BnB-based algorithm for perfect CSI to the case of imperfect CSI.
\subsection{Proposed BnB-based Algorithm }
We first initialize the BnB tree $\hat{\mathcal{T}}^{(0)}$ for imperfect CSI. Here, the initial search tree $\hat{\mathcal{T}}^{(0)}$ contains one root node $\hat{\mathcal{N}}_0=\{\hat{\mathcal{B}}^{(0)},\hat{F}_L^{(0)},\hat{F}_U^{(0)}\}$, where  $\hat{\mathcal{B}}^{(0)}$, $\hat{F}_L^{(0)}$, and $\hat{F}_U^{(0)}$ denote the initial search space and lower and upper bounds of the problem in \eqref{Reform_Problem_robust}, respectively. Here, $\hat{\mathcal{B}}^{(0)}=\prod_{(m,n)\in\mathcal{L}}\hat{\mathcal{B}}_{m,n}^{(0)}$ is the Cartesian product of $M\times N$ binary sets, where $\hat{\mathcal{B}}_{m,n}^{(0)}=\{0,1\}$, denotes the initial feasible set of $b_{m}[n]$. 

We then derive the initial lower and upper bounds for the considered MA-enabled MIMO systems. In particular, to facilitate the derivation of the lower bound, we relax the initial binary search space to convex set $\widebar{\mathcal{B}}^{(0)}=\prod_{(m,n)\in\mathcal{L}}\widebar{\mathcal{B}}_{m,n}^{(0)}$ with $\widebar{\mathcal{B}}_{m,n}^{(0)}=\{b_m[n]\,|\,0\leq b_m[n]\leq 1\},\forall m,n$. Moreover, we tackle rank-one constraint C8 by SDR \cite{yu2021robust}. Thus, a lower bound of \eqref{Reform_Problem_robust} is obtained by solving the following problem
\begin{eqnarray}
\label{LB_Problem_robust}
    &&\hspace*{-4mm}\underset{\substack{\bm{\Gamma},\mathbf{B}}}{\mino}\hspace*{2mm}\widetilde{P}(\hat{\mathbf{W}},\mathbf{B})\notag\\[-3pt]
    &&\hspace*{2mm}\mbox{s.t.}\hspace*{3mm} \widehat{\overline{\mbox{C1}}},\overline{\mbox{C2}},\mbox{C3},\mbox{C5},\mbox{C7a-C7d},\mbox{C9},\overline{\mbox{C4}}:\hspace*{1mm} \mathbf{B}\in\widebar{\mathcal{B}}^{(0)}.
\end{eqnarray}
The tightness of the relaxation is revealed in following theorem. 
\begin{theorem}
    If problem \eqref{LB_Problem_robust} is feasible, there always exists an optimal beamforming matrix $\mathbf{W}_k^{\mathrm{opt}}$ with $\mathrm{rank}(\mathbf{W}_k^{\mathrm{opt}})\leq 1, \forall k$. \\[-3mm]
\end{theorem}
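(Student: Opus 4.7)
The plan is to establish the tightness of the semidefinite relaxation via a KKT analysis, leveraging a critical structural observation: in problem \eqref{LB_Problem_robust}, the beamforming matrix $\mathbf{W}_k$ appears only in (i) the objective, as $\alpha\,\mathrm{Tr}(\mathbf{W}_k)$ with $\alpha = T_{\mathrm{Data}}/(T_{\mathrm{MA}}+T_{\mathrm{Data}})>0$, (ii) the PSD-cone constraint C9, and (iii) the $(2,3)$ and $(3,2)$ blocks of the LMI C7c. The robust SINR LMI $\widehat{\overline{\mbox{C1}}}$ and the remaining constraints C7a, C7b, C7d, $\overline{\mbox{C2}}$, C3, $\overline{\mbox{C4}}$, C5 all involve the auxiliary matrices $\hat{\mathbf{X}}_k$, $\mathbf{Y}_k$, $\hat{\mathbf{S}}_k$, $\hat{\mathbf{T}}_k$, $\mathbf{U}_k$, $\mathbf{V}_k$, $\mathbf{q}$, and the position matrix $\mathbf{B}$, but not $\mathbf{W}_k$ itself, so they contribute nothing to the stationarity condition with respect to $\mathbf{W}_k$.

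First, I would verify Slater's condition for the SDP \eqref{LB_Problem_robust}. A strictly feasible point can be constructed by placing $\mathbf{B}$ in the interior of the relaxed binary hypercube, scaling up the beamforming power to give positive slack in $\widehat{\overline{\mbox{C1}}}$, and choosing the slack matrices $\hat{\mathbf{S}}_k$, $\hat{\mathbf{T}}_k$, $\mathbf{U}_k$, $\mathbf{V}_k$, $\mathbf{Y}_k$ and scalar $q_k$ so that the C7 LMIs are strictly positive definite and C7b, C7d are strict. This yields strong duality. I would then introduce dual variables $\mathbf{Z}_k \succeq \mathbf{0}$ for C9 and $\bm{\Lambda}_k \succeq \mathbf{0}$ for C7c, with $\bm{\Lambda}_k$ partitioned into $3 \times 3$ blocks $\{\mathbf{L}_{ij}^{(k)}\}$ conformable to C7c. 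The stationarity condition $\partial \mathcal{L}/\partial \mathbf{W}_k = \mathbf{0}$ reduces to
\begin{equation}\label{eq:stat}
\mathbf{Z}_k \;=\; \alpha\,\mathbf{I}_M \;-\; \mathbf{L}_{23}^{(k)} \;-\; \bigl(\mathbf{L}_{23}^{(k)}\bigr)^H.
\end{equation}
Complementary slackness $\mathbf{Z}_k \mathbf{W}_k^{\mathrm{opt}} = \mathbf{0}$ then gives $\mathrm{rank}(\mathbf{W}_k^{\mathrm{opt}}) \leq M - \mathrm{rank}(\mathbf{Z}_k)$, so the theorem reduces to showing $\mathrm{rank}(\mathbf{Z}_k) \geq M-1$.

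The principal obstacle is this last rank lower bound. Dual feasibility $\mathbf{Z}_k \succeq \mathbf{0}$ alone only forces the Hermitian matrix $\mathbf{L}_{23}^{(k)}+\bigl(\mathbf{L}_{23}^{(k)}\bigr)^H$ to have all eigenvalues $\leq \alpha$, which is insufficient; the extra dimension-restricting information must come from the LMI complementary slackness $\bm{\Lambda}_k\,\mathbf{M}_k^{\mathrm{C7c}} = \mathbf{0}$. Reading off the nine block-level equations of this matrix identity and exploiting both the identity $(3,3)$ block of $\mathbf{M}_k^{\mathrm{C7c}}$ and the appearance of $\mathbf{B}$ in its first block column should pin down $\mathbf{L}_{23}^{(k)}+\bigl(\mathbf{L}_{23}^{(k)}\bigr)^H$ to be of the form $\alpha\mathbf{I}_M$ minus a rank-one positive semidefinite perturbation. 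Substituting this into \eqref{eq:stat} yields $\mathrm{rank}(\mathbf{Z}_k) \geq M-1$, and hence $\mathrm{rank}(\mathbf{W}_k^{\mathrm{opt}}) \leq 1$ as claimed.

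If the direct block-algebra route becomes unwieldy, a constructive fallback is available: given any optimal tuple $(\mathbf{W}_k^{\mathrm{opt}}, \hat{\mathbf{X}}_k^{\mathrm{opt}}, \mathbf{Y}_k^{\mathrm{opt}}, \mathbf{U}_k^{\mathrm{opt}}, \mathbf{V}_k^{\mathrm{opt}})$, extract a principal eigenvector of $\mathbf{W}_k^{\mathrm{opt}}$, form the rank-one candidate $\tilde{\mathbf{W}}_k = \lambda_{\max}(\mathbf{W}_k^{\mathrm{opt}})\mathbf{u}_k\mathbf{u}_k^H$, and verify via the Schur-complement reformulation of C7c that the auxiliary variables $\mathbf{V}_k, \mathbf{Y}_k, \mathbf{U}_k$ can be adjusted so that the LMI remains satisfied without raising $\mathrm{Tr}(\mathbf{W}_k)$. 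Iterating this reduction over $k \in \mathcal{K}$ delivers the desired rank-one optimal solution.
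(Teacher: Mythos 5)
The paper does not actually contain its proof of Theorem 1 (it is deferred to Appendix D of an extended version), so your proposal can only be judged on its own terms. Your opening is the standard template for SDR-tightness results and is sound as far as it goes: $\mathbf{W}_k$ indeed enters \eqref{LB_Problem_robust} only through the objective, C9, and the $(2,3)/(3,2)$ blocks of C7c, and the stationarity identity $\mathbf{Z}_k=\alpha\mathbf{I}_M-\mathbf{L}_{23}^{(k)}-\bigl(\mathbf{L}_{23}^{(k)}\bigr)^H$ together with $\mathbf{Z}_k\mathbf{W}_k^{\mathrm{opt}}=\mathbf{0}$ correctly reduces the theorem to showing $\mathrm{rank}(\mathbf{Z}_k)\geq M-1$. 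But the decisive step is both absent and, as written, backwards. If $\mathbf{L}_{23}^{(k)}+\bigl(\mathbf{L}_{23}^{(k)}\bigr)^H=\alpha\mathbf{I}_M-\mathbf{R}$ with $\mathbf{R}\succeq\mathbf{0}$ of rank one, then your stationarity equation gives $\mathbf{Z}_k=\mathbf{R}$, i.e., $\mathrm{rank}(\mathbf{Z}_k)\leq 1$ and hence $\mathrm{rank}(\mathbf{W}_k^{\mathrm{opt}})\geq M-1$ --- the opposite of the claim. What you actually need is that $\mathbf{L}_{23}^{(k)}+\bigl(\mathbf{L}_{23}^{(k)}\bigr)^H$ attains its largest admissible eigenvalue $\alpha$ with multiplicity at most one, and this is precisely the part you leave as ``should pin down.''

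More fundamentally, I do not believe the complementary slackness of C7c alone can deliver this. In every KKT-based rank-one proof for robust beamforming, the $(M-1)$ lower bound on $\mathrm{rank}(\mathbf{Z}_k)$ originates in the SINR constraint, here through the rank-one augmentation $\mathbf{G}_k\bar{\bm{\psi}}_k^H$ inside $\widetilde{\mathbf{G}}_k$ of constraint $\widehat{\overline{\mbox{C1}}}$. In \eqref{LB_Problem_robust} that constraint acts on $\hat{\mathbf{X}}_k$ rather than on $\mathbf{W}_k$, so the low-rank dual structure must be propagated along the chain $\widehat{\overline{\mbox{C1}}}\rightarrow\mbox{C7a}$ (via the stationarity conditions w.r.t.\ $\hat{\mathbf{X}}_k$ and $\mathbf{Y}_k$) $\rightarrow\mbox{C7c}$ before it reaches $\mathbf{Z}_k$; your sketch never invokes $\widehat{\overline{\mbox{C1}}}$, C7a, or the duals of $\hat{\mathbf{X}}_k$ and $\mathbf{Y}_k$, so it cannot produce the required structure (note that C7c by itself is consistent with $\bm{\Lambda}_k=\mathbf{0}$, which tells you nothing). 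Two further gaps: Slater's condition requires strict feasibility, whereas the theorem assumes only feasibility, so the degenerate case must be handled; and the fallback construction is not innocuous, since replacing $\mathbf{W}_k^{\mathrm{opt}}$ by its dominant rank-one part perturbs the Schur complements of C7c and C7a in both directions, and showing that $\hat{\mathbf{X}}_k$ can be kept large enough for $\widehat{\overline{\mbox{C1}}}$ while $\mathrm{Tr}(\hat{\mathbf{S}}_k)\leq M$ and $\mathrm{Tr}(\mathbf{U}_k)\leq M$ still hold is essentially the whole difficulty restated. As it stands, the proposal is a plausible plan with its central lemma unproven and misstated, not a proof.
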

\vspace{-3mm}
\begin{proof}
Please refer to Appendix D of the extended version of the present paper \cite{wu2024globally}, which will be added in the final version of the paper..\\[-4mm]\vspace{-1mm}
\end{proof}
Note that the problem in \eqref{LB_Problem_robust} is convex w.r.t. $\bm{\Gamma}$ and $\mathbf{B}$. Here, let $\bm{\Gamma}^{(0)}_L$ and $\hat{\mathbf{B}}^{(0)}_L$ denote the optimal solutions of $\bm{\Gamma}$ and $\mathbf{B}$ for optimization problem \eqref{LB_Problem_robust}, respectively. Then, the initial lower bound $\hat{F}_L^{(0)}$ is given by $\hat{F}_L^{(0)}=\widetilde{P}(\hat{\mathbf{W}}^{(0)}_L,\hat{\mathbf{B}}^{(0)}_L).$

On the other hand, we can obtain an upper bound of \eqref{Reform_Problem_robust} based on $\hat{\mathbf{B}}_L^{(0)}$. Following a similar procedure as in \textbf{Algorithm 1}, we can construct a binary solution $\hat{\mathbf{B}}_U^{(0)}=[\hat{\mathbf{b}}_{U,1}^{(0)},\cdots,\hat{\mathbf{b}}_{U,M}^{(0)}]$ based on $\hat{\mathbf{B}}_L^{(0)}$ by minimizing the Euclidean distance between $\hat{\mathbf{B}}_L^{(0)}$ and $\hat{\mathbf{B}}_U^{(0)}$. Then, an upper bound for \eqref{Reform_Problem_robust} is obtained by solving the following optimization problem
\begin{eqnarray}
\label{UB_Problem_robust}
    &&\hspace*{-4mm}\underset{\bm{\Gamma},\mathbf{B}=\hat{\mathbf{B}}_U^{(0)}}{\mino}\hspace*{2mm}\widetilde{P}(\hat{\mathbf{W}},\mathbf{B})\notag\\[-6pt]
    &&\hspace*{2mm}\mbox{s.t.}\hspace*{7mm} \widehat{\overline{\mbox{C1}}},\mbox{C7a},\mbox{C7b},\mbox{C7c},\mbox{C7d},\mbox{C9}.
\end{eqnarray}
Here, although the rank-one constraint is relaxed, the proof of the tightness of the rank-one relaxation is similar to the proof of Theorem 1.
Note that the optimization problem in \eqref{UB_Problem_robust} is convex w.r.t. the variables in $\bm{\Gamma}$. Thus, the optimal solution of \eqref{UB_Problem_robust} can be obtained by CVX. Here, we denote the obtained optimal solution of $\hat{\mathbf{W}}$ as $\hat{\mathbf{W}}^{(0)}_U=[\hat{\mathbf{W}}_{U,1}^{(0)},\cdots,\hat{\mathbf{W}}_{U,K}^{(0)}]$ and the initial upper bound $\hat{F}_U^{(0)}$ is given by $\hat{F}_U^{(0)}=\widetilde{P}(\hat{\mathbf{W}}^{(0)}_U,\hat{\mathbf{B}}_U^{(0)}).$

In the $t$-th iteration of the proposed BnB-based algorithm, we denote $\hat{\mathcal{T}}^{(t-1)}$ as the BnB tree obtained in the last iteration. Here, we let sets $\hat{\mathcal{I}}^{(t-1)}$ and $\hat{\mathcal{E}}^{(t-1)}$ denote the collections of the internal and external nodes of $\hat{\mathcal{T}}^{(t-1)}$. Then, we select the node $\hat{\mathcal{N}}_{\hat{i}_t}=\{\hat{\mathcal{B}}^{(\hat{i}_t)},\hat{F}_L^{(\hat{i}_t)},\hat{F}_U^{(\hat{i}_t)}\}$
from the external nodes of BnB tree $\hat{\mathcal{T}}^{(t-1)}$ with the smallest lower bound $\hat{F}_L^{(\hat{i}_t)}$, where $\hat{i}_t$ denotes the index of the selected node from $\hat{\mathcal{T}}^{(t-1)}$ at the $t$-th iteration. Moreover, $\hat{\mathcal{B}}^{(\hat{i}_t)}=\prod_{(m,n)\in\mathcal{L}}\hat{\mathcal{B}}_{m,n}^{(\hat{i}_t)}$ denotes the binary search space for matrix $\mathbf{B}$ corresponding to the selected node $\hat{\mathcal{N}}_{\hat{i}_t}$ and $\hat{\mathcal{B}}_{m,n}^{(\hat{i}_t)}$ is the search space for binary variable $b_m[n]$. Here, $\hat{\mathcal{B}}^{(\hat{i}_t)}$ contains feasible sets corresponding to both the determined and undetermined binary variables. To simplify notation, we define $\hat{\mathcal{U}}^{(\hat{i}_t)}$ and $\hat{\mathcal{D}}^{(\hat{i}_t)}$ as the collections of the indices of undetermined and determined binary variables in $\hat{\mathcal{B}}^{(\hat{i}_t)}$, respectively.

Then, following a similar procedure as in Section \rom{3}-C, we partition the selected node $\hat{\mathcal{N}}_{\hat{i}_t}$ into two child nodes based on the Euclidean distance between the optimal solution for the lower bound $\hat{\mathbf{B}}_L^{(\hat{i}_t)}$ and its binary version $\hat{\mathbf{B}}_U^{(\hat{i}_t)}$. Specifically, we first obtain the index $(\hat{m}^{(t)},\hat{n}^{(t)})$ based on 
\begin{equation}
   (\hat{m}^{(t)},\hat{n}^{(t)})=\arg\max_{m,n}|\hat{b}^{(\hat{i}_t)}_{L,m}[n]-\hat{b}^{(\hat{i}_t)}_{U,m}[n]|,\vspace*{-1mm}
\end{equation}
where $\hat{b}^{(\hat{i}_t)}_{L,m}[n]$ and $\hat{b}^{(\hat{i}_t)}_{U,m}[n]$ represent the $n$-th element in the $m$-th column of $\hat{\mathbf{B}}_L^{(\hat{i}_t)}$ and $\hat{\mathbf{B}}_U^{(\hat{i}_t)}$, respectively.
Next, the feasible set for $\mathbf{B}$ of the selected node, i.e.,  $\hat{\mathcal{B}}^{(\hat{i}_t)}$ is divided into two new subsets corresponding to $b_{\hat{m}^{(t)}}[\hat{n}^{(t)}]=0$ and $b_{\hat{m}^{(t)}}[\hat{n}^{(t)}]=1$, respectively.
Then, the resource allocation problems $\hat{\mathcal{P}}_i, i\in\{0,1\}$, for the two child nodes are given by
\begin{eqnarray}
\label{LB_Problem_branch_robust}
    \hspace{-0mm}\hat{\mathcal{P}}_i:&&\hspace*{-7mm}\underset{\bm{\Gamma},\mathbf{B}}{\mino}\hspace*{2mm}\widetilde{P}(\hat{\mathbf{W}},\mathbf{B})\\[-2pt]
    &&\hspace*{-12mm}\mbox{s.t.}\hspace*{1mm} \widehat{\overline{\mbox{C1}}},\overline{\mbox{C2}},\mbox{C3},\mbox{C5},\mbox{C7a},\mbox{C7b},\mbox{C7c},\mbox{C7d},\mbox{C9},\notag\\[-0pt]
    &&\hspace*{-13mm}\widehat{\widetilde{\mbox{C4a}}}\hspace*{-0.5mm}:\hspace*{-0.5mm}b_m[n]\in\hat{\mathcal{B}}_{m,n}^{(\hat{i}_t)},\forall (m,n)\hspace*{-1mm}\neq\hspace*{-1mm}(\hat{m}^{(t)},\hat{n}^{(t)}),\,\widehat{\widetilde{\mbox{C4b}}}\hspace*{-0.5mm}:\hspace*{-0.5mm}b_{\hat{m}^{(t)}}[\hat{n}^{(t)}]\hspace*{-1mm}=\hspace*{-1mm}i.\notag
\end{eqnarray}
Note that $\widehat{\widetilde{\mbox{C4a}}}$ is non-convex due to the discreteness of search space $\hat{\mathcal{B}}^{(\hat{i}_t)}$. Here, we relax the undetermined binary variables ${b}_{m}[n]$, $\forall (m,n)\in\hat{\mathcal{U}}^{(\hat{i}_t)}\setminus(\hat{m}^{(t)},\hat{n}^{(t)})$ to continuous bounded variables, i.e., $0\leq{b}_{m}[n]\leq 1$. Then, a lower bound of \eqref{LB_Problem_branch_robust} is obtained by solving the following relaxed problem $\widehat{\widetilde{P}}_i$, $i\in\{0,1\}$
\begin{eqnarray}
\label{LB_Problem_branch_robust_relaxed}
    \hat{\widetilde{\mathcal{P}}}_i:&&\hspace*{-4mm}\underset{\bm{\Gamma},\mathbf{B}}{\mino}\hspace*{2mm}\widetilde{P}(\hat{\mathbf{W}},\mathbf{B})\notag\\[-1pt]
    &&\hspace*{-4mm}\mbox{s.t.}\hspace*{4mm} \widehat{\overline{\mbox{C1}}},\overline{\mbox{C2}},\mbox{C3},\widehat{\widetilde{\mbox{C4b}}},\mbox{C5},\mbox{C7a},\mbox{C7b},\mbox{C7c},\mbox{C7d},\mbox{C9}\notag,\\[-1pt]
    &&\hspace*{4mm}\widehat{\widetilde{\mbox{C4c}}}:b_m[n]\in\hat{{\mathcal{B}}}_{m,n}^{(\hat{i}_t)},\,\forall (m,n)\in\hat{\mathcal{D}}^{(\hat{i}_t)}\\[-1pt]
    &&\hspace*{4mm}\widehat{\widetilde{\mbox{C4d}}}:0\leq b_{m}[n]\leq 1,\,\forall (m,n)\in\hat{\mathcal{U}}^{(\hat{i}_t)}\setminus(\hat{m}^{(t)},\hat{n}^{(t)}),\notag
\end{eqnarray}
which is a convex problem, and the optimal solution can be attained by
CVX. Here, the optimal ${\mathbf{B}}$ and $\hat{\mathbf{W}}$ for 
$\hat{\widetilde{\mathcal{P}}}_0$ and $\hat{\widetilde{\mathcal{P}}}_1$ are denoted as $(\hat{\mathbf{B}}_{L,0}^{(t)}, \hat{\mathbf{W}}_{L,0}^{(t)})$ and $(\hat{\mathbf{B}}_{L,1}^{(t)}, \hat{\mathbf{W}}_{L,1}^{(t)})$, respectively. Therefore, lower bounds corresponding to problem $\hat{{\mathcal{P}}}_0$ and $\hat{{\mathcal{P}}}_1$ are given by $\widetilde{P}(\hat{\mathbf{W}}_{L,0}^{(t)}, \hat{\mathbf{B}}_{L,0}^{(t)})$ and $\widetilde{P}(\hat{\mathbf{W}}_{L,1}^{(t)}, \hat{\mathbf{B}}_{L,1}^{(t)})$, respectively.

Next, similar as in Section \rom{3}-C, we generate a feasible binary matrix $\hat{\mathbf{B}}_{U,i}^{(t)}$ based on $\hat{\mathbf{B}}_{L,i}^{(t)}$ by employing \textbf{Algorithm 1}, where $i\in\{0, 1\}$. Then, an upper bound corresponding to the new node with $b_{\hat{m}^{(t)}}[\hat{n}^{(t)}]=i$ can be obtained by solving the optimization problem in \eqref{UB_Problem_robust} with $\mathbf{B}=\hat{\mathbf{B}}_{U,i}^{(t)}$, $i \in\{0, 1\}$. Note that problem \eqref{UB_Problem_robust} is convex with given $\mathbf{B}=\hat{\mathbf{B}}_{U,i}^{(t)}$, $i \in\{0, 1\}$, and can be optimally solved. We let $\hat{\mathbf{W}}_{U,i}^{(t)}$ denote the corresponding optimal solution of $\hat{\mathbf{W}}$ obtained by solving \eqref{UB_Problem_robust} with $\mathbf{B}=\hat{\mathbf{B}}_{U,i}^{(t)}$, $i\in\{0,1\}$. An upper bound of problem $\hat{{\mathcal{P}}}_i$ in \eqref{LB_Problem_branch_robust} is given by $\widetilde{P}(\hat{\mathbf{W}}_{U,i}^{(t)},\hat{\mathbf{B}}_{U,i}^{(t)})$, $i \in\{0, 1\}$.

{\color{black} The resulting BnB-based algorithm has a similar form as \textbf{Algorithm 2}. In each iteration, we add two nodes and derive the corresponding upper and lower bounds, respectively. In particular, to attain the lower bound for a new node $i$, we solve SDP problem $\hat{\widetilde{\mathcal{P}}}_i$, which involves $K$ LMI constraints of size $(L_k+1)$ due to $\widehat{\overline{\mbox{C1}}}$, $K$ LMI constraints of size $(2N+M)$ due to C7a, $K$ LMI constraints of size $(2M+N)$ due to C7c, and $K$ LMI constraints of size $M$ due to C9. Based on \cite[Theorem 3.12]{bomze2010interior}, the computational complexity of solving one SDP problem is given by $\mathcal{O}(K((L_{\mathrm{max}}+1)^3+(2N+M)^3+(2M+N)^3+M^3)+K^2((L_{\mathrm{max}}+1)^2+(2N+M)^2+(2M+N)^2+M^2)+K^3)$
 , where $L_{\mathrm{max}}=\max_{k\in\mathcal{K}}(L_k)$. On the other hand, to derive the upper bound for a new node $i$, we first obtain the corresponding feasible solutions of $\mathbf{B}$ based on \textbf{Algorithm 1} and then solve SDP problem $\hat{{\mathcal{P}}}_i$ for a given $\mathbf{B}$. Similar to the analysis in Section \rom{3}-C, the computational complexity for finding a feasible solution for each node is given by $\mathcal{O}(I_{\mathrm{itr}}((\frac{M\times (M-1)}{2})(M\times N+1)^3+(\frac{M\times (M-1)}{2})^2(M\times N+1)^2+(M\times N+1)^3))$. Moreover, since problem $\hat{{\mathcal{P}}}_i$ involves the same SDP constraints as problem $\hat{\widetilde{\mathcal{P}}}_i$, the computational complexity for solving $\hat{{\mathcal{P}}}_i$ in each iteration for a given $\mathbf{B}$ is given by
 $\mathcal{O}(K((L_{\mathrm{max}}+1)^3+(2N+M)^3+(2M+N)^3+M^3)+K^2((L_{\mathrm{max}}+1)^2+(2N+M)^2+(2M+N)^2+M^2)+K^3)$ as well. Thus, the overall computational complexity of each new node is given by $\mathcal{O}(2K((L_{\mathrm{max}}+1)^3+(2N+M)^3+(2M+N)^3+M^3)+2K^2((L_{\mathrm{max}}+1)^2+(2N+M)^2+(2M+N)^2+M^2)+2K^3+I_{\mathrm{itr}}((\frac{M\times (M-1)}{2})(M\times N+1)^3+(\frac{M\times (M-1)}{2})^2(M\times N+1)^2+(M\times N+1)^3))$.}
 Moreover, the BnB tree update, the convergence, and the optimality analysis are identical to those in Section \rom{3}-C, and are not described here in detail. 

{\color{black} 
\begin{remark}
     We note that the BnB-based algorithm designed for imperfect CSI can be also applied to solve the resource allocation problem for perfect CSI by setting the CSI estimation error to zero, i.e., $\epsilon_k=0$. However, the resulting computational complexity is significantly higher than that of the algorithm proposed for perfect CSI in Section \rom{3}. The higher computational complexity is due to the additional LMI constraints and auxiliary optimization variables that had to be introduced to appropriately reformulate the robust QoS constraint needed to cope with imperfect CSI.
\end{remark}}
\subsection{Proposed SCA-based Algorithm}
Following similar steps as for derivation of \eqref{SCA_begin} and \eqref{SCA_end}, applying SDP relaxation to address rank-one constraint C8 and exploiting SCA, 
we obtain the following optimization problem to be solved in the $(j+1)$-th iteration of the SCA algorithm
\begin{eqnarray}
\label{Reform_Problem_robust_SCA}
    &&\hspace*{-7mm}\underset{\substack{\bm{\Gamma},\mathbf{B}}}{\mino}\hspace*{1mm}\widetilde{P}(\hat{\mathbf{W}},\mathbf{B})\hspace*{-0.5mm}+\hspace*{-0.5mm}\frac{1}{\mu}\hspace*{-0.5mm}\sum_{m,n}\hspace*{-0.5mm}b_m[n]\hspace*{-0.5mm}-\hspace*{-0.5mm}2b_m^{(j)}[n]b_m[n]\hspace*{-0.5mm}+\hspace*{-0.5mm}(b_m^{(j)}[n])^2\notag\\[-0pt]
    &&\hspace*{2mm}\mbox{s.t.}\hspace*{7mm} \widehat{\overline{\mbox{C1}}},\overline{\mbox{C2}},\mbox{C3},\overline{\mbox{C4}},\mbox{C5},\mbox{C7a},\mbox{C7b},\mbox{C7c},\mbox{C7d},\mbox{C9},
\end{eqnarray}
where $b^{(j)}_m[n]$ denotes the solution of $b_m[n]$ obtained in the $j$-th iteration. The above problem is convex and can be optimally solved by CVX. The proof of the tightness of the SDP relaxation is similar to the proof of Theorem 1. The proposed SCA-based algorithm can be summarized in a similar manner as in \textbf{Algorithm 3}. In particular, optimization problem \eqref{Penalty_problem_bnb} is replaced by \eqref{Reform_Problem_robust_SCA}. Moreover, problem \eqref{Reform_Problem_robust_SCA} is similar to problem $\hat{\widetilde{P}}_i$, $i\in\{0,1\}$ in \eqref{LB_Problem_branch_robust_relaxed}, with a slight difference in the objective function. Therefore, the computational complexity in each iteration for solving one SDP problem is given by $\mathcal{O}(\log\frac{1}{\rho}(K((L_{\mathrm{m}}+1)^3+(2N+M)^3+(2M+N)^3+M^3)+K^2((L_{\mathrm{m}}+1)^2+(2N+M)^2+(2M+N)^2+M^2)+K^3))$\cite[Theorem 3.12]{bomze2010interior}. The initial point and convergence analysis are identical to those in Section \rom{3}-D, and thus are omitted here. 
\section{Numerical Results}
\subsection{Simulation Setting}
In this paper, we consider an MA-enabled multiuser MISO system, where the BS is equipped with $M=4$ MA elements to serve $K=4$ single-antenna users unless otherwise stated. The carrier frequency is set to $5$ GHz, i.e., the wavelength is $\lambda=60$ mm. The transmit area is a square area of size $l\lambda\times l\lambda$, where $l=2$ denotes the normalized transmit area size at the BS. The minimum distance $D_{\mathrm{min}}$ is set to $15$ mm. The distances of the users to the BS are uniformly distributed between $20$ m to $80$ m. The noise variance is set to $-80$ dBm, $\forall k \in\mathcal{K}$. The channel coefficient $h_k(\mathbf{p}_n)$ between an MA element at position $\mathbf{p}_n$ and user $k$ is modeled by the field response channel model detailed in Section \rom{2}-A.  Besides, since we consider isotropic scattering environments, the MPCs are uniformly distributed over the half-space in front of the antenna panel, which yields elevation and azimuth AoDs, $\theta_{k,l_k}$ and $\phi_{k,l_k}$, of the $l_k$-th channel path for user $k$ following probability density function $f_{\mathrm{AoD}}(\theta_{k,l_k}, \phi_{k,l_k})=\frac{\cos\theta_{k,l_k}}{2\pi}$, $\theta_{k,l_k}\in [- \pi/2,\pi/2]$, $\phi_{k,l_k}\in [- \pi/2,\pi/2]$ \cite{zhu2022modeling}. All $L_k=16$ path coefficients $\Psi_{l_k,k}$ are independently and identically distributed and follow complex Gaussian distribution $\mathcal{CN}(0,L_0D_k^{-\alpha})$, where $L_0$, $D_k$, and $\alpha=2.2$ denote the path loss at reference distance $d_0=1$ m, the distance from BS to user $k$, and the path loss exponent, respectively.
In this work, the initial positions of the MA elements are randomly generated. Moreover, the power consumption of one MA driver $P_{\mathrm{MA}}$ is set as $8$ W \cite{AM3248}, and the time durations of two subframes are $T_{\mathrm{MA}}=30$ ms and $T_{\mathrm{Data}}=270$ ms, respectively\footnote{\color{black} The actual power consumption and movement delay of MA depends on the specific hardware configuration, which may vary across different systems. For example, utilizing electronically driven MAs, as proposed in \cite{ning2024movable}, can significantly reduce both power consumption and movement delay at the expense of increased hardware cost.}. 
In addition, high-speed stepper motors combined with linear actuators are adopted as MA drivers in this work\cite{mclean1988review}, and the angular velocity and step angle of the stepper motors are set to $\omega=60\pi\, \mathrm{rad/s}=720\,\mathrm{step/s}$ and $\omega_d=\pi/12$ rad\cite{AM3248}, respectively. A lead screw with an outer diameter of $D=10$ mm is adopted as the linear actuator of the MA drivers. Therefore, the maximum speed and step size of MA motion are given by $v_{\mathrm{v,MA}}=v_{\mathrm{h,MA}}=v_{\mathrm{MA}}=\omega D= 0.94$ $\mathrm{mm}/\mathrm{ms}$ and $\omega_d D\approx 2$ mm, respectively\cite{mclean1988review}.
Thus, the maximal moving distance of an MA is given by $D_{\mathrm{v,max}}=D_{\mathrm{h,max}}=v_{\mathrm{MA}}T_{\mathrm{MA}}\approx28$ mm. Without loss of generality, we assume all users impose the same SINR requirement, i.e., $\gamma_k=\gamma, \forall k$, and define the maximum normalized estimation error of the PCV as $\kappa_k=\epsilon_k/\|\bar{\bm{\psi}}_k\|_F=\kappa$, $\forall k$. The convergence tolerances $\Delta_{\mathrm{BnB}}$ and $\Delta_{\mathrm{SCA}}$ are set as $\Delta_{\mathrm{BnB}}=\Delta_{\mathrm{SCA}}=10^{-4}$, respectively. The number of channel realizations is $200$.

We consider four baseline schemes for comparison. For baseline scheme 1, the MA elements are fixed at $M$ positions and are chosen randomly from $\mathcal{P}$, i.e., binary matrix $\mathbf{B}$ is randomly chosen but satisfies constraints C2, C3, C4, and C5. The beamforming vectors $\mathbf{w}_k$ are obtained by solving the problem 
in \eqref{Reform_Problem} and \eqref{Reform_Problem_robust} for the randomly chosen $\mathbf{B}$ for perfect and imperfect CSI, respectively. For baseline scheme 2, we adopt the AS technique, where the BS is equipped with a $2\times M$ uniform planar array with fixed-position antennas spaced by $\lambda/2=30$ mm. We solve the problem in \eqref{Reform_Problem} and \eqref{Reform_Problem_robust} for all possible subsets of $M$ antenna elements and select the optimal subset that minimizes the BS transmit power, respectively. 
As baseline scheme 3, we adopt a suboptimal AO-based algorithm. Specifically, in each iteration, beamforming matrix $\mathbf{W}$ is optimized by solving the problems in \eqref{Reform_Problem} and \eqref{Reform_Problem_robust} with the fixed binary decision matrix $\mathbf{B}$ obtained in the last iteration for perfect and imperfect CSI, respectively. Afterwards, $\mathbf{B}$ is updated by solving the problems in \eqref{Reform_Problem} and \eqref{Reform_Problem_robust} by relaxing constraint C4 for the fixed $\mathbf{W}$ obtained in the current iteration. After convergence, the obtained continuous binary matrix is quantized to the feasible binary selection matrix $\mathbf{B}_{\mathrm{AO}}$, and the beamforming matrix is designed by solving the problems in \eqref{Reform_Problem} and \eqref{Reform_Problem_robust} with $\mathbf{B}_{\mathrm{AO}}$.  For baseline scheme 4, we jointly design $\mathbf{B}$ and $\mathbf{W}$ without taking the motion power consumption into account. In particular, we optimally solve the problem in \eqref{Ori_Problem} and \eqref{Ori_Problem_robust} with a modified objective function, i.e.,  $P=\sum_{k\in\mathcal{K}}\|\mathbf{w}_k\|^2_2$, using the proposed BnB-based method for perfect and imperfect CSI, respectively. 
{\color{black}For baseline scheme 5, we adopt a fully-connected HBF, where the BS is equipped with $2\times M$ antennas and $M$ RF chains. Then, we employ the AO-based HBF optimization method in \cite{yu2016alternating} for beamformer design. In particular, we first design the optimal fully digital beamforming matrix $\mathbf{W}_{\mathrm{opt}}$ by solving an SDP beamforming problem. Then, we jointly design the analog RF precoder $\mathbf{W}_{\mathrm{RF}}$ and the digital baseband precoder $\mathbf{W}_{\mathrm{BB}}$ to minimize the difference between their product and $\mathbf{W}_{\mathrm{opt}}$, i.e., $\|\mathbf{W}_{\mathrm{opt}}-\mathbf{W}_{\mathrm{RF}}\mathbf{W}_{\mathrm{BB}}\|_F$. Specifically, in each iteration, we update the digital baseband precoder $\mathbf{W}_{\mathrm{BB}}$ by CVX \cite{grant2008cvx} given the fixed analog RF precoder $\mathbf{W}_{\mathrm{RF}}$ obtained in the last iteration, and then optimize $\mathbf{W}_{\mathrm{RF}}$ given the previously obtained $\mathbf{W}_{\mathrm{BB}}$ using Manopt toolbox \cite{boumal2014manopt}. After convergence, to satisfy the minimum required SINR constraint, we update the digital baseband precoder by solving the SDP beamforming problem given the fixed $\mathbf{W}_{\mathrm{RF}}$ obtained in the last iteration.}
\subsection{Numerical Results for Perfect CSI}
\subsubsection{Convergence of the proposed algorithms:}
\begin{figure}\vspace{-0.0cm}
    \centering
    \includegraphics[width=1.9in]{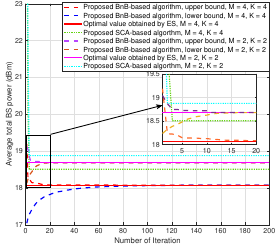}
    \caption{Convergence behavior for different values of $M$ and $K$ and perfect CSI. The system parameters are set as $\gamma=5$ dB, and $d=10$ mm.}\vspace{1mm}
    \label{fig:conver_perfect_CSI}
\end{figure}
Fig. \ref{fig:conver_perfect_CSI} illustrates the convergence of the proposed BnB-based and SCA-based algorithms for different numbers of MA elements $M$ and users $K$ assuming perfect CSI at the BS. Furthermore, we adopt an ES approach, serving as a benchmark, to verify the global optimality of the proposed BnB-based method. Specifically, the ES method solves the problem in \eqref{Reform_Problem} for all ${N}\choose{M}$ possible position combinations of $M$ MA elements to attain the global optimum.  Due to the high complexity of the ES method, we investigate its performance for a system with a relatively large step size $d=10$ mm. As shown in Fig. \ref{fig:conver_perfect_CSI}, the proposed BnB-based method converges to the optimal solution obtained with the ES method, which validates the global optimality of the proposed BnB-based method. Moreover, for the system with $4$ MA elements, the proposed BnB-based algorithm requires only an average of $70$ iterations to converge to the global optimum, which is significantly faster than the ES method that requires solving \eqref{Reform_Problem} ${N}\choose{M}$ $=$ ${13^2}\choose{4}$ times. On the other hand, the proposed SCA-based algorithm converges to a locally optimal value within 10 iterations, revealing an excellent trade-off between computational complexity and performance.
\begin{table*}[!htbp]
\color{black}
\caption{Normalized runtime comparison for different schemes with perfect CSI.}
\centering
\vspace{1mm}
\begin{tabular}{|l|llll|llll|}
\hline
\multicolumn{1}{|c|}{\multirow{2}{*}{Proposed method}} & \multicolumn{4}{l|}{Average normalized runtime}                                                     & \multicolumn{4}{l|}{Average number of iterations}                                                        \\ \cline{2-9} 
\multicolumn{1}{|c|}{}                                 & \multicolumn{1}{l|}{$l=1$} & \multicolumn{1}{l|}{$l=1.5$} & \multicolumn{1}{l|}{$l=2$} & \multicolumn{1}{l|}{$l=2.5$} & \multicolumn{1}{l|}{$l=1$} & \multicolumn{1}{l|}{$l=1.5$} & \multicolumn{1}{l|}{$l=2$} & \multicolumn{1}{l|}{$l=2.5$} \\ \hline
BnB-based method & \multicolumn{1}{l|}{0.43}    & \multicolumn{1}{l|}{1.18} & \multicolumn{1}{l|}{2.50} & \multicolumn{1}{l|}{4.92}  & \multicolumn{1}{l|}{17.9}    & \multicolumn{1}{l|}{37.4}    & \multicolumn{1}{l|}{69.2}     &   \multicolumn{1}{l|}{92.2}   \\ \hline
SCA-based method & \multicolumn{1}{l|}{0.06}    & \multicolumn{1}{l|}{0.14} & \multicolumn{1}{l|}{0.28} & \multicolumn{1}{l|}{0.47}    & \multicolumn{1}{l|}{2.7}    & \multicolumn{1}{l|}{3.0}    & \multicolumn{1}{l|}{3.5}     &   \multicolumn{1}{l|}{4.7}   \\ \hline
Baseline scheme 3  & \multicolumn{1}{l|}{0.15}    & \multicolumn{1}{l|}{0.46} & \multicolumn{1}{l|}{1} & \multicolumn{1}{l|}{2.01}     & \multicolumn{1}{l|}{6.2}    & \multicolumn{1}{l|}{13.7}    & \multicolumn{1}{l|}{21.2}     &  \multicolumn{1}{l|}{32.5}    \\ \hline
\end{tabular}\vspace{-5mm}
\color{black}
\end{table*}

{\color{black}Next, the normalized runtimes of the proposed algorithms, and the AO-based baseline scheme 3 are investigated for different normalized transmit area sizes $l$ for the case of perfect CSI. In particular, the convergence tolerances of the proposed GBD-based and SCA-based algorithms are set to $\Delta=10^{-2}$ and $\Delta_{\mathrm{SCA}}=10^{-2}$, respectively. On the other hand, the convergence criterion for baseline scheme 3 is given by $\frac{\|\bm{W}_{\mathrm{AO}}^{(i)}-\bm{W}_{\mathrm{AO}}^{(i-1)}\|}{\|\bm{W}_{\mathrm{AO}}^{(i-1)}\|_F}\leq 10^{-2}$. Here, we adopt the runtime of baseline scheme 3 with $l=2$ as the reference time for normalization. The algorithms are implemented in MATLAB R2018a and tested on a PC with a Core i9-13900K CPU. MOSEK was adopted as the CVX solver for time efficiency. As can be observed from Table I, the proposed SCA-based algorithm requires significantly less runtime compared to the BnB-based algorithm and AO-based baseline scheme. This confirms the computational time efficiency of the proposed SCA-based algorithm. This property is attributed to the fact that the proposed SCA-based algorithm needs only a few iterations to converge.}
\subsubsection{Average BS power versus minimum required SINR:} 
\begin{figure}
    \centering
    \includegraphics[width=0.85\linewidth]{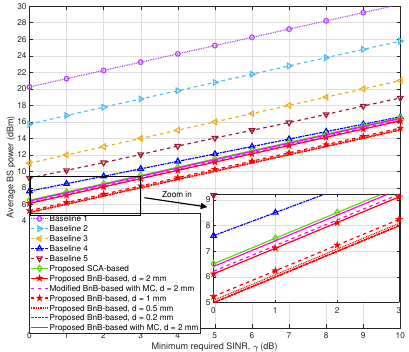}
    \caption{Average BS power consumption versus the minimum required SINR for perfect CSI. The parameters are set as $M=4$, $K=4$, and $l=2$.}\vspace{-2mm}
    \label{fig:SINR_perfectCSI}
\end{figure}
\begin{figure}
    \centering
         \includegraphics[width=2.4in]{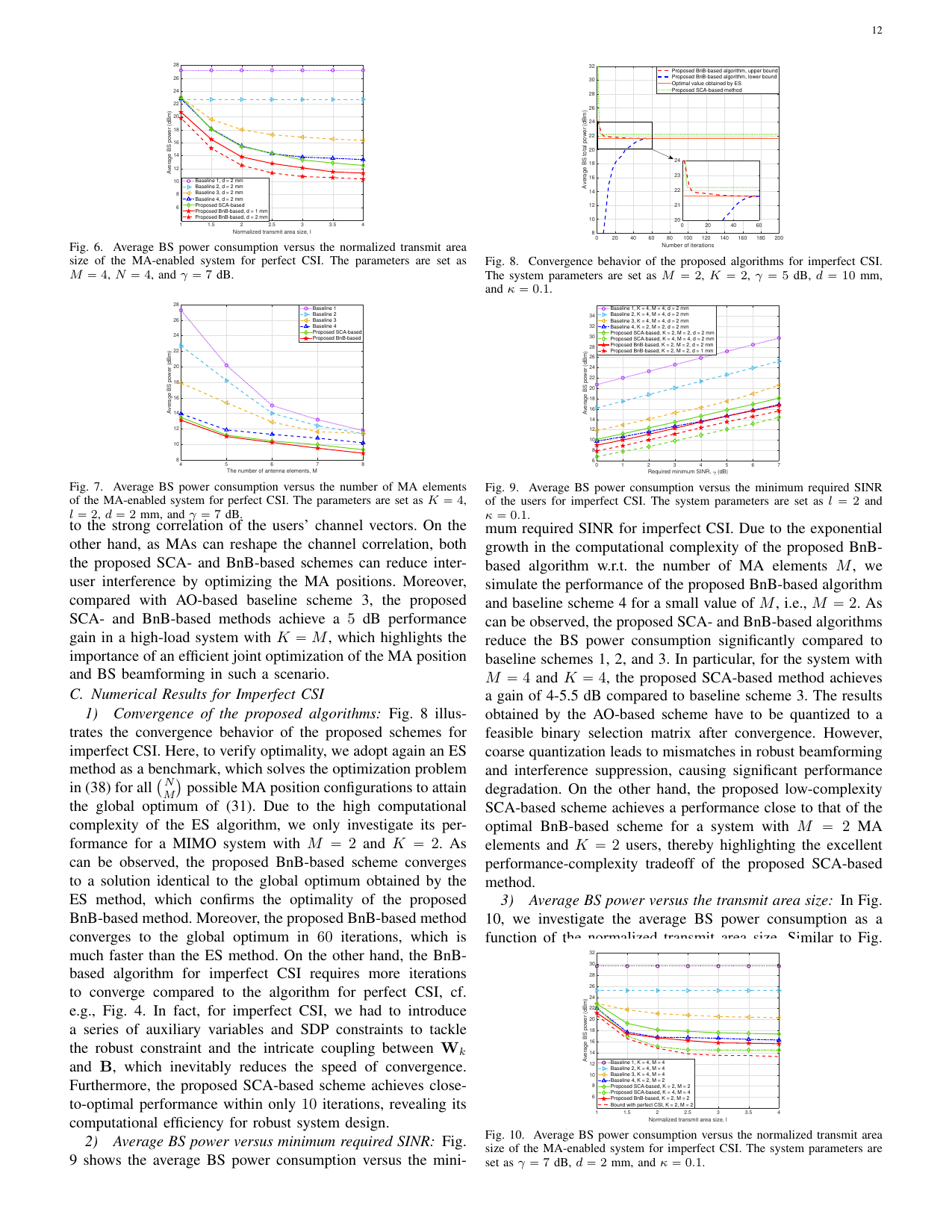}
    \caption{Average BS power consumption versus the normalized transmit area size of the MA-enabled system for perfect CSI. The parameters are set as $M=4$, $N=4$, and $\gamma=7$ dB.}\vspace{-2mm}
    \label{fig:Region_perfectCSI}
\end{figure}
Fig. \ref{fig:SINR_perfectCSI} shows the average BS power consumption versus the minimum required SINR $\gamma$ for perfect CSI. As can be observed, as the minimum required SINR increases, the BS must consume more power to satisfy the more stringent QoS requirements of the users. Moreover, both proposed schemes outperform all five baseline schemes for the entire range of $\gamma$. In particular, since for baseline schemes 1 and 2, the BS is equipped with an antenna array with fixed antenna positions, the spatial correlation of the transmit antenna array cannot be shaped in an optimal manner. 
{\color{black} Moreover, although the HBF-based baseline scheme 5 outperforms the AS-based baseline scheme 2 due to the extra DoFs provided by its fully-connected structure, the proposed BnB-based scheme achieves a $3$ dB power gain compared to baseline scheme 5. Additionally, the MA-enabled system requires fewer antenna elements and no phase shifters, while achieving a better performance, highlighting the efficiency of MAs compared to HBF.}
Furthermore, for baseline scheme 3, the adopted AO algorithm tends to search only a subset of the feasible region close to the initial point and is likely to get stuck in a spurious stationary point. Thus, the proposed BnB-based scheme achieves a considerable gain ($4$ dB) compared to the AO-based scheme. On the other hand, baseline scheme 4 does not account for the power consumption of the MA drivers, resulting in a $1-2$ dB performance gap compared to the optimal BnB-based scheme. In particular, the gap between baseline scheme 4 and the BnB-based scheme is reduced as the minimum required SINR increases. In fact, as the minimum required SINR increases, the transmit power consumption becomes more dominant in the overall power consumption and eventually outweighs the impact of the power consumed by the antenna movement. Therefore, baseline scheme 4 yields a solution closer to the global optimum one for very high required SINRs. Furthermore, the proposed SCA-based scheme achieves a performance close to that of the optimal BnB-based scheme. 
{\color{black}Moreover, to investigate the maximum performance of the considered discrete MA-enabled system, we further reduce the step size of the MA driver. In particular, we observe from Fig. 5 that reducing the step size of the MA drivers from $2$ mm to $1$ mm leads to a $1.1$ dB performance improvement. This is attributed to the enlarged feasible set for MA positioning due to the refinement of the step size. Then, further reducing the step size to $0.2$ mm produces only negligible additional gains, suggesting that discrete MA-enabled systems with $0.2$ mm step size can closely approach the performance of an ideal system with continuous MA movement.  These observations suggest that a 2 mm step size effectively balances BS power consumption and MA control precision in MA-enabled systems, hence this step size is adopted in the subsequent simulations. }

{\color{black} 
Moreover, we also investigate the impact of the mutual coupling (MC) between MA elements. In particular, according to the empirical model in \cite{chen2018Mutualcoupling,savy2016couplingeffect}, the MC coefficient $C_{i,i'}$ between MA elements $i$ and $i'$ depends on their distance $d_{i,i'}$, and is given by $C_{i,i'}(d_{i,i'})=e^{-\frac{2d_{i,i'}}{\lambda}(\alpha_{\mathrm{mc}}+j\pi)}$,
where $\alpha_{\mathrm{mc}}$ is a hardware parameter controlling the level of coupling. Next, we define MC matrix $\hat{\mathbf{C}}\in\mathbb{C}^{M\times M}$ to collect all MC coefficients $C_{i,i'}$ between all MA elements. Then, the received signal $y_k$ of user $k$ can be rewritten as $y_k=\hat{\mathbf{h}}_{k}^H\mathbf{B}\hat{\mathbf{C}}\mathbf{W}\mathbf{s}+n_k$.
To evaluate the impact of MC, we investigate the optimal performance of the MA-enabled system under MC. More details on optimal resource allocation design under MC are provided in the Appendix. Note that this optimal resource allocation design provides a performance upper bound for MA-enabled systems with MC. Moreover, we also investigate the performance of the proposed algorithm under MC.
In this case, the binary selection matrix $\mathbf{B}$ is designed based on the ideal assumption of no MC by the proposed BnB-based \textbf{Algorithm 2}. Then, to satisfy the minimum SINR requirement with MC, we design the beamforming matrix $\mathbf{W}$ considering the MC effect by optimally solving \eqref{Ori_Problem} for the attained $\mathbf{B}$ and a modified SINR, i.e., $\mathrm{SINR}_k=\frac{|\hat{\mathbf{h}}_{k}^H\mathbf{B}\hat{\mathbf{C}}\mathbf{w}_k|^2}{\sum_{k'\in\mathcal{K}\setminus{k}}|\hat{\mathbf{h}}_{k}^H\mathbf{B}\hat{\mathbf{C}}\mathbf{w}_{k'}|^2+\sigma_k^2}$. 

For our simulations, we adopt a typical value for the MC coefficient $|C_{i,i'}(d_{i,i'}=\lambda)|=0.2$ \cite{liao2012adaptive}. The corresponding value of $\alpha_{\mathrm{mc}}$ is given by $\alpha_{\mathrm{mc}}=-\ln(|C_{i,i'}(d_{i,i'}=\lambda)|)/2\approx0.75$.  
For MC, the average BS power consumption for optimal resource allocation (as outlined in the Appendix) and the proposed BnB-based design (which does not account for MC) is illustrated in Fig. 5. As can be observed, the proposed BnB-based scheme requires a negligible $0.3$ dB additional BS power to compensate the MC effect, compared with the proposed BnB-based scheme without MC. In fact, even without considering the MC effect, the proposed BnB-based algorithm typically positions the MA elements at a comparatively large distance to avoid generation of a rank-deficient channel matrix, leading to low MC. On the other hand, if MC is present, the proposed BnB-based design (which ignores MC) achieves a performance approaching the optimal performance bound achieved by the optimal resource allocation design accounting for MC. However, the optimal resource allocation design accounting for MC introduces a series of additional auxiliary variables, increasing computational complexity. Since MC has a negligible impact on the performance of the MA-enabled system, we ignore MC for algorithm design and in the remaining simulation results. }

\subsubsection{Average BS power versus transmit area size:}
Fig. \ref{fig:Region_perfectCSI} depicts the average BS power consumption for different normalized transmit area sizes for perfect CSI. It is observed that the BS power consumption for baseline schemes 3 and 4 and the proposed schemes decreases as the normalized transmit area size increases. This can be attributed to the fact that a larger transmit area provides more flexibility for positioning the MA elements to adaptively shape the desired spatial correlation, leading to a performance gain. On the other hand, baseline schemes 1 and 2, with their fixed antenna positions, cannot benefit from a larger transmit area, indicating that baseline schemes 1 and 2 cannot fully utilize the spatial DoFs provided by larger transmit areas. Moreover, as can be observed, both the proposed SCA-based and BnB-based schemes outperform all four baseline schemes. In particular, for larger transmit areas, the gap between baseline scheme 4 and the proposed BnB-based optimal design is enlarged to $1.5$ dB. More specifically, increasing the transmit area size leads to a higher power consumption for MA motion. Thus, the impact of the MA motion power on the total power consumption grows and the performance gain of the proposed BnB-based scheme compared to baseline scheme 4 increases.
Furthermore, the performance of the proposed schemes saturates when the normalized transmit area size reaches $l=3$ due to the speed limitation of the MA drivers, which limits the adaptability within one frame. 
\subsubsection{Average BS power and energy efficiency versus the antenna movement time:}
\begin{figure*}
    \centering
    \begin{minipage}{0.31\linewidth}
        \centering
       \includegraphics[width=2.0in]{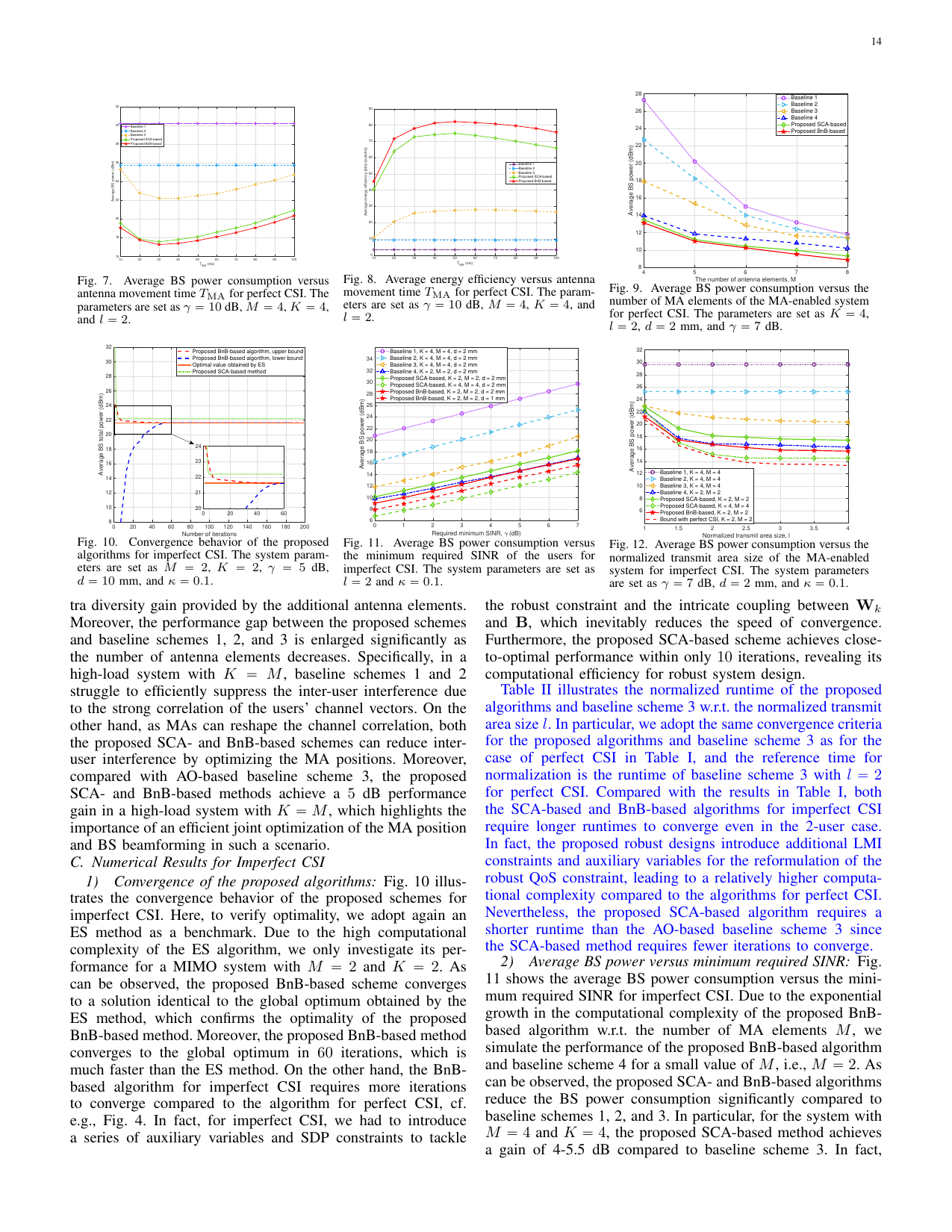}
    \caption{Average BS power consumption versus antenna movement time $T_{\mathrm{MA}}$ for perfect CSI. The parameters are set as $\gamma=10$ dB, $M=4$, $K=4$, and $l=2$.}\vspace{-3mm}
    \label{fig:TMA_perfectCSI}
    \end{minipage}\hspace{2mm}
        \begin{minipage}{0.31\linewidth}
               \includegraphics[width=2.1in]{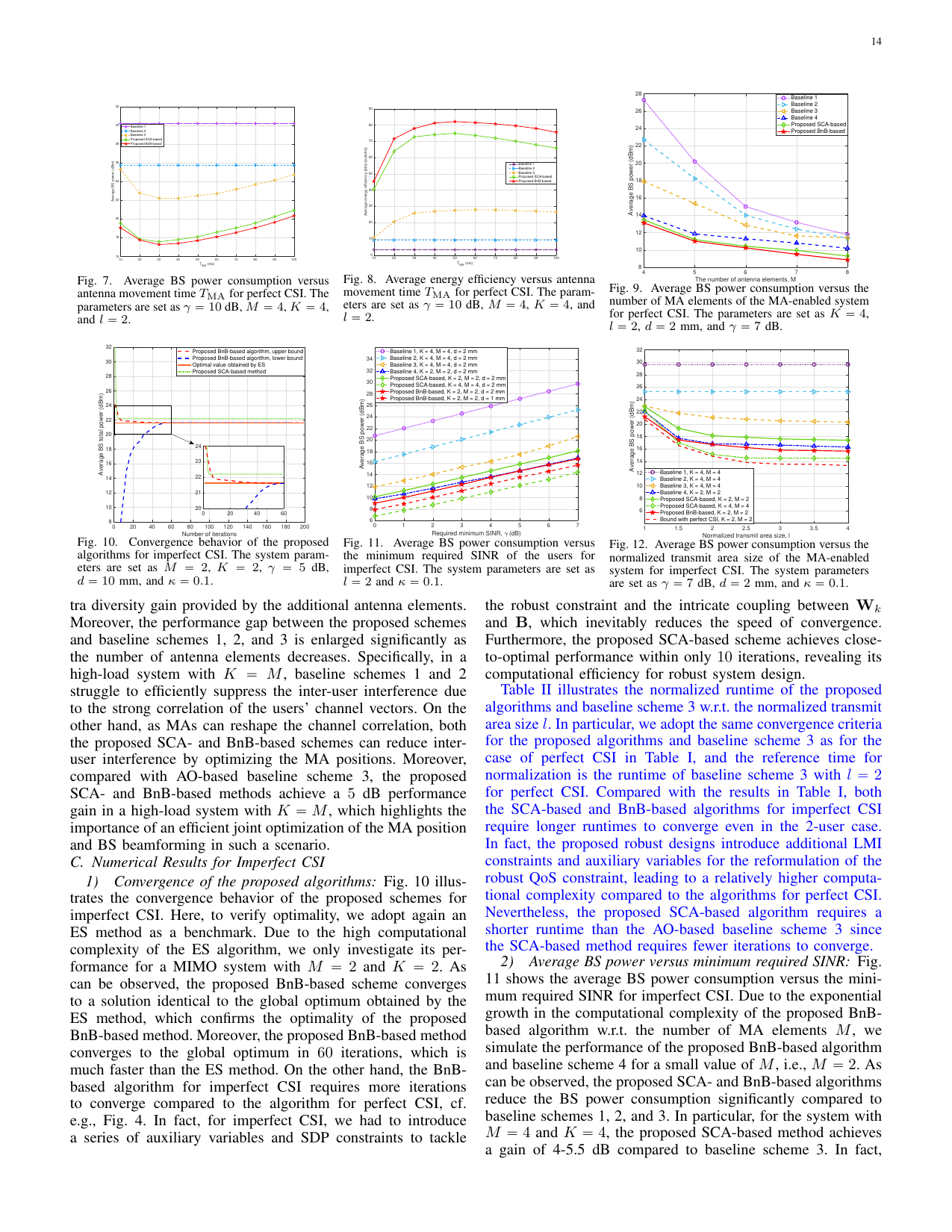}
    \caption{Average energy efficiency versus antenna movement time $T_{\mathrm{MA}}$ for perfect CSI. The parameters are set as $\gamma=10$ dB, $M=4$, $K=4$, and $l=2$.}\vspace{-3mm}
    \label{fig:EE_perfectCSI}
    \end{minipage}\hspace{2mm}
            \begin{minipage}{0.32\linewidth}\vspace{1.5mm}
\centering
         \includegraphics[width=2.0in]{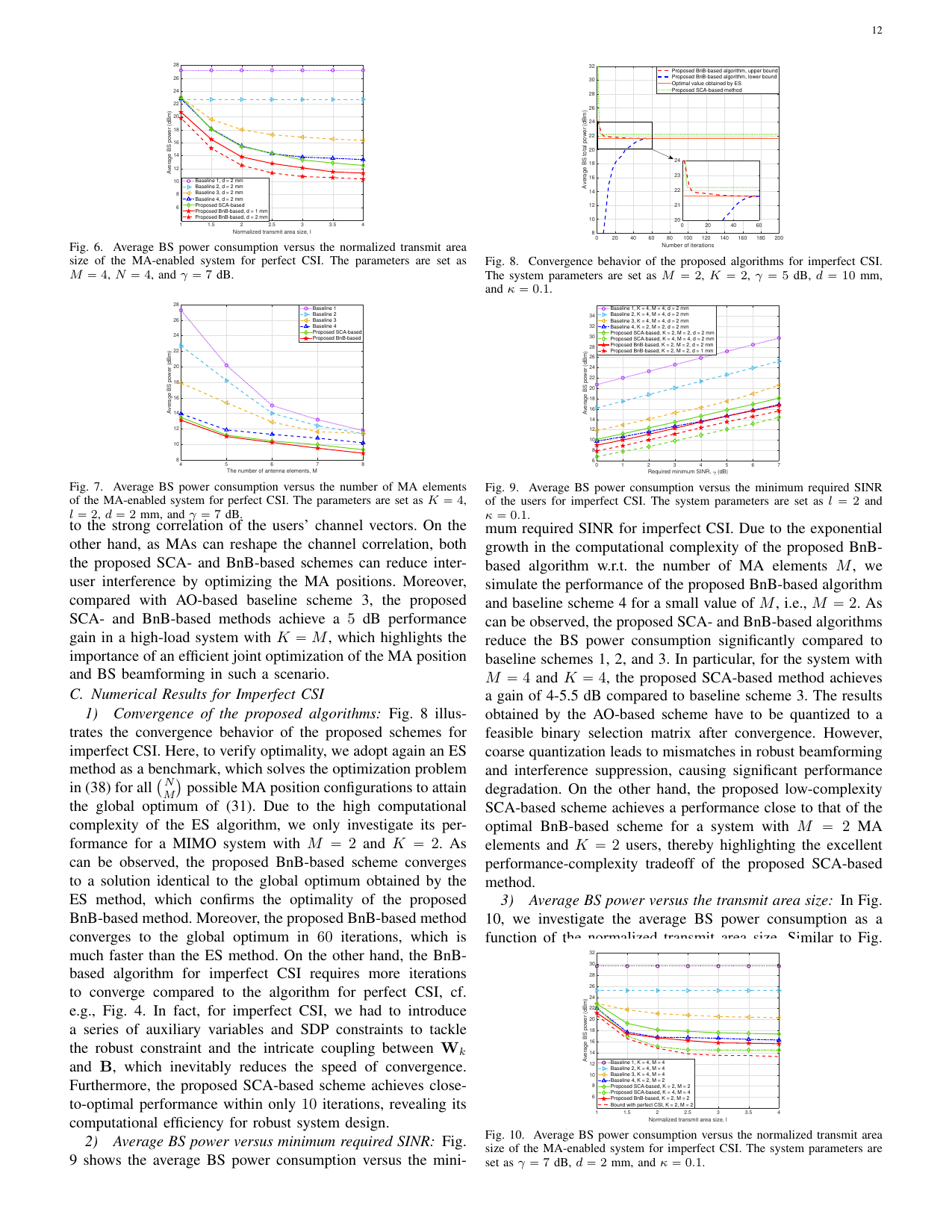}
    \caption{Average BS power consumption versus the number of MA elements of the MA-enabled system for perfect CSI. The parameters are set as $K=4$, $l=2$, $d=2$ mm, and $\gamma=7$ dB.}\vspace{-3mm}
    \label{fig:M_perfectCSI}
    \end{minipage}
    \label{fig:enter-label}
\end{figure*}
{\color{black} Fig. 7 illustrates the BS power consumption as a function of the required antenna movement time $T_{\mathrm{MA}}$ for perfect CSI. Since the MA-enabled system allocates $T_{\mathrm{MA}}$ ms for MA movement at the beginning of each transmission block, the achievable rate of the MA-enabled systems is given by $\frac{T_{\mathbf{Data}}}{T_{\mathbf{Data}}+T_{\mathbf{MA}}}\sum_{k\in\mathcal{K}}\log(1+\gamma)$, which is lower than that of baseline schemes 1 and 2. To fairly evaluate the power efficiency of MA-enabled systems, we set the minimum required SINR of all users in the proposed schemes and baseline scheme 3 as $(\gamma+1)^{\frac{T_{\mathbf{Data}}+T_{\mathbf{MA}}}{T_{\mathbf{Data}}}}-1$ to compensate the rate loss. As can be observed, despite the rate loss, MA-enabled systems still achieve a significant performance gain compared to baseline schemes 1 and 2 with fixed position antennas. Moreover, for small antenna movement times, e.g., $T_{\mathrm{MA}}\leq 30$ ms, increasing $T_{\mathrm{MA}}$ enhances the performance of the proposed BnB-based and SCA-based schemes as well as that of baseline scheme 3. Specifically, a larger antenna movement time expands the feasible region of the positions of the MA elements, rendering the proposed schemes and baseline scheme 3 more likely to find better positions that further reduce BS power consumption. Moreover, for $T_{\mathrm{MA}}>30$ ms, the BS power consumption of the proposed schemes and baseline scheme 3 is enlarged by increasing $T_{\mathrm{MA}}$. This is attributed to the fact that increasing the antenna movement time reduces the time available for data transmission, inevitably leading to system data rate degradation. Thus, the MA-enabled systems have to invest more power to compensate for this rate degradation. This observation highlights a non-trivial trade-off between spectral efficiency and energy efficiency, which has to be carefully considered in practical applications.}

{\color{black}In Fig. 8, we investigate the energy efficiency of different schemes for various antenna movement times under perfect CSI. To ensure a fair comparison, we define the energy efficiency of the fixed-position antenna and the MA-enabled systems as $\frac{\sum_{k\in\mathcal{K}}\log_2(1+\gamma_k)}{\sum_{k\in\mathcal{K}}\|\mathbf{w}_k\|_2^2}$  bits/Joule/Hz and $\frac{T_{\mathrm{Data}}\sum_{k\in\mathcal{K}}\log_2(1+\gamma_k)}{\bar{P}(\mathbf{W},\mathbf{B})(T_{\mathrm{MA}}+T_{\mathrm{Data}})}$ bits/Joule/Hz, respectively, where $\gamma_k$ denotes the minimum required SINR of user $k$. As can be observed, the proposed BnB-based and SCA-based schemes, along with baseline scheme 3, achieve significantly higher energy efficiencies compared to baseline schemes 1 and 2. Additionally, increasing the antenna movement time from $10$ ms to $40$ ms improves the energy efficiency of the proposed schemes and baseline scheme 3, albeit at the expense of a reduced data rate. However, for $T_{\mathrm{MA}}>40$ ms, further increasing $T_{\mathrm{MA}}$ does not yield any additional energy efficiency gains. Specifically, for $T_{\mathrm{MA}}=40$ ms, the feasible region of the positions of the MA elements covers the entire transmit area, and cannot be further enlarged by increasing $T_{\mathrm{MA}}$. Thus, by increasing $T_{\mathrm{MA}}$ further, no reduction of the BS power consumption can be realized while the achievable data rate is decreasing, leading to a lower achieved energy efficiency.
This observation suggests that a moderate antenna movement time, e.g., $20$ ms $<T_{\mathrm{MA}}<40$ ms, is beneficial in practical applications.}
\subsubsection{Average BS power versus the number of MA elements:}
We investigate the BS power consumption as a function of the number of MA elements $M$ in Fig. \ref{fig:M_perfectCSI}. As can be observed, the system performance of all schemes improves significantly by adding more antenna elements. This can be attributed to the extra diversity gain provided by the additional antenna elements. Moreover, the performance gap between the proposed schemes and baseline schemes 1, 2, and 3 is enlarged significantly as the number of antenna elements decreases.  Specifically, in a high-load system with $K=M$, baseline schemes 1 and 2 struggle to efficiently suppress the inter-user interference due to the strong correlation of the users' channel vectors.  On the other hand, as MAs can reshape the channel correlation, both the proposed SCA- and BnB-based schemes can reduce inter-user interference by optimizing the MA positions. Moreover, compared with AO-based baseline scheme 3, the proposed SCA- and BnB-based methods achieve a $5$ dB performance gain in a high-load system with $K=M$, which highlights the importance of an efficient joint optimization of the MA position and BS beamforming in such a scenario.
\subsection{Numerical Results for Imperfect CSI}
\begin{figure*}
    \centering
    \begin{minipage}{0.31\linewidth}
        \centering
        \includegraphics[width=1.9in]{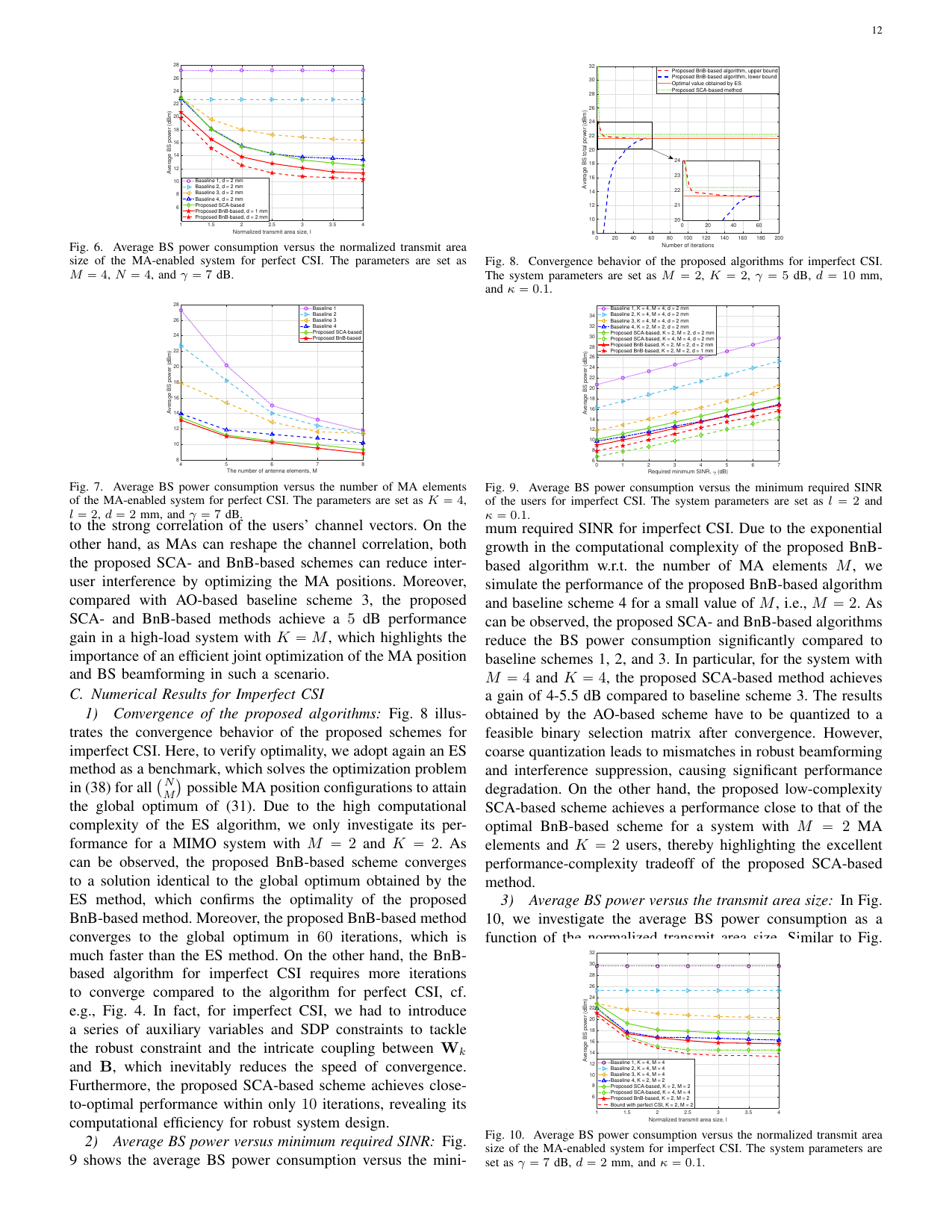}\vspace{-0mm}
    \caption{Convergence behavior of the proposed algorithms for imperfect CSI. The system parameters are set as $M=2$, $K=2$, $\gamma=5$ dB, $d=10$ mm, and $\kappa = 0.1$.}\vspace{-6mm}
    \label{fig:conver_imperfect_CSI}
    \end{minipage}\hspace{2mm}
        \begin{minipage}{0.31\linewidth}
        \centering
         \includegraphics[width=2.0in]{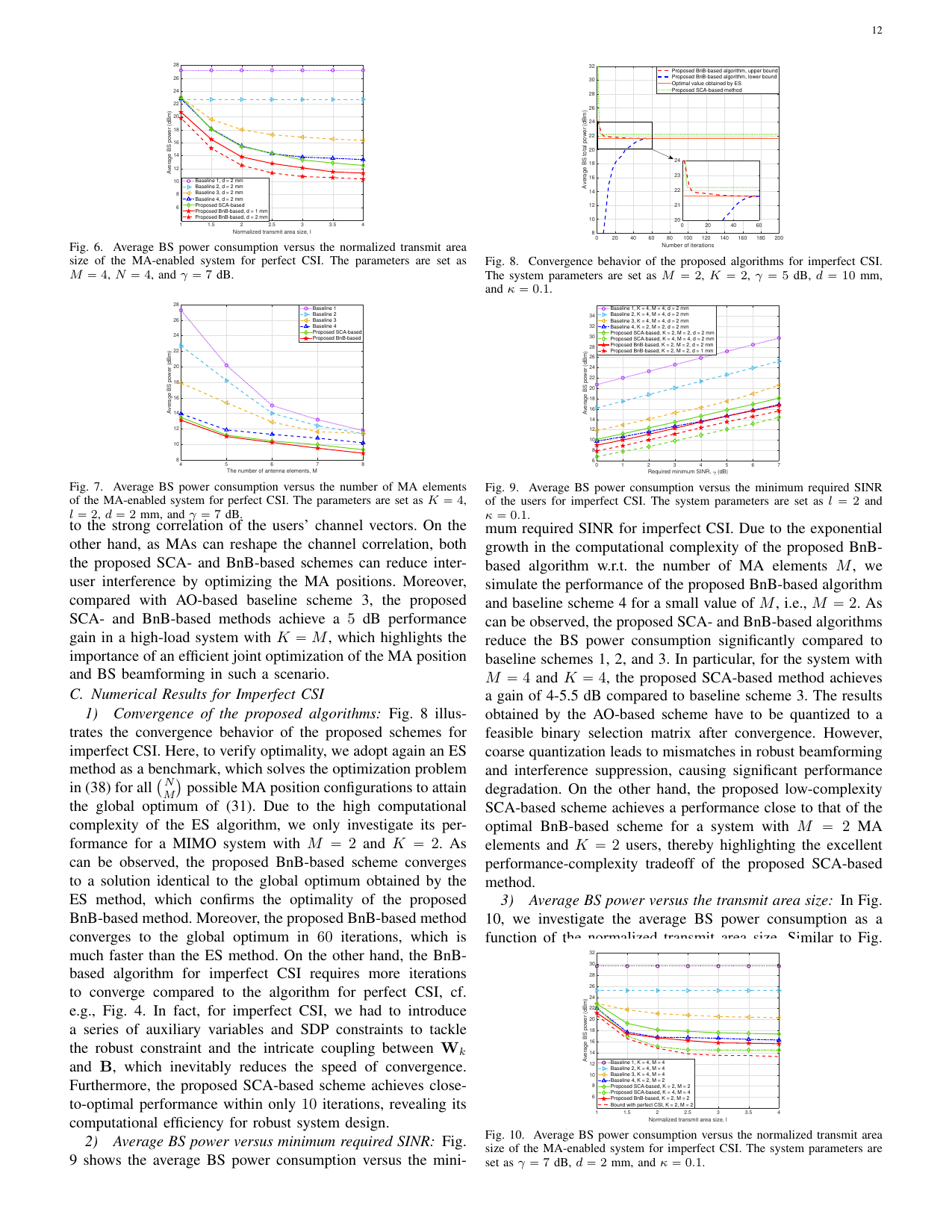}
    \caption{Average BS power consumption versus the minimum required SINR of the users for imperfect CSI. The system parameters are set as $l=2$ and $\kappa = 0.1$.}\vspace{-6mm}
    \label{fig:SINR_imperfectCSI}
    \end{minipage}\hspace{2mm}
            \begin{minipage}{0.32\linewidth}\vspace{1.5mm}
        \centering
         \includegraphics[width=1.99in]{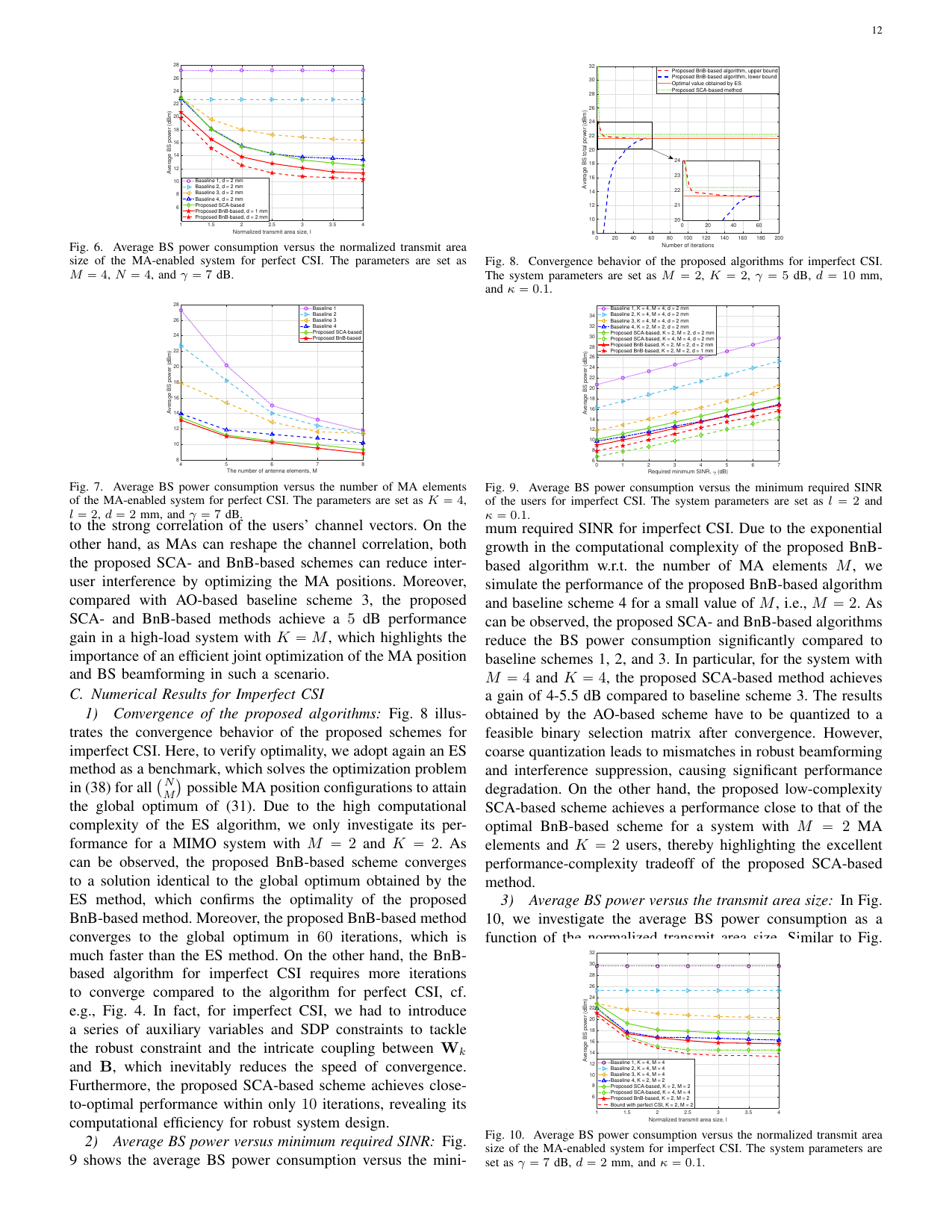}
    \caption{Average BS power consumption versus the normalized transmit area size of the MA-enabled system for imperfect CSI. The system parameters are set as $\gamma = 7$ dB, $d=2$ mm, and $\kappa=0.1$.}\vspace{-6mm}
    \label{fig:Region_imperfectCSI}
    \end{minipage}
    \label{fig:enter-label}
\end{figure*}
\subsubsection{Convergence of the proposed algorithms:}
Fig. \ref{fig:conver_imperfect_CSI} illustrates the convergence behavior of the proposed schemes for imperfect CSI. Here, to verify optimality, we adopt again an ES method as a benchmark. Due to the high computational complexity of the ES algorithm, we only investigate its performance for a MIMO system with $M=2$ and $K=2$. As can be observed, the proposed BnB-based scheme converges to a solution identical to the global optimum obtained by the ES method, which confirms the optimality of the proposed BnB-based method. Moreover, the proposed BnB-based method converges to the global optimum in $60$ iterations, which is much faster than the ES method. On the other hand, the BnB-based algorithm for imperfect CSI requires more iterations to converge compared to the algorithm for perfect CSI, cf. e.g., Fig. \ref{fig:conver_perfect_CSI}. In fact, for imperfect CSI, we had to introduce a series of auxiliary variables and SDP constraints to tackle the robust constraint and the intricate coupling between $\mathbf{W}_k$ and $\mathbf{B}$, which inevitably reduces the speed of convergence. Furthermore, the proposed SCA-based scheme achieves close-to-optimal performance within only $10$ iterations, revealing its computational efficiency for robust system design.

\begin{table*}[!htbp]
\color{black}
\caption{Normalized runtime comparison for different schemes with imperfect CSI.}
\centering
\vspace{1mm}
\begin{tabular}{|l|llll|llll|}
\hline
\multicolumn{1}{|c|}{\multirow{2}{*}{Proposed method}} & \multicolumn{4}{l|}{Average normalized runtime}                                                     & \multicolumn{4}{l|}{Average number of iterations}                                                        \\ \cline{2-9} 
\multicolumn{1}{|c|}{}                                 & \multicolumn{1}{l|}{$l=1$} & \multicolumn{1}{l|}{$l=1.5$} & \multicolumn{1}{l|}{$l=2$} & \multicolumn{1}{l|}{$l=2.5$} & \multicolumn{1}{l|}{$l=1$} & \multicolumn{1}{l|}{$l=1.5$} & \multicolumn{1}{l|}{$l=2$} & \multicolumn{1}{l|}{$l=2.5$} \\ \hline
BnB-based method & \multicolumn{1}{l|}{2.04}    & \multicolumn{1}{l|}{6.03} & \multicolumn{1}{l|}{12.70} & \multicolumn{1}{l|}{20.31}  & \multicolumn{1}{l|}{42.1}    & \multicolumn{1}{l|}{98.4}    & \multicolumn{1}{l|}{229.7}     &   \multicolumn{1}{l|}{402.3}   \\ \hline
SCA-based method & \multicolumn{1}{l|}{0.13}    & \multicolumn{1}{l|}{0.28} & \multicolumn{1}{l|}{0.61} & \multicolumn{1}{l|}{0.84}    & \multicolumn{1}{l|}{2.9}    & \multicolumn{1}{l|}{3.5}    & \multicolumn{1}{l|}{4.1}     &   \multicolumn{1}{l|}{5.2}   \\ \hline
Baseline scheme 3  & \multicolumn{1}{l|}{0.14}    & \multicolumn{1}{l|}{0.36} & \multicolumn{1}{l|}{0.85} & \multicolumn{1}{l|}{2.42}     & \multicolumn{1}{l|}{6.5}    & \multicolumn{1}{l|}{9.7}    & \multicolumn{1}{l|}{15.7}     &  \multicolumn{1}{l|}{24.3}    \\ \hline
\end{tabular}\vspace{-5mm}
\color{black}
\end{table*}

{\color{black}Table \rom{2} illustrates the normalized runtime of the proposed algorithms and baseline scheme 3 w.r.t. the normalized transmit area size $l$. In particular, we adopt the same convergence criteria for the proposed algorithms and baseline scheme 3 as for the case of perfect CSI in Table \rom{1}, and the reference time for normalization is the runtime of baseline scheme 3 with $l=2$ for perfect CSI. Compared with the results in Table \rom{1}, both the SCA-based and BnB-based algorithms for imperfect CSI require longer runtimes to converge even in the 2-user case. In fact, the proposed robust designs introduce additional LMI constraints and auxiliary variables for the reformulation of the robust QoS constraint, leading to a relatively higher computational complexity compared to the algorithms for perfect CSI. Nevertheless, the proposed SCA-based algorithm requires a shorter runtime than the AO-based baseline scheme 3 since the SCA-based method requires fewer iterations to converge.}
\subsubsection{Average BS power versus minimum required SINR:} 
Fig. \ref{fig:SINR_imperfectCSI} shows the average BS power consumption versus the minimum required SINR for imperfect CSI. Due to the exponential growth in the computational complexity of the proposed BnB-based algorithm w.r.t. the number of MA elements $M$, we simulate the performance of the proposed BnB-based algorithm and baseline scheme 4 for a small value of $M$, i.e., $M=2$. As can be observed, the proposed SCA- and BnB-based algorithms reduce the BS power consumption significantly compared to baseline schemes 1, 2, and 3. In particular, for the system with $M=4$ and $K=4$, the proposed SCA-based method achieves a gain of 4-5.5 dB compared to baseline scheme 3. In fact, the results obtained by the AO-based scheme are quantized to a feasible binary selection matrix after convergence. However, coarse quantization leads to mismatches in robust beamforming and interference suppression, causing significant performance degradation. 
On the other hand, the proposed low-complexity SCA-based scheme achieves a performance close to that of the optimal BnB-based scheme for a system with $M=2$ MA elements and $K=2$ users, thereby highlighting the excellent performance-complexity tradeoff of the SCA-based method.

\subsubsection{Average BS power versus the transmit area size:}
In Fig. \ref{fig:Region_imperfectCSI}, we investigate the average BS power consumption as a function of the normalized transmit area size. 
As can be observed, the performance gap between the proposed SCA- and BnB-based schemes widens as the transmit area size increases. 
In fact, increasing the size of the transmit area introduces more binary decision variables that are required to be relaxed in the SCA-based method, which makes the SCA-based method more likely to converge to a local optimum.
Thus, the proposed SCA-based scheme becomes more prone to converging to a local optimum with suboptimal performance. 
Moreover, as can be observed in Fig. \ref{fig:Region_imperfectCSI} the gap between the optimal performances for perfect and imperfect CSI also increases as the size of the transmit region expands. In particular, a larger transmit area size increases the size of the effective channel vectors $\hat{\mathbf{h}}_k$, leading to an enlarged feasible region for the CSI uncertainty. Therefore, the BS has to allocate more transmit power to compensate the larger CSI estimation errors as the size of the transmit region increases.

\subsubsection{CSI uncertainty:}
\begin{figure}\vspace{-0.0cm}
    \centering
    \includegraphics[width=2.4in]{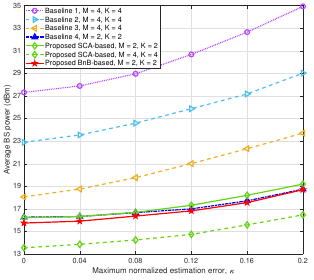}
    \caption{Average BS power versus the maximum channel estimation error. The system parameters are set as $\gamma = 7$ dB and $l=2$.}\vspace{-2mm}
    \label{fig:varcsi_imperfectCSI}
\end{figure}
{\color{black} Fig. \ref{fig:varcsi_imperfectCSI} shows the average BS power consumption versus the maximum normalized channel estimation error $\kappa$. It is observed that as the CSI estimation quality deteriorates, the average BS power consumption increases. In particular, as the accuracy of the estimated CSI decreases, it becomes more challenging for the BS to perform accurate beamforming while satisfying the QoS criteria. As such, a higher transmit power is required to guarantee the robustness of the communication system under imperfect CSI. 
Moreover, as $\kappa$ increases, the gain of the proposed SCA-based scheme compared to baseline schemes 1, 2, and 3 grows, highlighting the superiority of the proposed scheme under severe CSI uncertainty. In addition, the performance achieved by the low-complexity SCA-based scheme is close to the optimal performance achieved by the proposed BnB-based scheme in the $2$-user system. This confirms the efficiency of the proposed SCA-based scheme in minimizing the transmit power in the presence of high CSI uncertainty. }
\section{Conclusion}
In this paper, we investigated for the first time the globally optimal resource allocation design for a multiuser MA-enabled downlink MISO communication system for both perfect and imperfect CSI. Due to practical hardware limitations, we modeled the motion and energy consumption of the MA in a discrete manner. Then, for both perfect and imperfect CSI, we formulated resource allocation problems for minimization of the total BS power consumption, comprising the radiated and MA motion powers, while guaranteeing a minimum required SINR for each user. To tackle the coupling between the MA positions and the beamforming vectors, we leveraged several transformations and auxiliary variables to recast the formulated optimization problems as tractable MINLP problems. To fully reveal the potential of MA-enabled systems, we proposed BnB-based algorithms for the jointly optimal design of the MA positions and BS beamforming for perfect and imperfect CSI, respectively. Moreover, to reduce computational complexity, we leveraged the SCA technique to obtain locally optimal solutions with polynomial time complexity.
Our simulation results verified the global optimality of the proposed BnB-based algorithms and demonstrated the effectiveness of SCA-based algorithms. Moreover, we highlighted the robustness of the proposed BnB-based and SCA-based algorithms to imperfect CSI, ensuring high performance even in the presence of uncertain channel conditions. Furthermore, for both perfect and imperfect CSI, the proposed SCA-based algorithms attained high performance, offering significant advantages for real-time implementation. Additionally, we showed that the proposed MA-enabled systems reduce the BS power consumption significantly compared to conventional fixed-antenna systems, especially in high-load scenarios with a limited number of antenna elements.
\appendix[Optimal Resource Allocation Design Under Mutual Coupling]
{\color{black}To facilitate the optimal resource allocation design under MC, we define matrix $\mathbf{C}=e^{-\frac{2\mathbf{D}}{\lambda}(\alpha_{\mathrm{mc}}+j\pi)}$, where $\mathbf{D}$ is the distance matrix defined in Section \rom{3}. Here, element $D_{n,n'}$ in the $n$-th row and $n'$-th column of $\mathbf{D}$ denotes the distance between the $n$-th and the $n'$-th candidate positions in $\mathcal{P}$. Then, MC coefficient $C_{i,i'}$ can be recast as $\mathbf{b}_i^T\mathbf{C}\mathbf{b}_{i'}$, where $\mathbf{b}_i$ denotes the binary selection vector of MA element $i$. Then, we define MC matrix $\hat{\mathbf{C}}\in\mathbb{C}^{M\times M}$, which collects all MC coefficients $C_{i,i'}$ between all MA elements, i.e., $\hat{\mathbf{C}}=\mathbf{B}^T\mathbf{C}\mathbf{B}$. Taking into account the MC effect, the received signal $y_k$ of user $k$ can be rewritten as 
\begin{equation}
    y_k=\hat{\mathbf{h}}_{k}^H\mathbf{B}\hat{\mathbf{C}}\mathbf{W}\mathbf{s}+n_k=\hat{\mathbf{h}}_{k}^H\mathbf{B}\mathbf{B}^T\mathbf{C}\mathbf{B}\mathbf{W}\mathbf{s}+n_k.
\end{equation}
Then, we introduce a new auxiliary variable $\mathbf{X}_{\mathrm{MC}}\in\mathbb{C}^{N\times K}$, which is given by
\begin{equation}\label{EQ_MC}
\hspace{-4mm}\mathbf{X}_{\mathrm{MC}}=\mathbf{B}\mathbf{B}^T\mathbf{C}\mathbf{X}=\sum_{m}\mathbf{b}_m\mathbf{b}_m^T\mathbf{C}\mathbf{X}\overset{(a)}{=}\mathrm{diag}(\sum_{m}\mathbf{b}_m)\mathbf{C}\mathbf{X},
\end{equation}
where equality (a) is attributed to the properties of binary selection vector $\mathbf{b}_m$. Then, with MC, the received signal of user $k$ can be recast as 
\begin{equation}
y_k=\hat{\mathbf{h}}_{k}^H\mathbf{X}_{\mathrm{MC}}\mathbf{s}+n_k.
\end{equation}
To tackle the nonconvexity of the equality constraint in \eqref{EQ_MC}, we exploit Lemma 1 and reformulate \eqref{EQ_MC} in terms of the following LMI constraints:
\begin{eqnarray}
\mathrm{C6c}\mathrm{:}\hspace*{1mm}
   \begin{bmatrix}
        \mathbf{U}_{\mathrm{MC}} & \hspace{-2mm}\mathbf{X}_{\mathrm{MC}} & \mathrm{diag}\{\sum_{m\in\mathcal{M}}\mathbf{b}_m\}\\
        \mathbf{X}_{\mathrm{MC}}^H & \hspace{-2mm}\mathbf{V}_{\mathrm{MC}} & \mathbf{X}^H\mathbf{C}^H\\
        \mathrm{diag}\{\sum_{m\in\mathcal{M}}\mathbf{b}_m\}^H &\hspace{-2mm} \mathbf{C}\mathbf{X} & \mathbf{I}_M
    \end{bmatrix}&\succeq \mathbf{0},\notag\\
\mathrm{C6d}\mathrm{:}\hspace*{1mm}\label{DC}\mathrm{Tr}\big(\mathbf{U}_{\mathrm{MC}}\big)-M\leq 0,\vspace*{-2mm}\notag
\end{eqnarray}
where $\mathbf{U}_{\mathrm{MC}}\in\mathbb{C}^{N\times N}$ and $\mathbf{V}_{\mathrm{MC}}\in\mathbb{C}^{K\times K}$ are two auxiliary optimization variables with $\mathbf{U}_{\mathrm{MC}}\succeq \mathbf{0}$ and $\mathbf{V}_{\mathrm{MC}}\succeq \mathbf{0}$. Thus, we formulate the resource allocation problem taking into account MC under perfect CSI conditions as follows
\begin{eqnarray}
\label{Reform_Problem_MC_paper}
    &&\hspace*{-4mm}\underset{\mathbf{X},\mathbf{X}_{\mathrm{MC}},\mathbf{W},\mathbf{B},\mathbf{U},\mathbf{V},\mathbf{U}_{\mathrm{MC}},\mathbf{V}_{\mathrm{MC}}}{\mino}\hspace*{2mm}\widebar{P}(\mathbf{W},\mathbf{B})\\[-3pt]
    &&\hspace*{0mm}\mbox{s.t.}\hspace*{12mm} \mbox{C1a},\mbox{C1b}, \overline{\mbox{C2}},{\mbox{C3}},\mbox{C4},\mbox{C5},\mbox{C6a},\mbox{C6b},\mbox{C6c},\mbox{C6d}\notag,
\end{eqnarray}
where $\mathbf{x}_k$ in constraints C1a and C1b is replaced by the $k$-th column of $\mathbf{X}_{\mathrm{MC}}$, i.e., $\mathbf{x}_{\mathrm{MC},k}$. Note that the resource allocation problem in \eqref{Reform_Problem_MC_paper} is an MINLP problem, which is convex for fixed $\mathbf{B}$ and thus can be optimally solved by the proposed BnB-based \textbf{Algorithm 2}.
}
\bibliographystyle{IEEEtran}
\bibliography{reference}

\end{document}